% change m*() to O and m(c) to $m^-1$
% change most_m to cost_p
\documentclass[11pt, one column]{article}
\usepackage[table,xcdraw]{xcolor}
\usepackage[margin=1in]{geometry}
\usepackage[subpreambles=true,mode=buildnew]{standalone}
\usepackage{import}
\usepackage{blindtext}
\usepackage[utf8]{inputenc}
\usepackage{amsmath}
\usepackage{amsfonts}
\usepackage{amssymb}
\usepackage{mathrsfs}
\usepackage{amsthm}
\usepackage[english]{babel}
\usepackage{tikz}
\usepackage{mathtools}
\usepackage{algpseudocode}
\usepackage{algorithm}
\usepackage{float}
\usepackage{blindtext}
\usepackage{hyperref}
\usepackage{xcolor}
\hypersetup{
    colorlinks,
    linkcolor={red!50!black},
    citecolor={blue!50!black},
    urlcolor={blue!80!black}
}
\usepackage{mathtools}
\usepackage{todonotes}
\usepackage{subcaption}
\usepackage{caption}
\usepackage{csquotes}
\usepackage{authblk}
\usepackage{graphicx}
\usepackage[square,sort&compress]{natbib}

\bibliographystyle{abbrvnat}

\allowdisplaybreaks[1]

\newcommand{\cost}{\operatorname{cost}}
\newcommand{\Pos}{\mathcal{P}}

\newcommand{\deltaitr}{\delta_{itr}}

\newcommand{\lineup}[2]{($#1$,$#2$)-line-up election}
\newcommand{\lineups}[2]{($#1$,$#2$)-line-up elections}

\newcommand{\cand}{\operatorname{cand}}

\newcommand\inv[1]{#1\raisebox{1.15ex}{$\scriptscriptstyle-\!1$}}

\newcommand{\C}{\mathcal{C}}
\newcommand{\V}{\mathcal{V}}
\newcommand{\R}{\mathbb{R}}

\theoremstyle{plain}
\newtheorem{thm}{Theorem}[section]
\newtheorem{lem}[thm]{Lemma}

\newtheorem{cor}[thm]{Corollary}

\theoremstyle{definition}
\newtheorem{defn}{Definition}[section]

\theoremstyle{remark}

\DeclarePairedDelimiter\abs{\lvert}{\rvert}%

\title{Metric Distortion of Line-up Elections: \\ The Right Person for the Right Job}
\author{Christopher Jerrett}
\author{Yue Han}
\author{Elliot Anshelevich}

\affil{Department of Computer Science, Rensselaer Polytechnic Institute}
\affil{\{jerrec, hany4\}@rpi.edu, eanshel@cs.rpi.edu}

\date{\today}

\begin{document}

\maketitle
\begin{abstract}
We provide mechanisms and new metric distortion bounds for line-up elections. In such elections, a set of $n$ voters, $m$ candidates, and $\ell$ positions are all located in a metric space. The goal is to choose a set of candidates and assign them to different positions, so as to minimize the total cost of the voters. The cost of each voter consists of the distances from itself to the chosen candidates (measuring how much the voter likes the chosen candidates, or how similar it is to them), as well as the distances from the candidates to the positions they are assigned to (measuring the fitness of the candidates for their positions). Our mechanisms, however, do not know the exact distances, and instead produce good outcomes while only using a smaller amount of information, resulting in small distortion.

We consider several different types of information: ordinal voter preferences, ordinal position preferences, and knowing the exact locations of candidates and positions, but not those of voters. In each of these cases, we provide constant distortion bounds, thus showing that only a small amount of information is enough to form outcomes close to optimum in line-up elections. 
\end{abstract}
\section{Introduction}
Consider the well-known spacial model of voter preferences \citep{arrow1990advances, carroll2013structure, enelow1984spatial, schofield2007spatial}. In this model, both voters and candidates are located in an arbitrary metric space, with the distance from a voter to a candidate representing how much the voter prefers them: the closer the candidate to the voter, the better. These distances could correspond to ideological differences, or to something more concrete, such as when candidates correspond to placing facilities (e.g., new post offices) and voters want a facility close to them \citep{anshelevich2018approximating}. In many such settings, however, it is too difficult, expensive, or impossible to calculate the exact distances from each voter to each candidate, or the voters may be reluctant to provide such detailed information. On the other hand, obtaining {\em ordinal} knowledge about voter preferences (i.e., voter $i$ prefers $A$ to $B$, and $B$ to $C$) is often much easier. This fact gave rise to a large line of work on {\em metric distortion} (see for example \citet{Abramowitz2019Passion, anagnostides2022metric, anshelevich2017randomized, anshelevich2021ordinal, anshelevich2018approximating, Borodin2019primary, caragiannis2022metric, charikar2022metric, cheng2018distortion, ebadian2022optimized, feldman2016voting, ghodsi2019distortion, goel2017metric, gross2017vote, Kizilkaya2022plurality, pierczynski2019approval, skowron2017social, anshelevich2024approvals, charikar2024breaking}, and see \citet{anshelevich2021distortion} for a recent survey), which is a measure of how well a mechanism knowing only ordinal preferences can perform compared to an omniscient algorithm which knows the true distances between voters and candidates. When attempting to minimize distortion, we are concerned with finding a mechanism that does well in minimizing the social cost (the total distances from the voters to the chosen candidate(s), which come from the underlying metric) while only knowing ordinal information, as compared with the true optimum solution. %We then want to find a mechanism that does well over all metric spaces that are consistent with a given preference profile, specifically we want the distortion informally defined as the worse ratio of the social cost of the outcome of our mechanism and any consistent metric space.

Previously, most work on metric distortion has considered selecting a single candidate \citep{Abramowitz2019Passion, anagnostides2022metric, anshelevich2017randomized, anshelevich2018approximating, anshelevich2021ordinal, charikar2022metric, ebadian2022optimized, ghodsi2019distortion, gkatzelis2020resolving, gross2017vote, Kizilkaya2022plurality, pierczynski2019approval, anshelevich2024approvals}, or selecting a committees of $m$ candidates \citep{caragiannis2022metric}. In this work we instead consider {\em line-up elections} \citep{Boehmer2020line-up}. In such elections, there are multiple positions that need to be filled (e.g., different spots on a sports team, different duties in a club or a committee, etc.), and a shared pool of available candidates to fill them. The goal is to choose which candidate will be assigned to each position, with the constraint that a candidate can only hold a single position. Not all candidates are equally fit for every position, as some may be more qualified for some positions over others. Thus, unlike in the work mentioned above, where we only cared about the total distance from voters to the chosen candidates, we are now concerned both about choosing candidates who are qualified for the position and {\em also} about what candidate best represents the views of the voters. 

These types of considerations also occur in facility location settings. For example, we may have several different building sites and plan to build different types of facilities at some of them (e.g., a post office at one, a food bank at another, etc.). When choosing which sites to select for building, we want to choose ones which are close to the customers. On the other hand, some sites may be more suitable for particular facilities, since they are closer to the places where they would get their supplies (see Figure \ref{fig:intro-example}). In fact, our model is actually more general: positions can also be (not necessarily disjoint) sets in the metric space, instead of single points. If the distance to a set $S$ is defined to be the average distance to the items in the set, then all of our results hold for this case. See Appendix \ref{sec:appendix} for details and for further motivating examples. 

\begin{figure}[ht]
        \centering
        \includegraphics[width=.7\textwidth]{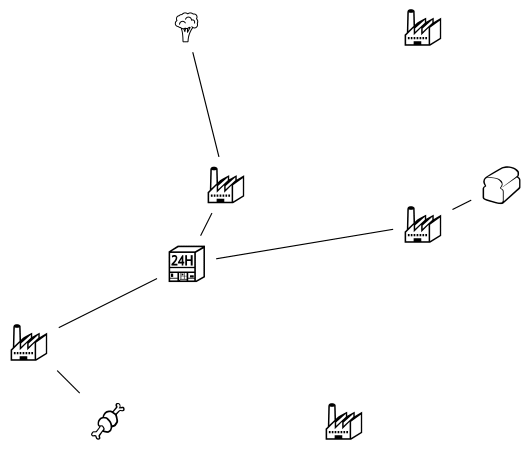}
        \caption{Consider a distribution network where we have farms $\Pos$, warehouses $\C$, and stores $\V$. We show a single store connected to $3$ warehouses, each of which can handle a single type of item. The cost for a single store is represented by the distances to the assigned warehouses and from the warehouses to the farms. Each distribution center can process a single good. We need to form a matching between producers and warehouses so that the total distance all goods need to travel between producers, warehouses, and stores are minimized.}
        \label{fig:intro-example}
\end{figure}

More formally, we have sets of $n$ voters $\V$, $m$ candidates $\C$, and $\ell$ positions $\Pos$ located in some metric space with distance function $d \colon \V \cup \C \cup \Pos \to \R^+$: distances between voters and candidates represent how much the voter likes them, and distances between candidates and positions represent how qualified the candidate is for this position. The goal is to form a matching so that every position has a candidate assigned to it, and every candidate is assigned to at most one position. How much a voter $v\in \V$ likes when a candidate $c\in \C$ is assigned to position $p\in \Pos$ depends on both the similarity between $v$ and $c$, as well as $c$''s fitness for position $p$, and is therefore measured by $d(v,c)+d(c,p)$: the smaller this quantity, the happier the voter $v$ becomes. Thus, the total cost for a voter is the sum of this quantity over all positions and candidates assigned to those positions, and the Social Cost for this model is defined as the sum of all voter costs, as in all previous work mentioned above. 

The optimal solution to this problem is the matching minimizing the social cost, and is simple to compute given the distances or the voter costs. We are interested in forming mechanisms with small distortion, however, and thus assume that we {\em do not know} the distances between the voters and the candidates. Instead we only use a smaller amount of information (see Section \ref{sec:contributions}), and form simple and deterministic voting rules that achieve outcomes only a small factor away from the socially optimal outcome. 

In a sense, line-up elections are a combination of multi-winner elections (as a set of candidates of size $\ell$ is chosen), and bipartite matching (as the chosen candidates are matched with the positions). Because of this, we use techniques optimizing distortion for both multi-winner elections and matching in this paper. As we establish below, however, the fact that both are being optimized simultaneously allows us to form better distortion results than for either in isolation. Many of our proof techniques first show the loss from choosing the incorrect matching, and then combine it with the loss of choosing the incorrect set of candidates. By carefully combining these, however, and showing that the loss of one can often be bounded by the loss of the other, we are able to form new proof techniques and distortion bounds.

Importantly, a limitation of our model is that we require the sets $\V, \Pos$, and $\C$  to all lie in a common metric space. This is a crucial requirement: if we allow for multiple arbitrary metric spaces then it is easy to show that in the worst case our model cannot give bounds better than the standard matching model, which has exponential distortion \citep{caragiannis2022truthful}. That said, there are numerous settings where this assumption holds, such as in distribution networks where all locations $\Pos$ creating the goods, locations $\V$ consuming goods, and intermediate distribution centers $\C$ all lie in a common space. For more discussion and further examples see the full version of this paper.

%Informally, our model consists of an one-to-one mapping between candidates and positions in a metric space. The cost of the matching for each individual voter is the sum of the distances from the voter to each positions first  passing through the assigned candidate and the total social cost is the sum of the cost for each voter. The distance can be concrete distances (for instance location on earth), or more abstract distances such as political alignment. The main difference between our model and previous models is candidates have both a cost corresponding to the quality of the matching between the voters and the matching between positions. Our model is especially relevant to more technocratic elections such as for judges or tax assessors where the quality of the candidate is just as important as the views matching the voters.  Importantly, the cost is the sum of the quality of the matching, though some results such as Lemma \ref{lem:equal-cost-reduction} when the cost is a subadditive combination of the matchings. 

\subsection{Our Contributions}\label{sec:contributions} %In this work we formally define the line up election problem and analyze the worse case metric distortion under limited information. 
See Table \ref{tab:results} for a summary of our results.

\renewcommand{\arraystretch}{1.25}
\begin{table}[]
\centering
\begin{tabular}{|l|l|l|l|l|}
\hline
              & $(m, 1)$       & $(m,m)$                & $(m, \ell)$ & (2,1)                        \\ \hline
VP            & 3 (3)           & 3                      & 7 (3)        & 3 (3)                            \\ \hline
PP            & 3 (3)           & $3-\frac{1}{2 ^{m-2}}$ & 5 (3)        & 3 (3)                            \\ \hline
LOC           & 3 (3)           & 1                      & 3 (3)        & 3 (3)                            \\ \hline
VP+PP  & -              & -                      & -           & 2 (2)                         \\ \hline
VP+LOC & $\frac{25}{9}$ & -                      & -           & $\frac{5}{3} (\frac{5}{3})$ \\ \hline
\end{tabular}
\caption{Summary of results showing upper/lower bounds on distortion for lineup elections with different types of available information. VP stands for mechanisms that operate knowing voter preferences, PP for position preferences, and LOC for candidate and position locations. Numbers in parentheses are lower bounds.}
\label{tab:results}
\end{table}

%First in Section \ref{sec:1} we consider the classic distortion setting, where only the ordinal voter preferences are known, i.e., we know the order of which candidates each voter prefers for each position. We will then consider situations under different set of information, both ordinal and exact.  We first show a reduction from the \lineup{k}{1} to a well studied problem in social chose theory. We then consider some special cases such as showing bounds of $3 - \frac{1}{2^{k-2}}$ for the \lineup{k}{k} problem. We then introduce the {\em iterative election} procedure and achieve a bounds of 7 using plurality veto. {\em Iterative election} also end up being a quite general mechanism that is capable of using any mechanism for the \lineup{k}{1} problem and it's self is agnostic to the information given. So we derive upper bounds for the distortion of iterative election using properties of the mechanism for the internal \lineup{k}{1}. In doing so we combine ideas both from matching under ordinal information as well as ideas from standard metric distortion literature.

In Section \ref{sec:1} we consider the classic notion of distortion, where only the ordinal voter preferences are known, i.e., we know the order of which candidates each voter prefers for each position. %We will then consider situations under different set of information, both ordinal and exact. 
It is not difficult to show that for \lineups{m}{1}, i.e., when there is only a single position, our model is equivalent to usual elections where a single candidate is chosen. Thus existing results for distortion in social choice apply for this case. For the special case of \lineups{m}{m} we can also show a distortion of at most 3: this case is simpler than the general one since all candidates must be selected, and we only need to form an appropriate matching without knowing the true distances between the candidates and positions. One of our main results is combining techniques for social choice and matching to give a distortion bound of 7 for general \lineups{m}{\ell}. The mechanism to achieve this bound is simple: it is an iterative election which chooses a candidate to assign to each position separately by running a single-winner election using the voter preferences. Proving the distortion bound, however, requires a careful charging argument between the costs of various components of the matching between candidates and positions, and the distances from voters to candidates. We first show a simpler distortion bound of 9 to build intuition, and then improve it to a bound of 7 by more carefully bounding the combination of loss from choosing the incorrect matching, and choosing the incorrect set of candidates to match to positions.  

We then consider different types of information. In Section \ref{sec:2}, we study mechanisms which know ordinal information about the fit of the candidates for each position (which we refer to as ``position preferences"). %In Figure \ref{fig:intro-example}, for example, this would mean that we only know the ordinal ranking of the products for each warehouse. This means that each supplier can provide an ordered list of warehouses that are closest to them. 
In social choice, it is often reasonable to know the relative fitness of each candidate for each position (e.g., how much experience they have or their qualifications), even if we don't know the exact fitness levels or the voter preferences. Perhaps surprisingly, this small amount of ordinal information (as compared to knowing the preferences of all the voters) is enough to provide {\em better} distortion than if we knew all voter preferences. In fact, for \lineups{m}{m}, we can form distortion better than the best possible distortion for standard (non-line-up) elections, and for general \lineups{m}{\ell} we can improve our bound from 7 to 5. Once again, the mechanism achieving these bounds is simple: it is serial dictatorship by using the position preferences. We are able to achieve these better results by taking existing bounds on the quality of matching and of social choice, and showing trade-offs between them which combine to form better results than they can separately. 

%We then expand to consider different types of information in \ref{sec:2}. First we consider what if we know ordinal information about the fit the candidates for each positions. Referring back to Figure \ref{fig:intro-example} in this case we only know the ordinal ranking of the warehouses for each product. That means we don't even have to survey the voters (which may be prohibitively expansive if $\abs{\V} \gg \ell$) so we also have significantly less information, ($\abs{\V}$ vs $\ell$ surveys and $\abs{\V}\ell k$ vs $\ell k$ rankings) but can still perform better than with just voter preferences. We go on to show a tight bounds of $3$ for the distortion for \lineup{k}{1}. We then show that serial dictatorship on the \lineup{k}{k} problem gives a trade off allowing us to establish upper bounds on the distortion of $3-\frac{1}{2^{k-2}}$ give us small distortion for small values of $k$. Also using serial dictatorship we then are able to show a upper bounds of $5$ for the \lineup{k}{\ell} problem given just the preferences of the positions. Lastly in Section \ref{sec:2}, we consider the \lineup{2}{1} lineup election problem where we are given the position preferences in addition to the voter preferences and we show a tight bounds of $2$.

Finally, in Section \ref{sec:3}, we consider the case where instead we are given the exact locations of all candidates and positions (and thus we know the distances between them), but we know absolutely nothing about the voter preferences. For example, in Figure \ref{fig:intro-example} we may not know the identities or locations of the stores, but have full information about where products are produced and where the distribution centers are located. This information is enough to form a distortion of 3 for \lineups{m}{\ell}, even without knowing anything about voters or their preferences. Interestingly, if we {\em also} know the ordinal preferences of the voters, then for a single position we can improve distortion to $\frac{5}{3}$ for \lineups{2}{1} and to $\frac{25}{9}$ for \lineups{m}{1}. Contrast this with distortion for standard single-winner elections, where knowing the exact locations of all the candidates in addition to ordinal voter preferences does {\em not} improve the possible distortion as compared to only knowing voter preferences \cite{anshelevich2021ordinal}.

%Lastly in Section \ref{sec:3}, we consider the case where instead we are given just the exact distances between candidates and positions and we do not know the voter preferences. Again referring back to Figure \ref{fig:intro-example} we may not even know who our customers are, but we have enough information about the placement of warehouses and products that we can achieve small distortion. We then show an upper and lower bounds of $3$ for the \lineup{k}{\ell} problem using a simple strategy proof mechanism. Similarly, as before we also consider the case of the \lineup{2}{1} problem having both exact candidate and position locations as well as voter preferences we then show a tight bounds of $\frac{5}{2}$. We then extend our results to \lineup{k}{1} and achieve an upper bounds of $\frac{25}{9}$.

\subsection{Related Work}

The line-up election model was first introduced by Boehmer et al. in \citet{Boehmer2020line-up}, where they formalized the concept of a multiwinner election with a shared candidate pool. They analyzed several types of voting rules for such elections from an axiomatic approach and showed several desirable properties of voting rules. %Their work considered a score matrix $\bf{S} \in \mathbb{Q}^{k \times \ell}$. They then analyze two classes of voting rules, sequential and OWA-rules. Sequential rules assign positions in some predefined order, while OWA (ordered weighted average) rules try to maximize some ordered weighted average of the assignments. Then then analyze the rules from an axiomatic perspective such as non-wastefullness, Pareto optimality, reasonable satisfaction, score consistency, monotonicity and line-up enlargement. 
%Our models differ in that instead of given an exact score matrix we are instead given the each voter's preferences (where the preferences come from some metric) for each position individually and then try to \textit{minimize} the total social cost.
Our work builds on theirs by considering the distortion of such elections, and forming mechanisms which perform well despite not knowing the exact scores that voters assign to each candidate and position.

Mechanisms for minimizing distortion, i.e., forming mechanisms with access to only ordinal information which perform almost as well as mechanisms with full information, have received a lot of attention. See \citet*{anshelevich2021distortion} for a survey on this topic. Much of this work specifically focuses on {\em metric distortion}, i.e., on forming such mechanisms with the assumption that voters and candidates are located in some unknown metric space (see e.g., 
\citet{anagnostides2022metric, anshelevich2017randomized, anshelevich2018approximating, anshelevich2021ordinal, caragiannis2022metric, charikar2022metric, cheng2017people, feldman2016voting, ghodsi2019distortion, goel2017metric, pierczynski2019approval, skowron2017social, anshelevich2024approvals, kalayci2024proportional}). Most of this work concerns single-winner elections, with \cite{gkatzelis2020resolving, Kizilkaya2022plurality} showing how to achieve a metric distortion of 3 via a deterministic mechanism in single-winner elections. Few papers have looked at distortion of multi-winner elections, with the following main exceptions. %being \citet{caragiannis2022metric} and \cite{chen2020favorite}, 
%in which the $k$-committee model is studied. 
\citet{chen2020favorite} look at selecting a committee equal in size to the total number of candidates minus one. They show a tight bound of 3 on distortion for the case when the cost of a voter equals the distance to the closest member of the committee. \citet{caragiannis2022metric} also consider the problem of picking a committee of $k$ candidates, but such that the cost for each voter is given by the distance to the $q$'th closest candidate. They then try to minimize the sum of the costs of all agents, and show that the distortion may become arbitrarily large when $q \leq \frac{k}{3}$, asymptotically linear in the number of agents when $\frac{k}{3} < q \leq\frac{k}{2}$ and constant when $q > \frac{k}{2}$. Our model differs from that of \citet{caragiannis2022metric} in a few ways. First, we must both choose $m$ candidates to select, as well as which of these is assigned to which position. Second, the cost of a voter is the sum of its costs for {\em all} positions, so it is the sum of distances to the selected candidates together with their distances to the assigned positions. Thus, our model is both more difficult (because we must form an assignment of candidates to positions as well), and simpler (since we are looking at the sum of distances instead of the $q$'th closest distance). 
Note that if there were no positions and we simply wished to choose $k$ candidates $K \subseteq \C$ to minimize $\sum_{v \in \V} \sum_{c \in K}d(v,c)$, then it is easy to achieve a distortion of 3 by using plurality veto \citep{Kizilkaya2022plurality} to select $k$ candidates one at a time, as in \citet{goel2018relating}.

As discussed above, our model is essentially a combination of classic social choice committee selection, together with forming a matching between selected candidates and positions. The distortion of matching in metric spaces has been previously studied in \citet{anshelevich2016blind, anshelevich2016truthful, bhalgat2011social, caragiannis2022truthful, christodoulou2016social}. Most of this work considers computing the maximum weight matching, however, with the main exception of \citet{caragiannis2022truthful}. They study the problem of minimizing the total social cost of a resource assignment problem, where users are matched with capacitated resources, using %serial dictatorship with 
access only to ordinal preferences. %They study a resource augmentation framework and show that a $\frac{g}{g-2}$ approximation exists when capacities are multiplied by a factor of $g$. With no augmentation (i.e., given the original resource capacities), 
As one of their results, they show that the distortion of serial dictatorship is at most $2^k - 1$ for metric bipartite matching. Very recently, \citet{mosesMatching} gave a better algorithm achieving distortion of $O(k^2)$ for this matching problem, as well as a lower bound of $\Omega(\log k)$. In contrast to the above results for matching, by matching candidates to positions at the same time as selecting candidates preferred by voters, and thus considering the total cost of both, in our model we are able to achieve much better distortion.

As discussed above, our model is essentially a combination of classic social choice committee selection, together with forming a matching between selected candidates and positions. The distortion of matching in metric spaces has been previously studied in \citet{anshelevich2016blind, anshelevich2016truthful, anshelevich2019tradeoffs, bhalgat2011social, caragiannis2022truthful, christodoulou2016social}. Most of this work considers computing the maximum weight matching, however, with the main exception of \citet{caragiannis2022truthful}. They study the problem of minimizing the total social cost of a resource assignment problem, where users are matched with capacitated resources, using serial dictatorship with access only to ordinal preferences. They study a resource augmentation framework and show that a $\frac{g}{g-2}$ approximation exists when capacities are multiplied by a factor of $g$. With no augmentation (i.e., given the original resource capacities), they show that the distortion of serial dictatorship is at most $2^k - 1$ for metric bipartite matching. In contrast, by matching candidates to positions at the same time as selecting candidates preferred by voters, and thus considering the total cost of both, in our model we are able to achieve much better distortion.

\section{Model and Preliminaries}

% In a line-up election we are given a set of positions and a set of candidates. We then seek to find an assignment from candidates to positions such that each candidate is assigned to at most one position and each position is assigned to a single candidate. We may also be given different information about the distance between candidates and positions, leading to various information about the quality of the matching. We also assume there exists a single metric space that all voters, candidates and positions lie. We can thus try to minimize the social cost of the matching we form. One can see similar problems occurring such as when we have producers who each produce a different product, that is then transported to a distribution center then to stores. For a distribution center that is capable of processing multiple products, we can simply represent it as multiple candidates, each capable of handling one product in our setting. The metric space can also be more abstract such as assigning chairs to conference sessions such that we want the chair to be close to the paper topics and also close to the attendees for the talk.

We now formally introduce the \lineup{m}{\ell} and define some terms that will be useful later on in this paper. Let $\Pos$ be a set of $\ell$ positions, $\C$ be a set of $m$ candidates, and $\V$ be the set of $n$ voters in a metric space with distance function $d$. Throughout this paper we will assume $\abs{\Pos} \leq \abs{\C}$. The goal in a line-up election is to form an assignment from the set of positions to the set of candidates, so that each candidate is assigned to at most one position, and each position has a candidate assigned to it. In other words, the goal is to form an injective matching $M \colon \Pos \to \C$. If we assign position $p$ to candidate $c$, then the cost for voter $v$ is given by $d(v,c) + d(c,p)$; this means that the voter cares both about her proximity to the chosen candidate as well as the candidate's fitness for the position. %In fact, if we fix the set of candidates then all voters would agree on the best matching, while if we have more candidates then positions they could disagree.
Thus the total cost of a voter $v$ for a matching $M$ is defined as the following. 

%Notice that if the exact distances between voters, candidates, and positions are given, line-up elections can be solved in polynomial time. %To show this, consider the following procedure: first, we construct a bipartite graph with two independent sets corresponding to candidates and positions denoted by $C$ and $P$ respectively. Then, we add edges $(c,p)$ for all $c \in C $ and $p \in P$ such that the weight $w(e)$ of each edge $e = (c,p)$ equals to $ n\cdot d(c,p) + \sum_{v \in \V} d(v,c)$. We can then find the minimum cost matching between candidates and positions in polynomial time which is also the optimal solution to the original line-up election problem. First we will formally define the \lineup{k}{\ell}.

%\begin{defn}
 %   Let $\varepsilon = (\V, \C, \Pos, d, I)$ be a \lineup{k}{\ell}. Where $\V$ is the set of $n$ voters $\C$ is the set of $k$ candidates, $\Pos$ the set of $\ell$ positions, $d$ a metric on $\V \cup \C  \cup \Pos$ and $I$ be the information set.
%\end{defn}

%\begin{defn}
    %Given a matching $M$ the cost for voter $v$ is
    \begin{align*}
        \cost(v,d) &= \sum_{p \in \Pos} \left[d(v,M(p)) + d(M(p),p)\right]
    \end{align*}
%\end{defn}

Thus the total cost of a voter depends both on which candidates are selected, and which positions these candidates are assigned to. The social cost of a matching $M$ is simply the sum of all voter costs, as in most previous work:

%\begin{defn}
%We define cost of a matching $M$ as 
\begin{align*}
    \cost(M,d) &= \sum_{v \in \V} \cost(v)
    =\sum_{v \in \V} \sum_{p \in \Pos} \left[d\left(v,M(p)\right) + d\left(M(p), p\right)\right]
\end{align*}
%\end{defn}
For clarity we will omit $d$ when it is clear from context, and we will denote the set of candidates selected by matching $M$ as $\cand(M) = \{c \in \C \colon \exists p \in \Pos \text{ with } M(p) = c\}$. The goal of a line-up election is to choose an outcome $M$ which minimizes the social cost.

Note that the total cost of $M$ depends on two distinct parts, which we will sometimes consider separately in our proofs. Because of this, we define $\cost_V(M)$ and $\cost_P(M)$ so that $\cost(M)= \cost_V(M) + \cost_P(M)$ and

%We sometimes also need to refer to only the cost of the voters to candidates or the cost of the positions to candidates. To do this, we define $\cost_V(M)$ and $\cost_P(M)$.

%\begin{defn}
   % Let $M$ be a matching then 
    \[
        \cost_V(M) = \sum_{v \in \V} \sum_{p \in \Pos} d(v, M(p))
    \]
%\end{defn}

%\begin{defn}
  %  Let $M$ be a matching then 
    \[
        \cost_P(M) = \sum_{v \in \V} \sum_{p \in \Pos} d(M(p),p).
    \]
%\end{defn}

In this paper we are interested in forming mechanisms which only use a limited amount of information, instead of knowing the exact distances $d$. Knowing the ordinal preferences of the voters for the candidates, instead of their true distances, is the most common type of information used in previous work on distortion. For line-up elections, this means knowing the voter preferences over candidates for each position. In other words, for each position $p$ and voter $v$, and candidates $c_1$ and $c_2$, we know which of these candidates $v$ prefers for position $p$, i.e., which of $d(v, c_1) + d(c_1, p)$ or $d(v, c_2) + d(c_2, p)$ is smaller. We only know the ordering of these when choosing an outcome of the election, however, and not their actual values. 

In later sections, we also consider mechanisms with ordinal knowledge of how qualified different candidates are for each position. In other words, we know, for each position $p$ and candidates $c_1,c_2$, whether $d(c_1,p)$ or $d(c_2,p)$ is smaller, but not the actual distance values. It is often less difficult to sort candidates in order of their qualifications for a position, instead of figuring out the exact numerical level of fitness. We refer to this knowledge as ''position preferences" for the candidates.

%Suppose that we do not know the distances and instead we only know the preferences each candidate has for each position. We will now formally define the preference profiles.
%\begin{defn}
 %   Let $c^i(p,v)$ be the candidate such that $d(v, c^i(p,v)) + d(c^i(p,v), p)$ is the $i$th smallest among all possible candidates. The preference profile for voter $v$ and position $p$ is then $\pref_{v,p} = (c^1(p,v), c^2(p,v),...,c^{m}(p,v))$. 
%\end{defn}

%\begin{defn}
 %   Let $c^i(p)$ be the candidate such that $d(c,p)$ is the $i$th smallest. Then the preference profile $\pref_P = (c^1(p), c^2(p),...,c^{m}(p))$. \label{def:preference_profile}
%\end{defn}

Let $I$ be the set of information we have about a line-up election instance. The set $I$ could consist of some combination of ordinal voter preferences, ordinal position preferences, or exact candidate and position locations (for more discussion on how information affects distortion see \citet{anshelevich2024approvals, Abramowitz2019Passion, anshelevich2019tradeoffs, anshelevich2021ordinal}). We now introduce the distortion of a matching $M$ given information set $I$.

\begin{defn}
    Let $\mathcal{D}$ be the set of all metric spaces that are consistent with information set $I$ and $M^*_{d'}$ be the optimal matching for metric $d'$. The {\em distortion} of a matching $M$ is
    \[
    \sup_{d' \in \mathcal{D}} \frac{ \cost(M,d')}{\cost(M^*_{d'},d')}.
    \]
\end{defn}
In other words, distortion of $M$ is the worst case ratio between the cost of $M$ and the optimal solution over any distances which are consistent with the information we possess. Similarly, the distortion of a mechanism is the worst-case distortion of any matching $M$ produced by this mechanism. In this paper, we only focus on the distortion of deterministic mechanisms, and leave the analysis of randomized mechanisms for future work.

%-------------------------------------------------------

\section{Knowing Ordinal Voter Preferences} \label{sec:1}
We first consider the case where for each position, we only know the ordinal preferences of each voter towards each candidate for this position. In other words, for each position $p$ and voter $v$, we know which candidate would be $v$'s first choice for $p$ based on $v$'s cost, which would be the second choice, etc. First we will show some easy reductions to well known problems, as well as some lower bounds, and then show more difficult and general theorems afterward.

We will call the following a {\em standard election}, to differentiate it from line-up elections which are the focus of this paper. Let $\C$ be a set of candidates and $\V$ be a set of voters in some metric space $d$. For any $c \in \C$, let the social cost for each voter $v$ be $d(v, c)$. Then the total social cost of the election is the sum of the individual costs for each voter, $\sum_{v \in \V} d(v,c)$.  The goal is to find a candidate $w \in \C$ such that $w$ minimizes the total social cost.  Having defined the standard election we can now present our first simple result for the case of a single position.

\begin{lem}
    For each \lineup{m}{1} there exists a standard election with the same candidate set $\C$, such that choosing $c \in \C$ in the line-up election is equal to the cost of choosing $c$ in the standard election.
    \label{lem:equal-cost-reduction}
\end{lem}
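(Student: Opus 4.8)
The plan is to turn the position's contribution to the cost into extra voters. In an \lineup{m}{1} there is a single position $p$, and assigning candidate $c$ to it yields social cost
\[
\cost(c) \;=\; \sum_{v\in\V}\bigl[d(v,c)+d(c,p)\bigr] \;=\; \sum_{v\in\V} d(v,c) \;+\; n\cdot d(c,p),
\]
since $d(c,p)$ contributes the same term for every one of the $n$ voters. The key observation is that $n\cdot d(c,p)$ is exactly the contribution that $n$ extra voters co-located with $p$ would make to a \emph{standard} election on candidate set $\C$. So I would define the standard election to have candidate set $\C$ (unchanged) and voter multiset $\V' = \V \cup \{u_1,\dots,u_n\}$, where each $u_i$ sits at the location of $p$, with all distances inherited from the original metric $d$.

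With this construction the verification is a one-line computation: for every $c\in\C$,
\[
\sum_{v'\in\V'} d(v',c) \;=\; \sum_{v\in\V} d(v,c) \;+\; \sum_{i=1}^{n} d(u_i,c) \;=\; \sum_{v\in\V} d(v,c) + n\cdot d(p,c) \;=\; \cost(c),
\]
using symmetry of $d$. Hence the cost of choosing $c$ in this standard election equals its cost in the \lineup{m}{1}, which is precisely the claim; in particular the optimal candidate, and its cost, coincide in the two elections.

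The only point that needs care is that $\V'$ places several agents (the $u_i$, and possibly $u_i$ together with agents that already coincide with $p$) at the same location; as is standard in the metric distortion literature, we allow coincident agents, i.e.\ $d$ is a (pseudo)metric on the multiset of agent locations, and nothing about social cost or distortion is affected. I would also remark --- although not required by the statement --- that one can simultaneously preserve the ordinal structure: replacing $d$ by $d'$ with $d'(v,c)=d(v,c)+d(c,p)$ on voter-candidate pairs, $d'(v,v')=d(v,v')$, and $d'(c,c')=d(c,c')+d(c,p)+d(c',p)$ is routinely checked to satisfy the triangle inequality, realizes the same costs $\cost(c)$, and makes each voter's standard-election ranking of $\C$ agree with its \lineup{m}{1} ranking; this is what lets existing deterministic distortion-$3$ single-winner mechanisms be applied verbatim here. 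Since the reduction is this direct, there is no real obstacle: the one thing to keep track of is that the $d(c,p)$ term is voter-independent, which is exactly why ``$n$ copies of $p$'' is the right gadget rather than something more elaborate.
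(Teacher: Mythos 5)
Your proof is correct for the lemma as stated, but it takes a genuinely different route from the paper. You keep the original metric and enlarge the voter set with $n$ phantom voters co-located with the position $p$, so that the voter-independent term $n\cdot d(c,p)$ is realized as the contribution of the new voters; the paper instead keeps the voter set $\V$ fixed and modifies the metric, setting $d'(v,c)=d(v,c)+d(c,p)$ (formally an $L_1$ product with an extra coordinate in which voters sit at $0$ and candidate $c$ at height $d(c,p)$), so that the standard-election cost under $d'$ equals the line-up cost. Your construction is more elementary --- a one-line counting identity with no need to verify that a new distance function is a metric --- but it does not preserve the ordinal information the paper needs downstream: in your augmented instance the original voters rank candidates by $d(v,c)$ and the phantom voters by $d(c,p)$, and neither ranking is available in the voter-preference-only setting of Section \ref{sec:1}, so Lemma \ref{lem:equal-dist} and the distortion-$3$ bound of Theorem \ref{lem:1-pos-k-cand-3} cannot be obtained by feeding your reduced instance to a standard-election mechanism. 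The paper's metric modification is designed precisely so that each voter's standard-election ranking coincides with its line-up ranking, which makes the distortion transfer immediate. You anticipate this in your closing remark, where you sketch essentially the paper's construction (your candidate-candidate distance $d(c,c')+d(c,p)+d(c',p)$ differs from the paper's $d(c,c')+\abs{d(c,p)-d(c,p')}$-style choice, but both satisfy the triangle inequality and agree on voter-candidate pairs, which is all that matters); for the paper's purposes that remark, rather than the phantom-voter gadget, carries the essential content.
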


\begin{proof}
    We shall proceed by reducing the line-up election to a standard election where we instead only try to minimize the total distances from voters to the selected candidate. Suppose we are given a metric space $\mathcal{X}$ with distance function $d$ on $\V \cup \C \cup \Pos$, with $\Pos = \{p\}$. Define a new metric on $\V \cup \C$
    \[
    d'(x,y) = \begin{cases}
                d(v_1,v_2) & \text{if } v_1 \in \V \text{ and } v_2 \in \V\\
                d(c_1,c_2) +  \abs{d(c_1,p) - d(c_2,p)}& \text{if } c_1 \in \C \text{ and } c_2 \in \C\\
                d(v,c) + d(c,p) & \text{if } v \in \V \text{ and } c \in \C%\\
                %d(c,v) + d(v,p) & \text{if } c \in \C \text{ and } v \in \V
            \end{cases}.
    \]
    The metric $d'$ can be thought of as taking the original space and adding an extra dimension orthogonal to the original metric space where the height of a candidate $c$ is $d(c,p)$. The cost for choosing $c$ is then $\sum_{v \in \V} [d(v,c) + d(c,p)]$ which is identical to the cost of the line-up election. Thus the cost of a matching does not change under this transformation. Next we will show that this distance function is a metric.
    
    We note that this new metric space with $d'$ is a special case of the 1 product metric on two metric spaces. The first metric space is the space we are originally given, $(\mathcal{X},d)$ and the second space is the metric space $\R_{\geq 0}$ where the coordinate in $\R_{\geq 0}$ is zero if the point is a voter and $d(c,p)$ for a candidate $c$. Thus the cost remains unchanged and the new space is still a metric.
\end{proof}

In the previous proof we do not use the fact that the distances from $c$ to $p$ form a metric, so Lemma \ref{lem:equal-cost-reduction} is actually more general. If the function $d$ is a metric only on $\V \cup \C$ then the statement still holds. So we could use arbitrary costs in place of distances from points in $\C$ to $p$. We can use this lemma in conjunction with any of the previously studied distortion mechanisms for standard elections \citep{anshelevich2017randomized,  anshelevich2018approximating, goel2017metric} to get the next lemma.

\begin{lem}\label{lem:equal-dist}
    For any mechanism that achieves a distortion of $\delta$ for the standard election there exists a mechanism that achieves a distortion of $\delta$ for the \lineup{m}{1}.
\end{lem}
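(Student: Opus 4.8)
The plan is to combine Lemma~\ref{lem:equal-cost-reduction} with an off-the-shelf distortion mechanism for standard elections. First I would fix any mechanism $\mathcal{A}$ that achieves distortion $\delta$ on standard elections, and describe the mechanism $\mathcal{A}'$ for the \lineup{m}{1} as follows: given the ordinal voter preferences for the single position $p$, simply feed them to $\mathcal{A}$ as if they were preferences in a standard election over candidate set $\C$, and output the winner $c = \mathcal{A}(\text{prefs})$, assigned to $p$.

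The key observation making this work is that the ordinal information available in the \lineup{m}{1} is exactly the ordinal information of the standard election guaranteed by Lemma~\ref{lem:equal-cost-reduction}. Concretely, for voter $v$ and candidates $c_1, c_2$, the line-up preference compares $d(v,c_1)+d(c_1,p)$ with $d(v,c_2)+d(c_2,p)$, which by construction of $d'$ is exactly the comparison of $d'(v,c_1)$ with $d'(v,c_2)$ — the standard-election preference. So the preference profile $\mathcal{A}$ sees is legitimate, and $\mathcal{A}$'s guarantee applies: $\sum_{v} d'(v,c) \le \delta \cdot \min_{c^* \in \C} \sum_v d'(v,c^*)$.

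Then I would translate this back. Lemma~\ref{lem:equal-cost-reduction} says the line-up cost of choosing $c$ equals $\sum_v d'(v,c)$, for every $c \in \C$; in particular it holds both for the candidate $c$ that $\mathcal{A}$ outputs and for the line-up-optimal candidate. Since with a single position the optimal line-up matching is just the choice of the best single candidate, $\cost(M^*_d) = \min_{c^*} \sum_v d'(v,c^*)$. Chaining the three equalities/inequalities gives $\cost(M_{\mathcal{A}'}) = \sum_v d'(v,c) \le \delta \cdot \min_{c^*}\sum_v d'(v,c^*) = \delta \cdot \cost(M^*_d)$, which is the claimed distortion bound. One should quantify over all metrics $d$ consistent with the given line-up information to match the definition of distortion, but since the argument is uniform in $d$ this is immediate.

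The only mild subtlety — and the one point I would be careful about — is that the distortion guarantee of $\mathcal{A}$ must be a metric-distortion guarantee, i.e. it must hold for the worst metric consistent with the reported preferences, and one must check that the metric $d'$ built in Lemma~\ref{lem:equal-cost-reduction} genuinely is a metric on $\V \cup \C$ (this is established in that lemma's proof via the $\ell_1$ product construction) so that $\mathcal{A}$'s guarantee is actually invoked on an admissible instance. There is no real obstacle here; the lemma is essentially a bookkeeping consequence of the reduction already proved, so the proof will be short.
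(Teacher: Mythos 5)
Your proposal is correct and follows essentially the same route as the paper's proof: apply the cost-preserving reduction of Lemma~\ref{lem:equal-cost-reduction}, observe that the line-up preferences coincide with the standard-election preferences under $d'$ (since $d'(v,c)=d(v,c)+d(c,p)$), run the standard mechanism, and transfer its distortion guarantee back to the line-up instance. Your added care that $d'$ is a genuine metric is exactly what the paper's Lemma~\ref{lem:equal-cost-reduction} supplies, so there is nothing missing.
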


\begin{proof}
    By Lemma \ref{lem:equal-cost-reduction}, we know that under any metric space $d$ for the \lineup{m}{1}, there exists a metric space $d'$ such that for each candidate $c \in \C$ the corresponding candidate in $d'$ has the same cost as $c$ in the line-up election.  In addition, the preferences of voters must stay the same since $d'(v,c) = d(v,c) + d(c,p)$ for position $p$ and voter $v$. Then, to solve the original \lineup{m}{1}, we can instead run a standard election with candidates $\C$ and the same voter preferences. If we choose candidate $c$ for the standard election we obtain the same cost for the line-up election by choosing $c$ for the position. Therefore, if we have a mechanism that achieves a distortion of $\delta$ of the standard election, by using the reduction above, for the resulting solution $c$ and the optimal solution $c^*$, we must have $\sum_{v \in \V} \left[d(v,c) + d(c,p) \right]\leq \delta \sum_{v \in \V} \left[d(v,c^*) + d(c^*,p)\right]$.
\end{proof}

From the previous lemma and previous work from \citet{gkatzelis2020resolving, Kizilkaya2022plurality} there exists a mechanism that gives a distortion of three, so we arrive at the next theorem.

\begin{thm}\label{lem:1-pos-k-cand-3}
There exists a mechanism for the the \lineup{m}{1} that achieves a distortion of $3$. Moreover, no other deterministic mechanism can achieve a worst-case distortion bound less than 3, even on a line with 2 candidates.
\end{thm}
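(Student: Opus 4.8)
This half is immediate: apply Lemma~\ref{lem:equal-dist} with $\delta=3$ to the deterministic distortion-$3$ mechanism for standard elections of \citet{Kizilkaya2022plurality} (or \citet{gkatzelis2020resolving}). So the real work is the lower bound, which I would prove by a direct construction on a line with two candidates.

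\textbf{Lower bound, setup.} When $\abs{\C}=2$, the only thing a mechanism using voter preferences learns is, for the single position, how many voters rank candidate $a$ above candidate $b$. I would take $n$ even and fix the profile $\pi$ in which exactly $n/2$ voters report $a\succ b$ and $n/2$ report $b\succ a$. A deterministic mechanism outputs a fixed candidate on $\pi$; by a symmetry argument (below) it is enough to handle the case where it outputs $a$, and then to exhibit a line metric consistent with $\pi$ in which $\cost(a)/\cost(b)$ is as close to $3$ as desired.

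\textbf{The instance.} I would place $a$ at $0$, $b$ at $1$, and the position $p$ at $1-\epsilon$ on the line, for a small $\epsilon>0$; then place the $n/2$ ``$a\succ b$'' voters at $0$ and the $n/2$ ``$b\succ a$'' voters at $1$. A quick check shows a voter at $0$ pays $1-\epsilon$ for $a$ and $1+\epsilon$ for $b$, and a voter at $1$ pays $2-\epsilon$ for $a$ and $\epsilon$ for $b$, so the induced (strict) preferences are exactly $\pi$. Summing, $\cost(a)=\tfrac n2+n(1-\epsilon)=\tfrac{3n}{2}-n\epsilon$ and $\cost(b)=\tfrac n2+n\epsilon$, with $b$ optimal; sending $\epsilon\to0$ drives the ratio to $3$. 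For the symmetric case, if the mechanism instead outputs $b$ on $\pi$, I would reflect this instance through $\tfrac12$: this leaves the profile equal to $\pi$ but moves $p$ next to $a$, and now choosing $b$ costs $\tfrac{3n}{2}-n\epsilon$ against an optimum of $\tfrac n2+n\epsilon$. In either case the distortion can be made larger than $3-\delta$ for every $\delta>0$, so no deterministic mechanism attains distortion below $3$.

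\textbf{Main obstacle.} The subtlety --- and the reason the classical single-winner lower bound of $3$ cannot simply be quoted --- is that the position term $\sum_{v}d(M(p),p)$ appears in both the numerator and the denominator of the distortion ratio and pulls it toward $1$. The fix is to make $p$ coincide with the \emph{optimal} candidate $b$ in the limit, so that $d(b,p)=\epsilon\to0$ while $d(a,p)=1-\epsilon\to d(a,b)$. One cannot just set $d(b,p)=0$ outright, however: then $d(v,b)\le d(v,a)+d(a,b)=d(v,a)+d(a,p)$ for every voter, so $b$ is weakly preferred by everyone and the tie profile $\pi$ that the argument relies on collapses. Keeping $\epsilon$ strictly positive avoids this, and because the voters are pinned exactly at the two candidates (rather than near the midpoint, as in the usual single-winner construction), their preferences do not change as $p$ slides toward $b$.
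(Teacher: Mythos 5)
Your argument is correct and essentially the same as the paper's: the upper bound is the same reduction via Lemma~\ref{lem:equal-dist} to the distortion-$3$ single-winner rules of \citet{gkatzelis2020resolving,Kizilkaya2022plurality}, and your lower-bound pair of two-candidate line instances (same half-and-half profile, with the position at distance $\epsilon$ from $b$ in one instance and from $a$ in the other) is exactly the paper's pair of indistinguishable metrics $d_1,d_2$, yielding the identical costs $\tfrac{n}{2}(3-2\epsilon)$ versus $\tfrac{n}{2}(1+2\epsilon)$ and distortion tending to $3$. One wording nit: a true reflection of the entire instance is an isometry and changes nothing by itself; what you actually compute --- keeping $a$, $b$, and the voters fixed and moving only the position $p$ to $\epsilon$ --- is the correct second instance, and as you verify it induces the same profile $\pi$, so the deterministic mechanism cannot distinguish the two cases.
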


    \begin{figure}
        \centering
        \begin{minipage}{.5\textwidth}
          \centering
        \includegraphics[width=.6\textwidth]{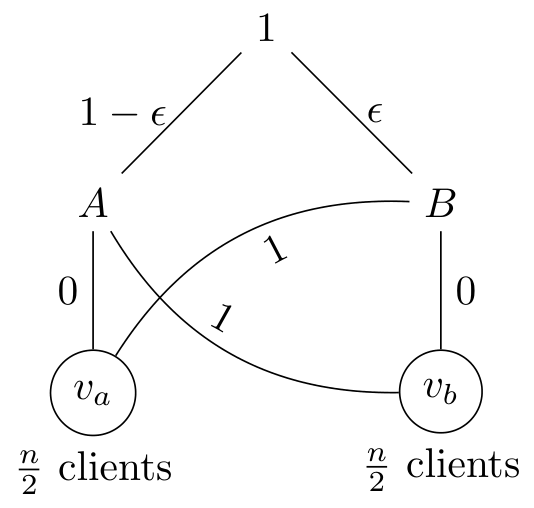}
        \subcaption{Metric space $d_1$.}
        \end{minipage}%
        \begin{minipage}{.5\textwidth}
          \centering
        \includegraphics[width=.6\textwidth]{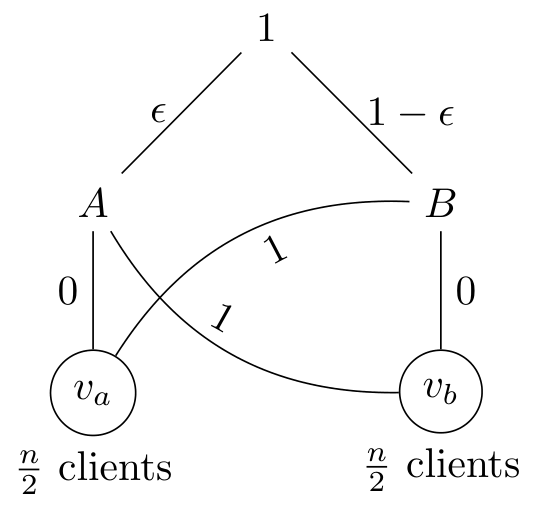}
        \subcaption{Metric space $d_2$.}
        \end{minipage}
        \caption{Two consistent metric spaces with distortion approaching 3.}
        \label{fig:thm11}
    \end{figure}
\begin{proof}
    From \citet{gkatzelis2020resolving, Kizilkaya2022plurality} we know that there exists a deterministic mechanism to solve the standard election which achieves a distortion of $3$. Then, by Lemma \ref{lem:equal-dist}, we can conclude that there exists a mechanism for the \lineup{m}{1} that achieves a distortion of $3$.

     Now we will show a lower bounds of $3$ for all deterministic mechanisms. Consider the following two metrics in $\R$ in Figure \ref{fig:thm11} with $\V = v_a \cup v_b$, $\C = \{A,B\}$ and $\Pos = \{1\}$. Both metric spaces have the same preference profiles for all voters ($v_a$ prefers $A$, and $v_b$ prefers $B$), and any deterministic mechanism must pick the same candidate for both metric spaces. If we pick $A$ then metric $d_1$ gives us a cost $\frac{n}{2}\left(1-\epsilon + 2-\epsilon\right) = \frac{n}{2}\left(3-2\epsilon\right)$ while the optimal candidate is $B$ with cost $\frac{n}{2}(1+2\epsilon)$. So we get a distortion of
    \[
    \lim_{\epsilon \to 0}\frac{\frac{n}{2}(3 - 2\epsilon)}{\frac{n}{2}(1+2\epsilon)} = 3.
    \]

    Similarly, if we pick candidate $B$ we get the same distortion for metric space $d_2$. Thus no mechanism can perform better than $3$.
\end{proof}

Therefore, by Theorem \ref{lem:1-pos-k-cand-3} we know that we can solve \lineup{m}{1} and achieve the optimal distortion of $3$ using standard techniques from metric distortion. However, in Lemma \ref{lem:equal-dist} we also consider distortions other than 3 for general mechanisms since mechanisms such as plurality veto, which achieves optimal distortion, may not have desirable properties such as strategy-proofness. In addition, the mechanism we chose may also be given additional information beyond what we have assumed so far in this section, which would lead to distortions other than three, for example when information about the decisiveness of voters \cite{anshelevich2017randomized} or when information about each voter's passion for each candidate pair is known \cite{Abramowitz2019Passion}. Now that we have discussed results for the case where there is only one position, we will then look at cases where we have more than one position. Before we look at the general case where we have arbitrary $\ell$ positions and $m$ candidates,  we first consider the special case where $\ell = m$.

\begin{lem}\label{lem:all-perfect-matching-3}
If $\abs{\C} = \abs{\Pos}$ then for any two matchings $M_1$ and $M_2$, we have $\cost(M_1) \leq 3\cost(M_2)$.
\end{lem}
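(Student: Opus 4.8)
The plan is to show that, when $|\C| = |\Pos|$, any matching is a constant-factor approximation of any other. The key observation is that when $\ell = m$, every matching necessarily selects the \emph{same} set of candidates, namely all of $\C$; the only freedom is in how candidates are paired with positions. So $\cost_V(M_1) = \cost_V(M_2)$ for any two matchings, since $\cost_V(M) = \sum_{v \in \V}\sum_{c \in \C} d(v,c)$ does not depend on $M$ at all. This already reduces the problem to comparing the position-cost terms $\cost_P(M_1)$ and $\cost_P(M_2)$, which are (up to the factor $n$) the weights of two perfect matchings in the bipartite graph between $\C$ and $\Pos$ with edge weights $d(c,p)$.

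First I would write $\cost(M_i) = n \sum_{c \in \C} d(c, M_i^{-1}(c)) + \cost_V$ where $\cost_V = \sum_{v\in\V}\sum_{c\in\C} d(v,c)$ is common to both. Then the goal becomes: bound $\cost_P(M_1) = n \sum_p d(M_1(p), p)$ in terms of $\cost_P(M_2) = n\sum_p d(M_2(p),p)$. The standard trick here is to use the triangle inequality through the permutation cycle structure: the two matchings differ by a permutation $\sigma$ of $\Pos$, and along each cycle $p \to \sigma(p) \to \sigma^2(p) \to \cdots$, I can bound $d(M_1(p), p) \le d(M_1(p), \sigma(p)) + d(\sigma(p), \text{something})$, routing each edge of $M_1$ through edges of $M_2$. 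Concretely, since $M_1(p)$ and $M_2(p)$ and $M_2(\sigma^{-1}$-related positions are all within reach via triangle inequality, one typically gets $\sum_p d(M_1(p),p) \le \sum_p d(M_2(p),p) + 2\sum_p (\text{cross terms})$, and the cross terms themselves need to be controlled — but here is where the common ambient metric helps, since $d(c,p) + d(c,p')$-type sums telescope along a cycle.

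The cleanest route I expect: take any position $p$, let $c_1 = M_1(p)$ and $c_2 = M_2(p)$. Then $d(c_1, p) \le d(c_1, p') + d(p', c_2') + d(c_2', p)$ for an appropriately chosen $p'$ and $c_2' = M_2(p')$ with $c_1 = M_2(p')$ — that is, route through the position that $M_2$ assigns to candidate $c_1$. Writing $p' = M_2^{-1}(c_1)$, we get $d(M_1(p), p) \le d(M_1(p), p') + d(p', M_2(p)) + \cdots$; summing over $p$ and using that $p \mapsto M_2^{-1}(M_1(p))$ is a bijection on $\Pos$, each of $d(M_1(p),p')= d(c_1, M_2^{-1}(c_1))$ and the remaining terms reassemble into $\cost_P(M_2)$-type sums, yielding $\cost_P(M_1) \le 3\,\cost_P(M_2)$. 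Combined with $\cost_V(M_1) = \cost_V(M_2)$ and the trivial bound $\cost_V \le 3\cost_V$, this gives $\cost(M_1) \le 3\cost(M_2)$.

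The main obstacle I anticipate is the bookkeeping in the triangle-inequality routing: making sure the indices shift consistently so that the right-hand side collapses to exactly $3\cost_P(M_2)$ and not something larger, and verifying that the chosen ``detour'' positions/candidates genuinely form a bijection (so no edge of $M_2$ is used more than a bounded number of times). I would handle this by fixing the permutation $\sigma = M_2^{-1} \circ M_1$ on $\Pos$ explicitly and tracking each edge's contribution. The metric part ($\cost_V$ being matching-independent) is essentially free once observed, so all the work is in the matching/triangle-inequality estimate, and the constant $3$ should be exactly what this routing produces.
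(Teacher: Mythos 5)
There is a genuine gap: the inequality you reduce everything to, $\cost_P(M_1) \le 3\,\cost_P(M_2)$, is simply false, and no cycle-routing or bookkeeping can rescue it. Take $\C=\{c_1,c_2\}$, $\Pos=\{p_1,p_2\}$ on a line with $c_1,p_1$ at $0$ and $c_2,p_2$ at $D$ (perturb by $\epsilon$ if you want strictly positive distances). With $M_2(p_i)=c_i$ we get $\cost_P(M_2)\approx 0$, while the swapped matching $M_1$ has $\cost_P(M_1)\approx 2Dn$, so the ratio of position costs is unbounded. This is exactly why metric bipartite matching on its own admits no constant-factor comparison between two arbitrary matchings, and the paper explicitly remarks on this after the lemma. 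In the counterexample the full lemma still holds only because the voter terms cannot be small: every voter $v$ satisfies $d(v,c_1)+d(v,c_2)\ge d(c_1,c_2)=D$, so $\cost(M_2)\ge Dn$ absorbs the $2Dn$ mismatch. Your decomposition throws that resource away by trying to prove the position part separately.

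The fix is to route the position edge through a \emph{voter} rather than through other candidate--position edges: for each $v$ and $p$, use $d(M_1(p),p) \le d(M_1(p),v) + d(v,M_2(p)) + d(M_2(p),p)$. Summing over $v$ and $p$, and then using your (correct) observation that $\cand(M_1)=\cand(M_2)=\C$ implies $\sum_{p} d(v,M_1(p)) = \sum_{p} d(v,M_2(p))$ for every $v$, the accumulated voter--candidate terms become $3\cost_V(M_2)$, giving $\cost(M_1) \le 3\cost_V(M_2) + \cost_P(M_2) \le 3\cost(M_2)$. This is the paper's argument; note the asymmetry (the position cost picks up no factor at all, the voter cost picks up the $3$), which is precisely the structure your proposal cannot reproduce.
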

\begin{proof}
    Consider the cost of $M_1$:
    \begin{align}
        \cost(M_1) &= \sum_{v \in \V} \sum_{p\in \Pos} \left[d(v, M_1(p)) + d(M_1(p),p)\right]\\
        &\leq \sum_{v \in \V}\sum_{p\in \Pos} \left[d(v, M_1(p)) + d(M_1(p),v) + d(v, M_2(p)) + d(M_2(p),p)\right], \label{kk3label1}
    \end{align}

    where the last line follows from the triangle inequality on $d(M_1(p),p)$. Notice that we have $\cand(M_1) = \cand(M_2)$ so the sum of the distances from voters to candidates in each matching $M_1$ and $M_2$ ($\cost_V(M_1) = \cost_V(M_2)$) are the same in \eqref{kk3label1}, thus we can replace $M_1(p)$ with $M_2(p)$. So we have
    \begin{align}
        \cost(M_1) &\leq \sum_{v \in \V}\sum_{p\in \Pos} \left[d(v, M_1(p)) + d(M_1(p),v) + d(v, M_2(p)) + d(M_2(p),p)\right])\\
        &= \sum_{v \in \V}\sum_{p\in \Pos} \left[d(v, M_2(p)) + d(M_2(p),v) + d(v, M_2(p)) + d(M_2(p),p)\right]\\
        &= \sum_{v \in \V}\sum_{p\in \Pos} \left[3d(v, M_2(p)) + d(M_2(p),p)\right] \\
        &=3\cost_{V}(M_2) + \cost_{P}(M_2) \label{perfect-mtch-3vm}\\
        &\leq 3 \cost(M_2).
    \end{align}
\end{proof}
Notice that $\cost_P(M^*)$ in the transformation above is not multiplied by a factor, while $\cost_V(M^*)$ is multiplied by a constant factor of $3$. By including the voters in our model we can bound the distances between candidates and positions using the distance from the voters to the candidates, allowing us to achieve a constant factor approximation. In contrast, no such mechanism is currently known for the the analogous bipartite perfect matching problem.

We present a concrete example of the above transformation for when $\C = {A,B,C}$, $\Pos = \{1,2,3\}$ and $M = (A,B,C)$ and $M^* = (B, C, A)$ in Figure \ref{fig:example-k-k-3}. Each line represents a factor of 1. We will use diagrams like this one to visualize examples of more complicated proofs later on in this paper.
\begin{cor}
For any \lineup{m}{m} all matchings are at most a factor of 3 away from optimal.
\end{cor}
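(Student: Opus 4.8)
The plan is to derive the corollary directly from Lemma \ref{lem:all-perfect-matching-3}. Since $|\C| = |\Pos| = m$ in any \lineup{m}{m}, the lemma applies to any pair of matchings in such an election. In particular, let $M$ be an arbitrary matching produced by some mechanism and let $M^*$ be the optimal matching. Applying the lemma with $M_1 = M$ and $M_2 = M^*$ gives $\cost(M) \leq 3\cost(M^*)$, which is exactly the claim that every matching is within a factor of $3$ of optimal.

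The only subtlety worth spelling out is why the hypothesis $|\C| = |\Pos|$ really does force $\cand(M_1) = \cand(M_2)$, which is the step in the proof of Lemma \ref{lem:all-perfect-matching-3} that lets us swap $M_1(p)$ for $M_2(p)$ inside $\cost_V$. This holds because an injective matching $M \colon \Pos \to \C$ with $|\Pos| = |\C|$ is necessarily a bijection, so $\cand(M) = \C$ for every matching $M$; hence all matchings in an \lineup{m}{m} select the same candidate set, namely all of $\C$. Thus the lemma's internal requirement is automatically met, and the corollary follows with no additional work.

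I do not anticipate any real obstacle here: the corollary is an immediate specialization of the lemma. If anything, the one thing to be careful about is phrasing — the statement says ``all matchings are at most a factor of 3 away from optimal,'' so I would make explicit that this means $\cost(M) \leq 3\,\OPT$ for every feasible matching $M$, where $\OPT = \cost(M^*)$, and note that this in particular bounds the distortion of \emph{any} deterministic mechanism (including ones using only ordinal voter preferences) by $3$ in the \lineup{m}{m} case.

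\begin{proof}
    In a \lineup{m}{m} we have $\abs{\C} = \abs{\Pos} = m$, so every injective matching $M \colon \Pos \to \C$ is a bijection and $\cand(M) = \C$. Hence the hypothesis of Lemma \ref{lem:all-perfect-matching-3} is satisfied. Let $M$ be any feasible matching and let $M^*$ be an optimal matching, so that $\cost(M^*) = \OPT$. Applying Lemma \ref{lem:all-perfect-matching-3} with $M_1 = M$ and $M_2 = M^*$ yields
    \[
        \cost(M) \leq 3\cost(M^*) = 3\,\OPT.
    \]
    Since this holds for every feasible matching $M$, every matching --- and therefore the outcome of any deterministic mechanism --- is within a factor of $3$ of optimal.
\end{proof}
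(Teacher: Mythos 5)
Your proposal is correct and matches the paper's reasoning exactly: the corollary is an immediate specialization of Lemma \ref{lem:all-perfect-matching-3}, applied with $M_1$ an arbitrary matching and $M_2 = M^*$, and your observation that $|\C|=|\Pos|$ forces every matching to be a bijection with $\cand(M)=\C$ is precisely the fact the lemma's proof relies on. No gaps.
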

\begin{figure} [h]
  \begin{minipage}[b]{0.47\linewidth}
    \centering
    \includegraphics{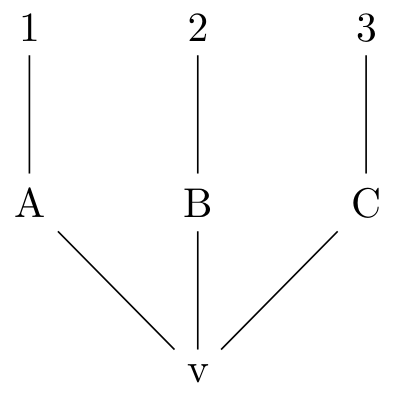}
    \subcaption{The initial matching with the lines representing the cost for each voter $v$ where $\cost(v,M) = d(A,1) + d(B,2) + d(C,3) + d(v, A) + d(v,B) + d(v,C)$}.
    \vspace{4ex}
  \end{minipage}%%
  \hfill
  \begin{minipage}[b]{0.47\linewidth}
    \centering
    \includegraphics{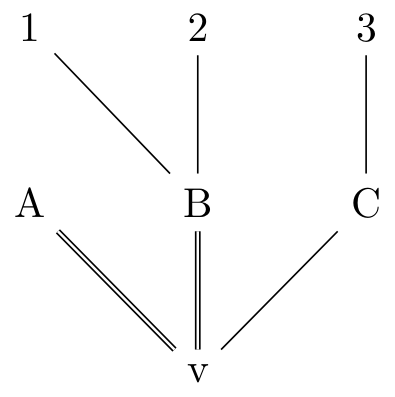}
    \subcaption{In line (\ref{kk3label1}) we use the triangle inequality on $d(A,1) \leq d(A,v) + d(v,B) + d(B,1)$}
    \vspace{4ex}
  \end{minipage} 
  \begin{minipage}[b]{0.47\linewidth}
    \centering
    \includegraphics{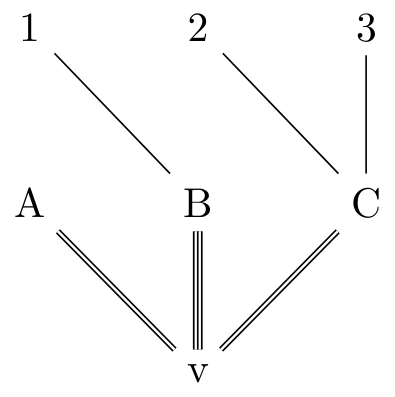}
    \subcaption{We repeat the previous step for each position.}
    \vspace{4ex}
  \end{minipage}%%
  \hfill
  \begin{minipage}[b]{0.47\linewidth}
    \centering
    \includegraphics{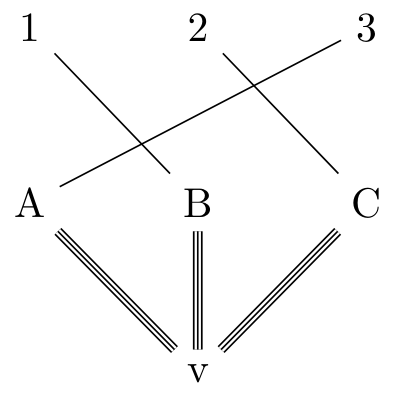}
    \subcaption{We arrive at the desired matching.}
    \vspace{4ex}
  \end{minipage}
  \caption{Transforming $M_1=(A,B,C)$ into $M_2=(B,C,A)$}
  \label{fig:example-k-k-3} 
\end{figure}

After having looked at special cases of $(m,1)$ and \lineups{m}{m}, we will now look at a mechanism for the more general \lineup{m}{\ell} problem. We introduce {\em iterative election}, a natural mechanism for the \lineup{m}{\ell} problem. For a position $p$ reduce the problem to a \lineup{m}{1} problem using the transformation given in Lemma \ref{lem:equal-cost-reduction}, and then use a  mechanism for the standard election problem. If $c$ is selected using the standard election mechanism we assign $M_1(p) = c$. We then remove the selected candidate $c$, and repeat the procedure for each remaining position, removing the selected candidate each time. Importantly, this mechanism is quite general, as we do not assume anything about the mechanism or the specific information set $I$. One could achieve optimal metric distortion for each position by using plurality veto \cite{Kizilkaya2022plurality} for instance, or use a different mechanism that has attractive properties like strategyproofness. Lastly, iterative election is also an online algorithm, meaning that positions can arrive overtime, and as long as we have the appropriate information, we can immediately assign each position at the time of arrival. We will now analyze the case where we utilize a standard election mechanism with distortion $\deltaitr$ and show an upper bound for the iterative election procedure on the \lineup{m}{\ell} problem.

\begin{lem}\label{lem:loser-delta-dist}
    If during the iterative election procedure position $p$ is matched to candidate $c$ utilizing a mechanism with distortion $\deltaitr$, then for any $c'$ that was unassigned at that time we have $\sum_{v \in \V} [d(v,c) + d(c,p)] \leq \deltaitr \sum_{v \in \V} [d(v,c') + d(c',p)]$.
\end{lem}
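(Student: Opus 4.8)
The plan is to reduce, exactly as the iterative election procedure itself does, the single step in which position $p$ is handled to a standard election, and then read off the inequality directly from the meaning of distortion. First I would fix the iteration in which $p$ is processed and let $\C_p \subseteq \C$ be the set of candidates still unassigned at that time; by hypothesis $c' \in \C_p$, and also $c \in \C_p$, since $c$ is the candidate the mechanism selects in this step. Applying Lemma~\ref{lem:equal-cost-reduction} to the \lineup{m}{1} instance with position set $\{p\}$ and candidate set $\C_p$, we obtain a metric $d'$ on $\V \cup \C_p$ under which the standard-election cost of any candidate $x \in \C_p$, namely $\sum_{v \in \V} d'(v,x)$, equals $\sum_{v \in \V}\bigl[d(v,x) + d(x,p)\bigr]$, and under which every voter's ordinal ranking of the candidates in $\C_p$ is exactly the one the iterative election mechanism is given at this step, since $d'(v,x) = d(v,x) + d(x,p)$.

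Next I would invoke the distortion bound. The mechanism run on this standard election has distortion $\deltaitr$, so the candidate $c$ it returns satisfies $\sum_{v \in \V} d'(v,c) \le \deltaitr \cdot \sum_{v \in \V} d'(v,w^*)$, where $w^* \in \C_p$ minimizes $\sum_{v \in \V} d'(v,\cdot)$; here I use that $d'$ is itself a metric consistent with the ordinal information the mechanism was given, so the worst-case guarantee in the definition of distortion applies in particular to $d'$. Since $c' \in \C_p$, we have $\sum_{v \in \V} d'(v,w^*) \le \sum_{v \in \V} d'(v,c')$, and substituting $d'(v,x) = d(v,x) + d(x,p)$ into both sides yields $\sum_{v \in \V}\bigl[d(v,c) + d(c,p)\bigr] \le \deltaitr \sum_{v \in \V}\bigl[d(v,c') + d(c',p)\bigr]$, as desired.

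I do not expect a genuine mathematical obstacle here; the statement is essentially a restatement of what ``distortion $\deltaitr$'' means, transported through Lemma~\ref{lem:equal-cost-reduction}. The point that needs care is that the reduction must be applied to the restricted candidate set $\C_p$ rather than to all of $\C$, and that the ordinal preferences in the reduced instance coincide with those actually fed to the mechanism at this iteration; both follow from the identity $d'(v,x) = d(v,x) + d(x,p)$ in Lemma~\ref{lem:equal-cost-reduction}. A secondary point worth stating explicitly is that the bound compares $c$ against \emph{every} still-available candidate $c'$, not just the cost-minimizing one — this is immediate because the minimizer $w^*$ over $\C_p$ has cost at most that of any $c' \in \C_p$.
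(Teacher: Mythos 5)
Your proposal is correct and follows essentially the same route as the paper's proof: apply the reduction of Lemma~\ref{lem:equal-cost-reduction} with $d'(v,x) = d(v,x) + d(x,p)$ on the still-unassigned candidates and then invoke the distortion guarantee of the base mechanism. The only difference is that you spell out the intermediate comparison with the standard-election optimum $w^*$, which the paper leaves implicit by bounding directly against $c'$.
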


\begin{proof}
    By Lemma \ref{lem:1-pos-k-cand-3} we have $d'(v,c) = d(v,c) + d(c,p)$, in addition we know that since the mechanism achieves a distortion of $\delta_{itr}$ then $\sum_{v \in \V} d'(v,c) \leq \deltaitr \sum_{v \in \V} d'(v,c')$. Thus $\sum_{v \in \V} \left[d(v,c) + d(c,p) \right] \leq \deltaitr \sum_{v \in \V} [d(v,c') + d(c',p)]$ as desired.
\end{proof}

\begin{lem}\label{lem:three-match-candate-set}
    Let $S \subseteq \C$ be any set of candidates such that $\abs{S} = \abs{\Pos}$, and let $M_1$ be the matching returned by iterative election with mechanism of distortion $\deltaitr$. Then there exists a matching $M_2$ with $\cand(M_2) = S$ such that $\cost(M_1) \leq \deltaitr \cost(M_2)$.
\end{lem}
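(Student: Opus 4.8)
The plan is to leverage the per-step guarantee of Lemma~\ref{lem:loser-delta-dist} against a judiciously chosen competitor matching. Let $p_1, p_2, \ldots, p_\ell$ be the positions in the order in which iterative election processes them, and write $c_j = M_1(p_j)$. Since the mechanism removes the selected candidate after each step, at the moment $p_j$ is handled exactly the candidates $c_1, \ldots, c_{j-1}$ have been assigned, so every candidate in $S \setminus \{c_1, \ldots, c_{j-1}\}$ is still unassigned and was therefore eligible to be picked for $p_j$. The central step is to build a bijection $\phi \colon \{p_1, \ldots, p_\ell\} \to S$ satisfying $\phi(p_j) \notin \{c_1, \ldots, c_{j-1}\}$ for every $j$: informally, $\phi$ hands each position a candidate of $S$ that was still available when that position was processed.

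To produce $\phi$, I would run a greedy assignment over the positions in reverse order $p_\ell, p_{\ell-1}, \ldots, p_1$, choosing $\phi(p_j)$ to be any element of $S$ that lies neither in $\{c_1, \ldots, c_{j-1}\}$ (a set of size $j-1$) nor among the $\ell - j$ already-committed values $\phi(p_{j+1}), \ldots, \phi(p_\ell)$. At most $(j-1) + (\ell - j) = \ell - 1 < \ell = |S|$ elements of $S$ are forbidden, so a valid choice always exists; avoiding previously used values makes $\phi$ injective, hence (equal cardinalities) a bijection, and the required constraint $\phi(p_j) \notin \{c_1,\ldots,c_{j-1}\}$ is enforced at each step. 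Equivalently, one can invoke Hall's theorem on the bipartite graph joining $p_j$ to $S \setminus \{c_1,\ldots,c_{j-1}\}$: any $k$ of the positions have a smallest index at most $\ell - k + 1$, so the union of their neighborhoods has size at least $\ell - (\ell - k + 1) + 1 = k$.

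Define $M_2$ by $M_2(p_j) = \phi(p_j)$; since $\phi$ ranges bijectively over $S$ we get $\cand(M_2) = S$. For each $j$, the candidate $\phi(p_j) \in S \setminus \{c_1,\ldots,c_{j-1}\}$ was unassigned when $p_j$ was matched to $c_j$, so Lemma~\ref{lem:loser-delta-dist} gives
\[
\sum_{v \in \V}\bigl[d(v,c_j) + d(c_j,p_j)\bigr] \;\le\; \deltaitr \sum_{v \in \V}\bigl[d(v,\phi(p_j)) + d(\phi(p_j),p_j)\bigr].
\]
Summing over $j = 1, \ldots, \ell$ and noting that $p_1,\ldots,p_\ell$ enumerate all of $\Pos$, the left-hand side is $\cost(M_1)$ and the right-hand side is $\deltaitr\,\cost(M_2)$, which is the claim.

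I expect the one genuine obstacle to be the construction of $\phi$ with the "still-available" constraint — essentially a systems-of-distinct-representatives / Hall-type argument — plus the minor bookkeeping that $c_j$ itself is a legitimate competitor (so the case $\phi(p_j) = c_j$ is harmless, using only $\deltaitr \ge 1$). Once $\phi$ is in hand, the rest is a direct summation of the per-position inequalities from Lemma~\ref{lem:loser-delta-dist}.
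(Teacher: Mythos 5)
Your proof is correct, and its engine is the same as the paper's: apply Lemma~\ref{lem:loser-delta-dist} position by position against a competitor matching onto $S$ whose assigned candidate was still unassigned when that position was processed by iterative election, then sum over positions. The only real difference is how the competitor matching is obtained. The paper's construction is explicit and needs no Hall-type or greedy argument: it sets $S_1=\cand(M_1)\setminus S$, $S_2=\cand(M_1)\cap S$, $S_3=S\setminus\cand(M_1)$, keeps $M_2(p)=M_1(p)$ whenever $M_1(p)\in S_2$ (where the per-position inequality is trivial since $\deltaitr\geq 1$), and reroutes the remaining positions via an arbitrary bijection $f\colon S_1\to S_3$; the key observation is that candidates in $S_3$ are never selected by iterative election, so they are unassigned at \emph{every} step and availability never constrains the choice of $f$. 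Your reverse-order greedy assignment (equivalently, Hall's theorem) proves existence of such a matching without exploiting that observation; it is valid — the counting $(j-1)+(\ell-j)=\ell-1<\abs{S}$ and the Hall condition both check out, and you correctly note that $\phi(p_j)=c_j$ or $\phi(p_j)$ being a later-selected candidate are both harmless — but it is heavier machinery than needed, since the only candidates of $S$ you ever require outside $\cand(M_1)$ are available at all times. What your route buys is a slightly more robust existence argument (it would survive settings where availability were a genuinely time-dependent constraint); what the paper's buys is a concrete $M_2$ whose structure (fixing $S_2$, never swapping within $S_3$) is reused later in Lemma~\ref{lem:match-properties} and Theorem~\ref{thm:gen-alpha-beta} to sharpen the bound from $3\deltaitr$ to $2\deltaitr+1$.
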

\begin{proof}
    Let $S_1 = \cand(M_1) \setminus S$, $S_2 = \cand(M_1) \cap S$, and $S_3 = S \setminus \cand(M_1)$. Since $\abs{S_1} = \abs{S_3}$ then there exits a one-to-one function $f:S_1 \to S_3$. We can then conclude that the cost of the solution is:
    \begin{align}
        \cost(M_1) &= \sum_{v \in \V} \sum_{p\in \Pos} \left[d(v, M_1(p)) + d(M_1(p),p)\right]\\
        &= \sum_{v \in \V} \left(\sum_{p\in \Pos , M_1(p) \in S_2} \left[d(v, M_1(p)) + d(M_1(p),p)\right] + \sum_{p\in \Pos , M_1(p) \in S_1} \left[d(v, M_1(p)) + d(M_1(p),p)\right]\right) \label{matching-m-n-3-canidate-set-split1}
    \end{align}
    Now we construct $M_2(p) = \begin{cases}
        M_1(p) & \text{if } M_1(p) \in S_2\\
        f(M_1(p)) & \text{if } M_1(p) \in S_1
    \end{cases}$. Note that this is a valid matching with $\cand(M_2) = S$. We can see that $M_2$ is a valid matching with $\cand(M_2) = S$ since for any $p$ such that $M_1(p) \in S_2$ then $M_2(p) \in S_2$; in addition if $M_1(p) \in S_1$ then $f(M_1(p)) \in S_3$, so we can conclude $\cand(M_2) = S$ and since both $f$ and $M_1$ are one-to-one then so is $M_2$.  By Lemma \ref{lem:loser-delta-dist} we have that $\sum_{v \in \V} d(v, M_1(p)) + d(M_1(p), p) \leq \delta_{itr} \sum_{v \in \V} d(v, f(M_1(p))) + d(f(M_1(p)), p)$ for all $M_1(p) \in S_1$. Thus we can reduce line (\ref{matching-m-n-3-canidate-set-split1}) above to

    \begin{align*}
         \cost(M_1) &\leq  \sum_{v \in \V} \left( \sum_{p\in \Pos , M_2(p) \in S_2} \left[d(v, M_2(p)) + d(M_2(p),p)\right] + \delta_{itr}\sum_{p\in \Pos , M_2(p) \in S_3} \left[d(v, M_2(p)) + d(M_2(p),p)\right] \right)\\
         &\leq \delta_{itr}\cost(M_2)
    \end{align*}
    as desired.
\end{proof}
\begin{thm}\label{thm:3-delta}
    Let $M_1$ be the matching returned by the iterative election procedure and $M^*$ be the optimal matching minimizing the social cost. Then $\cost(M_1) \leq 3\delta_{itr}\cost(M^*)$.
\end{thm}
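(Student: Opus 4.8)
The plan is to chain the two lemmas already established. First I would apply Lemma \ref{lem:three-match-candate-set} with the candidate set $S = \cand(M^*)$. Since $M^*$ is an injective matching from $\Pos$ to $\C$ we have $\abs{\cand(M^*)} = \abs{\Pos}$, so the hypothesis of the lemma is met, and it yields a matching $M_2$ with $\cand(M_2) = \cand(M^*)$ and $\cost(M_1) \le \delta_{itr}\,\cost(M_2)$. This step absorbs, up to the factor $\delta_{itr}$, all the loss coming from iterative election possibly selecting the ``wrong'' set of candidates: we may now pretend the chosen candidates are exactly those used by the optimum, at the price of $\delta_{itr}$.

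Next I would compare $M_2$ with $M^*$. These two matchings use the same candidates but may assign them to positions differently, so the only discrepancy is in the position-fit terms. Here the charging argument from the proof of Lemma \ref{lem:all-perfect-matching-3} applies essentially verbatim: that proof never uses $\abs{\C} = \abs{\Pos}$, only that the two matchings share a candidate set, which is exactly what guarantees $\cost_V(M_2) = \cost_V(M^*)$. Concretely, for each voter $v$ and position $p$ I would bound $d(M_2(p),p) \le d(M_2(p),v) + d(v, M^*(p)) + d(M^*(p),p)$ by the triangle inequality, sum over all $v$ and $p$, and then use $\cost_V(M_2) = \cost_V(M^*)$ to replace the $M_2$-to-voter distances by $M^*$-to-voter distances, arriving at $\cost(M_2) \le 3\,\cost_V(M^*) + \cost_P(M^*) \le 3\,\cost(M^*)$.

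Combining the two displayed inequalities gives $\cost(M_1) \le \delta_{itr}\,\cost(M_2) \le 3\delta_{itr}\,\cost(M^*)$, which is the claim. The main point to be careful about — and really the only subtlety — is this reuse of Lemma \ref{lem:all-perfect-matching-3} in the regime $\ell < m$, where its stated hypothesis $\abs{\C} = \abs{\Pos}$ no longer holds; but because $M_2$ was constructed precisely so that $\cand(M_2) = \cand(M^*)$, the identity $\cost_V(M_2) = \cost_V(M^*)$ still holds and the triangle-inequality charging goes through without change. Everything else is routine arithmetic.
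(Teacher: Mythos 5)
Your proposal is correct and follows essentially the same route as the paper: apply Lemma \ref{lem:three-match-candate-set} with $S=\cand(M^*)$ to obtain $M_2$ with $\cost(M_1)\leq \delta_{itr}\cost(M_2)$, then bound $\cost(M_2)\leq 3\cost(M^*)$ via the argument of Lemma \ref{lem:all-perfect-matching-3}. In fact you are slightly more careful than the paper, which cites Lemma \ref{lem:all-perfect-matching-3} directly even though its stated hypothesis $\abs{\C}=\abs{\Pos}$ need not hold; your observation that the charging argument only needs $\cand(M_2)=\cand(M^*)$ (so that $\cost_V(M_2)=\cost_V(M^*)$) is exactly the justification required.
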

\begin{proof}
    Let $M_1$ and $M^*$ be defined as above and $M_2$ be the matching given by Lemma \ref{lem:three-match-candate-set} for $S=\cand({M^*})$. Then,
    \begin{align}
        \cost(M_1) &\leq \delta_{itr}\cost(M_2)\\
        &\leq 3\delta_{itr} \cost(M^*).
    \end{align}
    Where the first line follows from Lemma \ref{lem:three-match-candate-set}, and the second line comes from Lemma \ref{lem:all-perfect-matching-3}.
\end{proof}

We know that plurality veto gives the best possible upper bound on the distortion of standard elections; thus if we use it as the base mechanism we can obtain the following corollary to Theorem \ref{thm:3-delta}.

\begin{cor} \label{cor:veto3}
Iterative election returns a matching with distortion at most 9 if we use plurality veto as the base mechanism.
\end{cor}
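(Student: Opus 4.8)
The plan is to simply instantiate Theorem~\ref{thm:3-delta} with the base mechanism being plurality veto and $\delta_{itr} = 3$. First I would recall that by \citet{gkatzelis2020resolving, Kizilkaya2022plurality}, plurality veto is a deterministic mechanism achieving distortion exactly $3$ for the standard election. Next I would check that this distortion guarantee carries over unchanged when plurality veto is used as the base mechanism inside iterative election: each step of iterative election reduces the current single-position sub-instance to a standard election via the transformation of Lemma~\ref{lem:equal-cost-reduction} (equivalently, via Lemma~\ref{lem:equal-dist}), and that reduction preserves both the voter preferences and the cost of every candidate exactly, so running plurality veto on the reduced instance yields $\deltaitr = 3$ in the sense of Lemma~\ref{lem:loser-delta-dist}.

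With $\deltaitr = 3$ in hand, I would then just apply Theorem~\ref{thm:3-delta}: if $M_1$ is the matching returned by iterative election with plurality veto as the base mechanism and $M^*$ is the cost-minimizing matching, then $\cost(M_1) \le 3\deltaitr \cost(M^*) = 3 \cdot 3 \cdot \cost(M^*) = 9\,\cost(M^*)$. Since this holds for every metric consistent with the known ordinal voter preferences, the distortion of the mechanism is at most $9$.

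There is essentially no obstacle here — the corollary is a one-line substitution. The only point requiring a sentence of care is the observation that the distortion bound for plurality veto on standard elections transfers verbatim to the per-position subproblems arising in iterative election, which is exactly what Lemma~\ref{lem:equal-dist} and Lemma~\ref{lem:loser-delta-dist} were set up to provide. Everything else is immediate from Theorem~\ref{thm:3-delta}.
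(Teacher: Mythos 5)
Your proposal is correct and matches the paper's own argument: the paper likewise sets $\deltaitr = 3$ using plurality veto's distortion bound from \citet{Kizilkaya2022plurality} and applies Theorem~\ref{thm:3-delta} to get $3\deltaitr = 9$. Your extra remark that Lemmas~\ref{lem:equal-cost-reduction}, \ref{lem:equal-dist}, and \ref{lem:loser-delta-dist} justify transferring the bound to each per-position subproblem is exactly the machinery the paper relies on implicitly.
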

\begin{proof}
    From \citet{Kizilkaya2022plurality} plurality veto gives at most a distortion of $3$ so we can set $\delta_{itr} = 3$ and get a distortion of $9$.
\end{proof}
However, with a far more careful analysis, we can further improve the distortion factor, which results in the following theorem. 
\begin{thm} \label{thm:2b1-dist-knowing-preferences}
    Let $M_1$ be the matching returned by the iterative election procedure using a mechanism with distortion $\delta_{itr}$, and $M^*$ be the optimum matching. Then $\cost(M_1) \leq (2\delta_{itr} + 1)\cost(M^*)$.
\end{thm}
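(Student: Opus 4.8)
The plan is to compare $M_1$ with the optimum $M^*$ directly, organized by the combinatorial structure of how candidates are shuffled between the two matchings. First I would form the directed graph $G$ on vertex set $\cand(M_1)\cup\cand(M^*)$ with one arc $\big(M_1(p),M^*(p)\big)$ for every position $p\in\Pos$. Since $M_1$ and $M^*$ are injective, every vertex of $G$ has in-degree and out-degree at most $1$, so $G$ decomposes into vertex-disjoint directed paths and cycles whose arcs partition $\Pos$. A vertex has out-degree $1$ exactly when it lies in $\cand(M_1)$ and in-degree $1$ exactly when it lies in $\cand(M^*)$, so every cycle uses only candidates of $\cand(M_1)\cap\cand(M^*)$, and every maximal path $c^0\xrightarrow{p_0}c^1\xrightarrow{p_1}\cdots\xrightarrow{p_{k-1}}c^k$ has $c^0\in\cand(M_1)\setminus\cand(M^*)$, $c^k\in\cand(M^*)\setminus\cand(M_1)$, all interior candidates in the intersection, and satisfies $M_1(p_j)=c^j$, $M^*(p_j)=c^{j+1}$. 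Because $\cost(M_1)$ and $\cost(M^*)$ both split into a sum of per-position contributions, it suffices to prove $\cost(M_1)\le(2\deltaitr+1)\cost(M^*)$ restricted to the positions of a single cycle and to those of a single path, and then add over all components.

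For a cycle I would use the triangle inequality $d(c^j,p_j)\le d(c^j,v)+d(v,c^{j+1})+d(c^{j+1},p_j)$ for every voter $v$, sum over $v$ and $j$, and use that $\sum_{j,v}d(v,c^{j+1})=\sum_{j,v}d(v,c^j)$ by cyclic reindexing; this bounds the $M_1$-cost on the cycle by $3$ times its voter-part plus the $M^*$ position-part, hence by $3$ times the $M^*$-cost on the cycle, which is within the claimed factor since $\deltaitr\ge1$. (This is essentially Lemma~\ref{lem:all-perfect-matching-3} restricted to the cycle.) For a path the new ingredient is Lemma~\ref{lem:loser-delta-dist}: since $c^k\in\cand(M^*)\setminus\cand(M_1)$ is never selected by $M_1$, it was unassigned when iterative election processed $p_0$ and chose $c^0$, giving $\sum_v[d(v,c^0)+d(c^0,p_0)]\le\deltaitr\sum_v[d(v,c^k)+d(c^k,p_0)]$. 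When $k\ge2$ I then route $c^k$ to $p_0$ through a voter and through $M^*(p_0)=c^1$ via $d(c^k,p_0)\le d(c^k,v)+d(v,c^1)+d(c^1,p_0)$; when $k=1$ no routing is needed since $c^k=c^1=M^*(p_0)$. For every interior position $p_j$ ($1\le j\le k-1$) I would apply $d(c^j,p_j)\le d(c^j,v)+d(v,c^{j+1})+d(c^{j+1},p_j)$.

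Summing all these bounds and collecting terms — writing $X_j=\sum_v d(v,c^j)$ and $Y_j=\sum_v d(M^*(p_j),p_j)$ — the $M_1$-cost on the path is at most $(\deltaitr+2)X_1+3\sum_{2\le j\le k-1}X_j+(2\deltaitr+1)X_k+\deltaitr Y_0+\sum_{1\le j\le k-1}Y_j$, whereas the $M^*$-cost on the path equals $\sum_{1\le j\le k}X_j+\sum_{0\le j\le k-1}Y_j$. Since $\deltaitr\ge1$, every coefficient on the left is at most $2\deltaitr+1$, which gives the path bound; adding the path and cycle bounds over all components of $G$ yields $\cost(M_1)\le(2\deltaitr+1)\cost(M^*)$.

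The delicate point, and the step I expect to be the main obstacle, is the bookkeeping on paths, because the factor $2\deltaitr+1$ is exactly tight for this argument: the voter-cost $X_k=\sum_v d(v,M^*(p_{k-1}))$ of the path's last candidate picks up coefficient $2\deltaitr$ from the loser-lemma step at $p_0$ and an extra $1$ from the triangle step at $p_{k-1}$. One therefore has to route each "extra'' candidate of $M_1$ to its replacement along precisely the arc it sits on and handle the length-one path without the redundant triangle inequality; the cruder route used for the weaker bound of Theorem~\ref{thm:3-delta} (relabel $\cand(M_1)$ to $\cand(M^*)$ and then apply Lemma~\ref{lem:all-perfect-matching-3}) loses an additional multiplicative $\deltaitr$ on the position-costs and only yields $3\deltaitr$.
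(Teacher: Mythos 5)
Your proof is correct, but it takes a genuinely different route from the paper. The paper derives this theorem as a corollary of the more general Theorem~\ref{thm:gen-alpha-beta}: it first builds an intermediate matching $M_2$ with $\cand(M_2)=\cand(M^*)$ via the careful construction of Lemma~\ref{lem:match-properties} (ensuring no two candidates of $S_3$ ever need to swap), and then runs a global per-candidate charging argument over the sets $S_2$, $S_3^*$, $S_3'$ using the bijection $f(c)=M^*(\inv{M_2}(c))$, counting how many times each term $d(v,c)$ and $d(c,\inv{M^*}(c))$ can appear. You instead decompose the multigraph of arcs $(M_1(p),M^*(p))$ into alternating paths and cycles and do purely local accounting: cycles are handled by the Lemma~\ref{lem:all-perfect-matching-3}-style triangle-inequality/reindexing argument, and each path is handled by one application of Lemma~\ref{lem:loser-delta-dist} at its initial position (valid because the path's terminal candidate lies in $\cand(M^*)\setminus\cand(M_1)$ and hence is available at every round of iterative election) plus triangle inequalities along the interior arcs; your coefficient bookkeeping $(\deltaitr+2,\,3,\,2\deltaitr+1,\,\deltaitr,\,1)$ checks out, and all coefficients are at most $2\deltaitr+1$ since $\deltaitr\ge 1$. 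What each approach buys: yours is more elementary and transparent, localizes exactly where the factor $2\deltaitr+1$ arises (the path's last candidate absorbing $2\deltaitr$ from the loser step plus $1$ from the final triangle step), and avoids the intermediate matching entirely; the paper's route is engineered for reuse --- Theorem~\ref{thm:gen-alpha-beta} with asymmetric coefficients $\alpha,\beta$ (requiring only $\alpha+\beta\ge 2$) is applied again with $\alpha=1,\beta=3$ to get Theorem~\ref{thm:serial_dictatorship_5}. Note that your bookkeeping, transcribed verbatim to the asymmetric setting, gives the start-of-path voter term a coefficient $\beta+2$, which is bounded by $\alpha+\beta+1$ only when $\alpha\ge 1$; this covers both applications in the paper but is slightly less general than the stated hypothesis of Theorem~\ref{thm:gen-alpha-beta}, so your argument proves the theorem at hand (and the serial-dictatorship bound) but would need a small tweak to replace the general theorem in full.
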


\begin{proof}
    The proof follows from the more general Theorem \ref{thm:gen-alpha-beta}. Let $S_1$ and $S_3$ be defined as in the statement of Theorem \ref{thm:gen-alpha-beta}. The set $S_1$ consists of all candidates that were selected to be in the candidate set of $M_1$ but are not in the candidate set of $M^*$, while $S_3$ is defined as the set of candidates that were not selected to be in the candidate set of $M_1$ but are in the candidate set of $M_1$. Thus by Lemma \ref{lem:loser-delta-dist} we have that $\sum_v \left[d(v, c) + d(c,p)\right] \leq \delta_{itr}\sum_v \left[d(v, c') + d(c',p)\right]$ for $c \in S_1$ and $c' \in S_3$ so $\alpha = \beta = \delta_{itr}$. Thus we have the distortion is bounded above by $2\deltaitr + 1$.
\end{proof}

So now we can see that using plurality veto is actually better than $9$. In fact, we can improve the upper bound on the distortion to $7$. 

\begin{cor}
    Iterative election with plurality veto gives a distortion of at most 7.
\end{cor}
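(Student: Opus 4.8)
The plan is to invoke Theorem \ref{thm:2b1-dist-knowing-preferences} with the specific choice of base mechanism being plurality veto. First I would recall that plurality veto, as established in \citet{Kizilkaya2022plurality}, is a deterministic mechanism for standard elections achieving metric distortion at most $3$; since iterative election reduces each position to a \lineup{m}{1} instance, which by Lemma \ref{lem:equal-dist} is equivalent to a standard election preserving voter preferences, we may run plurality veto at each step and thereby set $\delta_{itr} = 3$. Then I would simply substitute $\delta_{itr} = 3$ into the bound $\cost(M_1) \leq (2\delta_{itr} + 1)\cost(M^*)$ from Theorem \ref{thm:2b1-dist-knowing-preferences}, yielding $\cost(M_1) \leq 7\,\cost(M^*)$, which is exactly the claimed distortion bound.

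There is no genuine obstacle here: all the substantive work — the careful charging argument combining the loss from choosing the wrong matching (Lemma \ref{lem:all-perfect-matching-3}) with the loss from choosing the wrong candidate set (Lemmas \ref{lem:loser-delta-dist} and \ref{lem:three-match-candate-set}), and its refinement in Theorem \ref{thm:2b1-dist-knowing-preferences} via the general $\alpha,\beta$ analysis of Theorem \ref{thm:gen-alpha-beta} — has already been carried out. The only thing to check is that plurality veto is a legitimate base mechanism for the reduction, i.e., that Lemma \ref{lem:equal-cost-reduction} applies so that the distortion guarantee transfers unchanged; this is immediate since the reduction leaves both the voter preference orders and the realized costs intact. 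Hence the corollary follows in one line from Theorem \ref{thm:2b1-dist-knowing-preferences}.
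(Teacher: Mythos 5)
Your proposal is correct and matches the paper's own reasoning exactly: plurality veto has distortion $3$ for standard elections, so Theorem \ref{thm:2b1-dist-knowing-preferences} with $\delta_{itr}=3$ immediately gives $2\cdot 3+1=7$. No further justification is needed beyond what you wrote.
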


To finish the proof of \ref{thm:2b1-dist-knowing-preferences} we will prove a more much general statement that will also be crucial to proving \ref{thm:serial_dictatorship_5}. To do this, we will  have to be significantly more careful about how we transform $M_1$ into $M_2$, so now we will construct $M_2$ in the following lemma and prove some properties of $M_2$.

\begin{lem}\label{lem:match-properties}
    Let $M_1$ be the matching returned by the iterative election procedure, and $M^*$ be the matching with minimum social cost. Let $S_1 = \cand(M_1) \setminus  \cand(M^*)$, $S_2 = \cand(M_1) \cap  \cand(M^*)$, and $S_3 =  \cand(M^*) \setminus \cand(M_1)$. Define the matching $M_2$ as follows:
    \begin{enumerate}
        \item For all positions $p$ with $M_1(p) \in S_2$ assign $M_2(p) = M_1(p)$. \label{step:step1}
        \item For all positions $p$ with $M_1(p) \in S_1$ and $M^*(p) \in S_3$ assign $M_2(p) = M^*(p)$. \label{step:step2}
        \item \label{step:step3} For any positions $p$ that is not assigned by step \ref{step:step1} or step \ref{step:step2} assign $M_2(p) = c$ for some arbitrary unassigned candidate $c \in S_3$. 
    \end{enumerate}
    Define the function $f(c) = M^*(\inv{M_2}(c))$; then we have that
    \begin{enumerate}
        \item $\cand(M_2) = S_2 \cup S_3 = \cand(M^*)$. \label{prop:prop1}
        \item The function $f \colon {S_2 \cup S_3} \to {S_2 \cup S_3}$ is bijective. \label{prop:prop2}
        \item If $c \in S_3$ then $f(c) \in S_2$ or $f(c) = c$. \label{prop:prop3}
    \end{enumerate}

    % Let $M_1$ be the matching returned by the iterative election procedure, and $M^*$ be the matching with minimum social cost. Let $S_1 = \cand(M_1) \setminus  \cand(M^*)$, $S_2 = \cand(M_1) \cap  \cand(M^*)$, $S_3 =  \cand(M^*) \setminus \cand(M_1)$, and define $f(c)$ as $f(c) = M^*(\inv{M_2}(c))$. \elliot{You cannot define $f$ here, since $M_2$ is not defined yet.} Then there exists a matching $M_2$ with the following properties:
    % \begin{enumerate}
    %     \item $\cand(M_2) = S_2 \cup S_3 = \cand(M^*)$. \label{prop:prop1}
    %     \item For all $p$ with $M_1(p) \in S_2$, we have $M_2(p) = M_1(p)$. \label{prop:prop2}
    %     \item For all $p$ with $M_1(p) \in S_1$ and $M^*(p) \in S_3$, we have $M_2(p) = M^*(p)$. \label{prop:prop3}
    %     \item Let $S_3' \subseteq S_3$ be the set of unassigned candidates $c \in S_3$ then $f(c) \in S_2$, moreover $f$ is bijective.
    %     \label{prop:prop4}
    % \end{enumerate} 
    % \elliot{What does "unassigned candidates" mean here? This lemma still needs some work. Maybe define the matching $M_2$ constructively by giving an algorithm for how to construct it, and then the lemma can consist of properties that hold true about it. Just give concrete properties that we need: $f$ maps $S_3'$ to $S_2$ only, and $S_3^*$ to itself, but $S_2$ can map to either $S_3'$ or $S_2$, and it is a bijection from $cand(M^*)$ to $cand(M^*)$.}
    % the above properties imply for $c \in S_3$ then $f(c) \in S_2$ or $f(c) = c$.
\end{lem}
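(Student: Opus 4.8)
The plan is to verify that $M_2$ is a well-defined valid matching, and then check the three listed properties essentially by unwinding the definitions of Steps \ref{step:step1}--\ref{step:step3}. First I would argue $M_2$ is well-defined: Step \ref{step:step1} uses each candidate of $S_2$ exactly once (since $M_1$ restricted to the positions $p$ with $M_1(p) \in S_2$ is a bijection onto $S_2$), Step \ref{step:step2} assigns distinct candidates of $S_3$ to the remaining positions whose $M^*$-image already lies in $S_3$, and Step \ref{step:step3} fills in the leftover positions with the leftover candidates of $S_3$. The key counting fact is $|S_1| = |S_3|$ (both equal $|\cand(M_1)| - |S_2| = |\cand(M^*)| - |S_2|$ since $|\cand(M_1)| = |\cand(M^*)| = |\Pos|$), so after Steps \ref{step:step1} and \ref{step:step2} the number of still-unassigned positions equals the number of still-unassigned candidates in $S_3$, making Step \ref{step:step3} consistent. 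Injectivity of $M_2$ then follows because the three steps use disjoint candidate pools ($S_2$ in Step \ref{step:step1}, $S_3$ in Steps \ref{step:step2} and \ref{step:step3}) and are individually injective; this also immediately gives property \ref{prop:prop1}, $\cand(M_2) = S_2 \cup S_3 = \cand(M^*)$.

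For property \ref{prop:prop2}, I would note that $\inv{M_2} \colon S_2 \cup S_3 \to \Pos$ is a bijection onto the set of all positions (since $M_2$ is a bijection from $\Pos$ onto $S_2 \cup S_3$), and $M^* \colon \Pos \to S_2 \cup S_3$ is a bijection onto $\cand(M^*) = S_2 \cup S_3$; hence $f = M^* \circ \inv{M_2}$ is a composition of bijections, so it is bijective.

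Property \ref{prop:prop3} is where the careful construction of $M_2$ pays off, and is the one step requiring a genuine (if short) case analysis. Fix $c \in S_3$ and let $p = \inv{M_2}(c)$, so $M_2(p) = c$ and $f(c) = M^*(p)$. Since $c \in S_3$, position $p$ was assigned in Step \ref{step:step2} or Step \ref{step:step3}. If $p$ was assigned in Step \ref{step:step2}, then by definition $M_2(p) = M^*(p)$, i.e.\ $f(c) = c$. If $p$ was assigned in Step \ref{step:step3}, then $p$ was \emph{not} handled by Step \ref{step:step1} (so $M_1(p) \notin S_2$, forcing $M_1(p) \in S_1$) and was \emph{not} handled by Step \ref{step:step2} (so, given $M_1(p) \in S_1$, we must have $M^*(p) \notin S_3$); since $M^*(p) \in \cand(M^*) = S_2 \cup S_3$, this forces $M^*(p) \in S_2$, i.e.\ $f(c) \in S_2$. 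In either case $f(c) = c$ or $f(c) \in S_2$, as claimed.

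The main obstacle is purely bookkeeping: making sure Step \ref{step:step3} is actually executable (the cardinality argument $|S_1| = |S_3|$ together with the fact that Steps \ref{step:step1}--\ref{step:step2} consume matched pairs correctly), and being precise in property \ref{prop:prop3} about which step assigned the relevant position and what that forces about $M_1(p)$ and $M^*(p)$. No metric or distortion reasoning is needed here — Lemma \ref{lem:loser-delta-dist} and the triangle inequality are only invoked later, in the theorem that uses this $M_2$.
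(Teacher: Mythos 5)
Your proposal is correct and follows essentially the same route as the paper's own proof: verify $M_2$ is a valid matching via the cardinality fact $\abs{S_1}=\abs{S_3}$, get bijectivity of $f$ from composing bijections, and establish property \ref{prop:prop3} by casing on whether the position $\inv{M_2}(c)$ was assigned in step \ref{step:step2} or step \ref{step:step3}. If anything, your treatment of the step-\ref{step:step3} case is slightly more explicit than the paper's, since you spell out that $M_1(p)\in S_1$ (forced by $p$ escaping step \ref{step:step1}) is needed before concluding $M^*(p)\notin S_3$ from $p$ escaping step \ref{step:step2}.
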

    \begin{proof}
    First we will show that $M_2$ is indeed a matching and $\cand(M_2) = S_2 \cup S_3 = \cand(M^*)$. Consider some position $p$ then clearly $p$ is only assigned once and is assigned to a candidate in either $S_2$ or $S_3$. A candidate in $S_2$ can only be assigned once since in step \ref{step:step1} we set $M_2(p) = M_1(p) = c$ and $c$ is not assigned in step \ref{step:step2} or \ref{step:step3}. If $c \in S_3$ then either $c$ is assigned only to its optimal positions or otherwise we make sure we only assign $c$ in step \ref{step:step3} to one position. We have that 
    \begin{align*}
        \abs{S_1} &= \abs{\cand(M_1) \setminus \cand(M^*)}\\ &= \abs{\cand(M_1)} - \abs{\cand(M_1) \cap \cand(M^*)}\\ &= \abs{\cand(M^*)} - \abs{\cand(M_1) \cap \cand(M^*)}\\ &= \abs{\cand(M^*) \setminus \cand(M_1)}\\ &= \abs{S_3}
    \end{align*}
    thus the algorithm terminates after each candidate in $S_1$ is swapped with a candidate in $S_3$ with no candidates remaining, so $\cand(M_2) = S_2 \cup S_3 = \cand(M^*)$.

    The function $f = M^*(\inv{M_2}(c))$ is clearly bijective since $M^*$ is a bijection and so is $M_2$ when restricted to the set ${S_2 \cup S_3}$. Now we will show that last property, that for $c \in S_3$ we have $f(c) = c$ or $f(c) \in S_2$. First suppose that $c \in S_3$ and $c$ was matched on step \ref{step:step2} then let $p = \inv{M^*}(c)$ so $M_2(p) = c$. Thus $f(c) = M^*(\inv{M_2}(c)) = M^*(p) = c$. Now suppose $c$ was matched in \ref{step:step3} then let $p = \inv{M_2}(c)$ so $M^*(p) \not \in S_3$ or otherwise $p$ would have been matched on step \ref{step:step2}. Thus we can conclude $f(c) = M^*(\inv{M_2}(c)) = M^*(p) \not \in S_3$ so $M^*(p) \in S_2$. Thus we have completed the proof of property \ref{prop:prop3}.
    \end{proof}
    
In other words, there exists a matching such that everything in $S_1$ has its assignment changed to the corresponding position in $M^*$ if possible, ensuring that two candidates in $S_3$ with never need to swap positions in these two matchings. Just as in the proof of Lemma \ref{lem:three-match-candate-set}, this allows us to transform matching $M_1$ into a matching with the same candidate set as $M^*$. With the properties shown above, we can then obtain the following theorem. 

\begin{thm} \label{thm:gen-alpha-beta}
     Suppose some mechanism gives a matching $M_1$ and assume $M^*$ is the optimal matching. Let $S_1 = \cand(M_1) \setminus  \cand(M^*)$, $S_2 = \cand(M_1) \cap  \cand(M^*)$, and $S_3 =  \cand(M^*) \setminus \cand(M_1)$. If  $\sum_{v\in \V} (d(v,c) + d(c,\inv{M_1}(c))) \leq \alpha \sum_{v\in \V} d(v,c') + \beta \sum_{v\in \V}d(c',\inv{M_1}(c))$ for every $c \in S_1$ and $c' \in S_3$, and with $\alpha + \beta \geq 2$, then $\cost(M_1) \leq (\alpha+\beta+1)\cost(M^*)$. \label{thm:gen-matching}
\end{thm}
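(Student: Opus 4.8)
The plan is to bound $\cost(M_1)$ by comparing it to $\cost(M_2)$, where $M_2$ is the matching constructed in Lemma~\ref{lem:match-properties}, and then bound $\cost(M_2)$ by $\cost(M^*)$, exploiting the special structure of $M_2$ (in particular property~\ref{prop:prop3}, that candidates in $S_3$ never get ``swapped with each other'' when going from $M_2$ to $M^*$). First I would split $\cost(M_1)$ over positions according to whether $M_1(p) \in S_2$ or $M_1(p) \in S_1$. On the $S_2$-positions, $M_2(p) = M_1(p)$ by construction, so those terms are already $\cost$-terms of $M_2$ and need no work. On the $S_1$-positions, I would apply the hypothesis: for each such $p$, writing $c = M_1(p) \in S_1$ and $c' = M_2(p) \in S_3$, the inequality $\sum_v (d(v,c) + d(c,p)) \leq \alpha \sum_v d(v,c') + \beta \sum_v d(c',p')$ (with $p' = \inv{M_1}(c) = p$) lets me replace the $M_1$-cost of position $p$ by a combination of the voter-distances to $c'$ and the position-distances from $c'$. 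Summing, this yields something like $\cost(M_1) \leq \cost_V^{S_2}(M_2) + \cost_P^{S_2}(M_2) + \alpha\,\cost_V^{S_3}(M_2) + \beta\,\cost_P^{S_3}(M_2)$, i.e. a ``weighted'' version of $\cost(M_2)$ where the $S_3$-part carries coefficients $\alpha,\beta$ and the $S_2$-part carries coefficient $1$.

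Next I would relate this weighted cost of $M_2$ to $\cost(M^*)$. Here the bijection $f(c) = M^*(\inv{M_2}(c))$ on $S_2 \cup S_3$ is the key device: $M_2$ and $M^*$ have the same candidate set, and for a position $p$ with $M_2(p) = c$ we have $M^*(p) = f(c)$. I would use the triangle inequality to transport the cost of $M_2$ onto the cost of $M^*$ position-by-position: $d(v, c) \leq d(v, f(c)) + d(f(c), p_{\text{something}}) + \dots$, routing through voters as in Lemma~\ref{lem:all-perfect-matching-3}. The payoff of property~\ref{prop:prop3} is that whenever $c \in S_3$, either $f(c) = c$ (so no detour is needed at all for that term) or $f(c) \in S_2$; crucially $f$ never maps $S_3$ into $S_3 \setminus \{c\}$. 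This asymmetry means the ``expensive'' $S_3$-terms (those with coefficients $\alpha, \beta$) either cost nothing extra or get charged to $S_2$-candidates, whose coefficient is only $1$, so I can absorb the blow-up. Combining the two stages, each $\cost_V(M^*)$ contribution should accumulate a factor of roughly $\max(\alpha,\beta) + \text{(something)}$ and each $\cost_P(M^*)$ contribution a factor bounded by $\alpha+\beta+1$; the condition $\alpha + \beta \geq 2$ is exactly what is needed so that the worst case is the $\cost_P$ term, giving the clean bound $\cost(M_1) \leq (\alpha+\beta+1)\cost(M^*)$.

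For Theorem~\ref{thm:2b1-dist-knowing-preferences} one then just sets $\alpha = \beta = \delta_{itr}$ (justified by Lemma~\ref{lem:loser-delta-dist}, which gives $\sum_v [d(v,c)+d(c,p)] \leq \delta_{itr}\sum_v[d(v,c')+d(c',p)]$ for $c$ chosen before $c'$ was removed, hence in particular for $c \in S_1$, $c' \in S_3$), and $\alpha+\beta+1 = 2\delta_{itr}+1$.

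The main obstacle I expect is the careful bookkeeping in the second stage: transporting $\cost(M_2)$ to $\cost(M^*)$ along the permutation $f$ while making sure the triangle-inequality detours attach the extra $d(v,\cdot)$ terms to candidates with the \emph{right} coefficient, and verifying that property~\ref{prop:prop3} really does prevent an $S_3$-to-$S_3$ chain that would force an $\alpha$- or $\beta$-weighted term to pick up a further $\alpha$ or $\beta$ factor. Getting the constant to come out exactly $\alpha+\beta+1$ rather than something larger (like $\alpha+\beta+2$ or $2\alpha+\beta$) will require tracking which terms are voter-distances and which are position-distances at each step, and using $\alpha+\beta \geq 2$ at precisely the point where one compares a bound of the form $3$ (from a pure triangle-inequality detour, as in Lemma~\ref{lem:all-perfect-matching-3}) against $\alpha+\beta+1$.
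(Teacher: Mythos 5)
Your proposal follows essentially the same route as the paper's proof: bound $\cost(M_1)$ by the $(\alpha,\beta)$-weighted cost of the matching $M_2$ from Lemma~\ref{lem:match-properties}, then transport that weighted cost onto $\cost(M^*)$ via the bijection $f(c)=M^*(\inv{M_2}(c))$ and triangle-inequality detours through voters, with property~\ref{prop:prop3} ensuring no $S_3$-to-$S_3$ swaps and the per-candidate accounting (using $\alpha+\beta\geq 2$ against the bare factor $3$) yielding $\alpha+\beta+1$. This matches the paper's argument, including where and why the hypothesis and the condition $\alpha+\beta\geq 2$ are used.
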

\begin{proof}

    % Notice that we can express the cost of the line up election as the sum of cost of matching candidates to positions and matching the voters to candidates. Let $\cost_P(M) = \sum_{v \in \V} \sum_{p \in \Pos} d(M(p),p)$ and $\cost_V(M) = \sum_{v \in \V} \sum_{p \in \Pos} d(v,M(p))$
   
    See Figure \ref{example-3-4-7} for an illustration of the proof below. As we have seen in Theorem \ref{thm:3-delta}, we can get a constant upper bound of $3\deltaitr$. However, by more carefully choosing $M_2$ in the proof for Theorem \ref{thm:3-delta}, we can obtain a better bound. To do this, first we start by defining some useful notation.
    Let $S \subseteq \C$ and $M$ be a matching. Then, $$\cost(M|S) = \sum_{v \in \V} \sum_{p :M(p) \in S} \left[d(v,M(p)) + d(M(p),p)\right],$$ or equivalently
    $$\cost(M|S) = 
    \sum_{v \in \V} \sum_{c \in S \cap \cand(M)} \left[d(v,c) + d(c, \inv{M}(c))\right].
    $$ The values of $\cost_V(M|S)$ and $\cost_P(M|S)$ are defined in the same manner. 
    
    Let $M_2$ be the matching given by Lemma \ref{lem:match-properties} and partition $S_3$ into sets $S_3^* = \{c \in S_3:M^*(c) = M_2(c)\}$ and $S'_3 = S_3\setminus S_3^*$.
    Our cost then is given as follows,
    \begin{align*}
        \cost(M_1) &= \sum_{v \in \V} \sum_{p \in \Pos} d(v, M_1(p)) + d(M_1(p), p)\\
        &= \cost(M_1|S_1) + \cost(M_1|S_2)\\
        &= \cost_P(M_1|S_1) + \cost_V(M_1|S_1) + \cost(M_2|S_2)\\
        &= \sum_{v \in \V} \sum_{c \in S_1} d(c, \inv{M_1}(c)) + \sum_{v \in \V} \sum_{c \in S_1} d(v, c) + \cost(M_2|S_2)\\
        % &\leq \beta \sum_{v \in \V} \sum_{c \in S_3} d(c, \inv{M_2}(c)) + \alpha \sum_{v \in \V} \sum_{c \in S_3} d(v, c) + \cost(M_2|S_2)\\
    \end{align*}

    Now consider the value of $c' = M_2(\inv{M_1}(c))$ for $c \in S_1$. Then $c' \in S_3$ because in either step \ref{step:step2} or step \ref{step:step3} of Lemma \ref{lem:match-properties} we assign $\inv{M_1}(c)$ to some $c' \in S_3$. We also have by our initial assumption in the Theorem statement that $\sum_{v\in \V} (d(v,c) + d(c,\inv{M_1}(c))) \leq \alpha \sum_{v\in \V} d(v,c') + \beta \sum_{v\in \V} d(c',\inv{M_1}(c))$. Thus 
    \begin{align*}
        \cost(M_1) &\leq \beta \sum_{v \in \V} \sum_{c \in S_1} d(M_2(\inv{M_1}(c)), \inv{M_1}(c)) + \alpha \sum_{v \in \V} \sum_{c \in S_1} d(v, M_2(\inv{M_1}(c))) + \cost(M_2|S_2)\\
    %\end{align*}
    %We also note that $\inv{M_1}(c) = \inv{M_2}(M_2(\inv{M_1}(c))) = \inv{M_2}(c')$. So we have
    % \begin{align*}
        %\cost(M_1) 
        &= \beta \sum_{v \in \V} \sum_{c \in S_1} d(M_2(\inv{M_1}(c)), \inv{M_2}(M_2(\inv{M_1}(c)))) + \alpha \sum_{v \in \V} \sum_{c \in S_1} d(v, M_2(\inv{M_1}(c))) + \cost(M_2|S_2).
    \end{align*}
    We also note that $M_2(\inv{M_1})$ is a bijection from $S_1$ to $S_3$. Thus, we can sum over $c' = M_2(\inv{M_1}(c))$ in $S_3$ instead of over $S_1$ and obtain
 
    % The last line follows from the assumption that $\sum_{v\in \V} (d(v,c) + d(c,\inv{M_1}(p))) \leq \alpha \sum_{v\in \V} d(v,c') + \beta \sum_{v\in \V}d(c',\inv{M_1}(c))$ for $c \in S_1$ and $c' \in S_3$. Since by property \ref{prop:prop2} and property \ref{prop:prop3} each candidate in $c \in S_1$ the assignment of positions $p = \inv{M_1}(c)$ is changed to be $M_2(p) \in S_3$ thus for each position assigned to a candidate we can change the mapping by increasing $d(v,c)$ at most a factor of $\alpha$ and increasing $d(c,\inv{M_1}(p)$ by a factor of $\beta$. We can continue on and arrive at
    
    \begin{align*}
        \cost(M_1) &\leq \beta \sum_{v \in \V} \sum_{c \in S_3} d(c, \inv{M_2}(c)) + \alpha \sum_{v \in \V} \sum_{c \in S_3} d(v, c) + \cost(M_2|S_2)\\
        &= \beta \cost_P(M_2|S_3) + \alpha \cost_V(M_2|S_3) + \cost(M_2|S_2)\\
        &= \beta \left(\cost_P(M_2|S_3^*) + \cost_P(M_2|S_3')\right) + \alpha \left(\cost_V(M_2|S_3^*) + \cost_V(M_2|S_3')\right)\\ &\hspace{.5in} + \cost(M_2|S_2)
    \end{align*}
    
    % The second to last line above comes from changing the set of selected candidates $S_1$ to the set of candidates in $S_3$ using the assumption that $\sum_{v\in \V} (d(v,c) + d(c,\inv{M_1}(p))) \leq \alpha \sum_{v\in \V} d(v,c') + \beta \sum_{v\in \V}d(c',\inv{M_1}(c))$ for any $c \in S_1$ and $c' \in S_3$. 
    This step corresponds to going from Figure \ref{fig:thm2b1a} to Figure \ref{fig:thm2b1b}. 
    
    Let us define the %bijective 
    mapping $f$ as $f(c) = M^*(\inv{M_2}(c))$. In other words, $f$ takes candidates and returns the candidate that should be matched to the position $c$ is currently matched to. Thus we can proceed with the analysis below, letting $\gamma = \max\{\alpha, \beta\}$ and observing that $\cost(M_2|S_3^*) = \cost(M^*|S_3^*)$ since both matchings are identical on $S^*_3$. Thus we have,
    
   \begin{align*}
        \cost(M_1) &\leq \gamma\cost(M_2|S_3^*) + \alpha \cost_V(M_2|S_3') + \beta\cost_P(M_2|S_3') + \cost(M_2|S_2)\\
        &= \gamma\cost(M^*|S_3^*) + \alpha \cost_V(M_2|S_3') + \beta\cost_P(M_2|S_3') + \cost(M_2|S_2)\\
        &= \gamma\cost(M^*|S_3^*) + \alpha\sum_{v \in \V} \sum_{c \in S_3'} d(v,c) + \beta \sum_{v \in \V} \sum_{c \in S_3'} d(c,\inv{M_2}(c)) + \cost(M_2|S_2)\\
        &\leq \gamma\cost(M^*|S_3^*) + \alpha\sum_{v \in \V} \sum_{c \in S_3'} d(v,c) + \beta\sum_{v \in \V} \sum_{c \in S_3'} \left[d(c,v) + d(v, f(c)) + d(f(c),\inv{M_2}(c))\right]\\
        &\hspace{.5in}  + \cost(M_2|S_2)\\
        &= \gamma\cost(M^*|S_3^*) + (\alpha + \beta) \sum_{v \in \V} \sum_{c \in S_3'}d(v,c) + \beta \sum_{v \in \V} \sum_{c \in S_3'} \left[d(v, f(c)) + d(f(c), \inv{M_2}(c))\right]\\&\hspace{.5in} + \cost(M_2|S_2)\\
        \end{align*}
        Now we have changed the set of candidates appearing in the above expression to the correct set $\cand(M^*) = S_2 \cup S_3$. So $\cost_V(M_1)$ has been bounded but not $\cost_P(M_1)$ since the assignments are still not necessarily all the same as $M^*$. The previous step corresponds to going from Figure \ref{fig:thm2b1c} to Figure \ref{fig:thm2b1d}. We utilize the fact that $\inv{M^*}(f(c)) = \inv{M^*}(M^*(\inv{M_2}(c))) = \inv{M_2}(c)$ below. Continuing with the proof:
        \begin{align}
            \cost(M_1) &\leq \gamma\cost(M^*|S_3^*) + (\alpha + \beta) \sum_{v \in \V} \sum_{c \in S_3'}d(v,c) + \beta \sum_{v \in \V} \sum_{c \in S_3'} \left[d(v, f(c)) + d(f(c), \inv{M_2}(c))\right]\nonumber \\&\hspace{.5in} + \cost(M_2|S_2)\nonumber \\
            &= \gamma\cost(M^*|S_3^*) + (\alpha + \beta) \sum_{v \in \V} \sum_{c \in S_3'}d(v,c) + \beta \sum_{v \in \V} \sum_{c \in S_3'} \left[d(v, f(c)) + d(f(c), \inv{M^*}(f(c)))\right]\nonumber \\&\hspace{.5in} + \cost(M_2|S_2)\nonumber \\
            &\leq \gamma\cost(M^*|S_3^*) + (\alpha + \beta) \sum_{v \in \V} \sum_{c \in S_3'}d(v,c) + \beta\sum_{v \in \V} \sum_{c \in S_3'} \left[ d(v, f(c)) + d(f(c), \inv{M^*}(f(c)))\right]\nonumber \\
            &\hspace{.5in} + \sum_{v \in \V} \sum_{c \in S_2}\left[d(v,c) + d(c,\inv{M_2}(c)\right]\nonumber \\
            &\leq \gamma\cost(M^*|S_3^*) + (\alpha + \beta) \sum_{v \in \V} \sum_{c \in S_3'}d(v,c)\nonumber \\
            &\hspace{.5in} + \beta\sum_{v \in \V} \sum_{c \in S_3'} \left[ d(v, f(c)) + d(f(c), \inv{M^*}(f(c)))\right]\nonumber \\\nonumber \\
            &\hspace{.5in} + \sum_{v \in \V} \sum_{c \in S_2}\left[ d(v,c) + d(c,v) + d(v, f(c)) + d(f(c), \inv{M_2}(c))\right]\nonumber \\
            &= \gamma\cost(M^*|S_3^*) \label{sum:sum1}\\
            &\hspace{.5in} + (\alpha + \beta) \sum_{v \in \V} \sum_{c \in S_3'}d(v,c) \label{sum:sum2}\\
            &\hspace{.5in} + \beta\sum_{v \in \V} \sum_{c \in S_3'} \left( d(v, f(c)) + d(f(c), \inv{M^*}(f(c)))\right) \label{sum:sum3}\\
            &\hspace{.5in} + \sum_{v \in \V} \sum_{c \in S_2} \left[2d(v,c) + d(v, f(c)) + d(f(c), \inv{M^*}(f(c)))\right]&\hspace{.5in} \label{sum:sum4}
        \end{align}

    To prove the theorem, our goal is to show that the above expression is at most $(\alpha+\beta +1)\cost(M^*)$. Note that $\cost(M^*)$ is simply the sum of $\sum_{v\in\V}[d(v,c)+ d(c,\inv{M^*}(c))]$ for all $c\in S_2\cup S_3$. So to complete the proof, we only need to show that if we consider only the contribution of each candidate $c$ separately to the above expression, that it is at most $(\alpha+\beta+1)$ times larger than $\cost(M^*|c) = \sum_{v\in\V}[d(v,c)+ d(c,\inv{M^*}(c))]$. To do this, we will now look at each candidate individually. 

    First consider $c \in S_3^*$. $c$ occurs only once in $\cost(M^*|S_3^*)$ on line \ref{sum:sum1} and no where else. Thus its contribution to the above sum is at most $\gamma \cost(M^*|c)$.

    Now consider any $c \in S_3'$. By Lemma \ref{lem:match-properties}, it must be that $c=f(c')$ for some $c'\in S_2$, since $f$ is a bijection and since all members of $S_3'$ are mapped to $S_2$ by $f$. First consider the term $d(v,c)$: this term occurs $(\alpha + \beta)$ times in line \ref{sum:sum2} and at most once in line \ref{sum:sum4} since $c =f(c')$ for $c'\in S_2$. Thus $d(v,c)$ occurs at most $\alpha + \beta + 1$ times for $c \in S_3'$. Now consider the term $d(c,\inv{M^*}(c))$ for $c \in S_3'$: this term can occur at most once in line \ref{sum:sum4} since $c \in f(S_2)$. Thus $c$'s contribution to the above sum is at most $(\alpha+\beta +1)\cost(M^*|c)$.

    Next consider any $c \in S_2$ such that $\inv{f}(c) \in S_2$. First we will look at $d(v,c)$: this term cannot occur on lines \ref{sum:sum1}, \ref{sum:sum2} or \ref{sum:sum3} since $\inv{f}(c) \in S_2$, but can occur at most 3 times in line \ref{sum:sum4}: twice as $d(v,c)$ and once as $d(v,f(c'))$ for $c'=\inv{f}(c)$. Now consider the term $d(c, \inv{M^*}(c))$: again this cannot occur in lines \ref{sum:sum1}, \ref{sum:sum2} or \ref{sum:sum3} but it can occur at most once in line \ref{sum:sum4}. Thus its contribution to the above sum is at most $3\cost(M^*|c)$.

    Lastly, consider $c \in S_2$ such that $\inv{f}(c) \in S_3'$. The term $d(v,c)$ occurs in line \ref{sum:sum3} with $c=f(c')$ for some $c'\in S_3'$, and occurs two more times in line \ref{sum:sum4}, for a total contribution of $\beta+2$. Likewise the term $d(c, \inv{M^*}(c))$ occurs only $\beta$ times in line \ref{sum:sum3}. Thus its total contribution to the above sum is at most $\max\{\beta+2,3\}\cost(M^*|c)$.
    
    %Lets partition $S_2$ into $S_2^2 = \{c \in S_2 : \inv{f}(c) \in S_2\}$ and $S_2^3 =  \{c \in S_2 : \inv{f}(c) \in S_3'\}$. First for some candidate in $c \in S_3^*$ then $c$ occurs only once in $\cost(M^*|S_3^*)$ on line \ref{sum:sum1} and no where else. For candidate $c \in S_3'$ first consider the term $d(v,c)$ this term occurs $(\alpha + \beta)$ times in line \ref{sum:sum1} and at most once in line \ref{sum:sum4} since it is possible that $c \in f(S_2)$ so $d(v,c)$ occurs at most $\alpha + \beta + 1$ times for $c \in S_3'$. Now we consider the term $d(c,\inv{M^*}(c))$ for $c \in S_3'$, this term can occur at most once in line \ref{sum:sum4} since it is possible $c \in f(S_2)$.

    %Now consider $c \in S_2^2$. First we will look at $d(v,c)$ this term can not occur line \ref{sum:sum1}, \ref{sum:sum2} or \ref{sum:sum3} since $f(S_3') = S_2^3$, but can occur at most 3 times in \ref{sum:sum4} since it may be the case that $c \in f(S_2)$ but no more since $f$ is injective. Now consider the term $d(c, \inv{M^*}(c))$ again this can not occur in line \ref{sum:sum1}, \ref{sum:sum2} or \ref{sum:sum3} but it can occur at most once in line \ref{sum:sum4}.

    %Lastly, consider $c \in S_2^3$. The term can $d(v,c)$ occur in line \ref{sum:sum3} since $f(S_3') = S_2^3$ but only $\beta$ times since $f$ is injective and can occur two more times in line \ref{sum:sum4} since $f(S_2) \cap S_2^3 = \emptyset$. Likewise the term $d(c, \inv{M^*}(c))$ occurs only $\beta$ times in line \ref{sum:sum3}.

    Thus, since $\alpha+\beta \geq 2$,  the previous sum can be bounded by $(\alpha+\beta+1)\sum_{c\in cand(M^*)} \cost(M^*|c) = (\alpha+\beta+1)\cost(M^*)$, as desired.
    %\begin{align*}
     %   \cost(M_1) &\leq \gamma\cost(M^*|S_3^*) + (\alpha + \beta + 1) \sum_{v \in \V} \sum_{c \in S_3'} d(v,c) + \sum_{v \in \V} \sum_{c \in S_3'} d(c, \inv{M^*}(c))\\
      %  &\hspace{.5in} + 3\sum_{v \in \V} \sum_{c \in S_2^2} d(v,c) + \sum_{v \in \V} \sum_{c \in S_2^2} d(c, \inv{M^*}(c))\\
      %  &\hspace{.5in} + (\beta + 2)\sum_{v \in \V} \sum_{c \in S_2^3} d(v,c) + \beta \sum_{v \in \V} \sum_{c \in S_2^3} d(c, \inv{M^*}(c))\\
      %  &\leq \gamma\cost(M^*|S_3^*) + (\alpha + \beta + 1) \cost_V(M^*(S_3')) + \cost_P(M^*(S_3'))\\
     %   &\hspace{.5in}+ 3 \cost_V(M^*(S_2^2)) + \cost_V(M^*(S_2^2)) + (\beta+2)\cost_V(M^*(S_2^3)) + \beta \cost_P(M^*(S_2^3))\\
      %  &\leq (\alpha + \beta + 1) \cost(M^*).
    %\end{align*}
    %As we desired.
\end{proof}

 In the above proof we bound out matching $M_1$ by the cost of a new matching $M_2$ which has the same candidate set as the optimal set, $\cand(M^*)$. Then, we further correct the assignments of positions within that set using the triangle inequality. Candidates in $S_3$ will never swap with each other so we can avoid the $3 \delta_{itr}$ that we encountered in Theorem \ref{thm:3-delta}.

%Next we will consider the presence additional information and try in refine the iterative election procedure to do better.

\begin{figure}[p]
    \begin{subfigure}[b]{\linewidth}
    \centering
    \includegraphics[width=.8\textwidth]{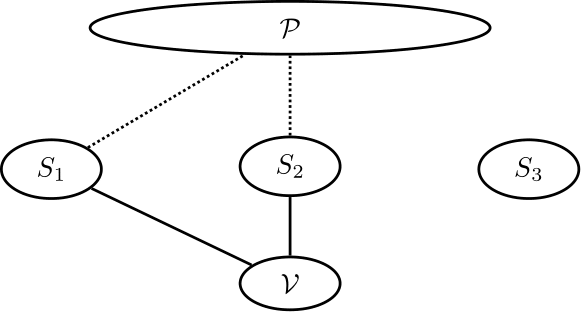}
    \caption{Our initial matching $M_1$ given by iterative election. The dotted lines indicate distances, $d(c,p)$, where it may not be the case that $M^*(p) = c$.}\label{fig:thm2b1a}
    \vspace{4ex}
  \end{subfigure}
  \begin{subfigure}[b]{\linewidth}
    \centering
    \includegraphics[width=.8\textwidth]{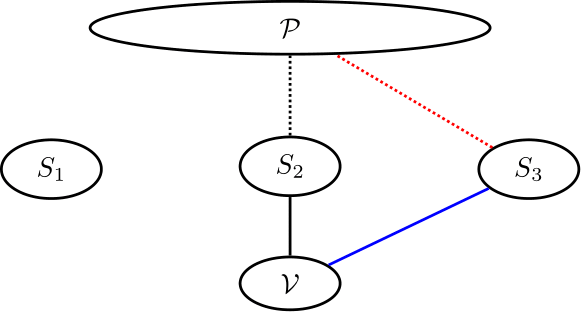}
    \caption{We now use the initial assumption of the theorem statement to convert the $\cost(M_1(S_1))$ to $\alpha \cost_P(M_2(S_3)) + \beta \cost_V(M_2(S_3))$. The cost associated with $d(c,p)$ is still not the same as in $M^*$ as candidates are not necessarily matched to the appropriate position. In this figure, costs with factor $\alpha$ appear in red, and those with factor $\beta$ appear in blue.}\label{fig:thm2b1b}
    \vspace{4ex}
  \end{subfigure}
    \end{figure}
  \begin{figure}[p]
    \ContinuedFloat
  \begin{subfigure}[b]
  {\linewidth}
    \centering
    \includegraphics[width=.8\textwidth]{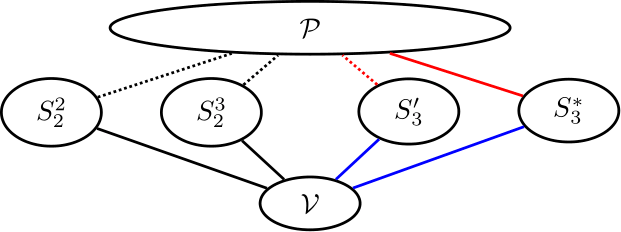}
    \caption{We now split the sets $S_3$ into $S_3'$ and $S_3^*$, and $S_2$ into $S_2^2$ and $S_3^3$. Distances from $S_3^*$ are drawn with a solid line, since those candidates are matched to the same position as in $M^*$. Define the set $S_2^2 = \{c \in S_2 : \inv{f}(c) \in S_2\}$ and $S_2^3 = \{c \in S_2 : \inv{f}(c) \in S_3\}$.}\label{fig:thm2b1c}
    \vspace{4ex}
  \end{subfigure}
  \begin{subfigure}[b]
  {\linewidth}
    \centering
    \includegraphics[width=.8\textwidth]{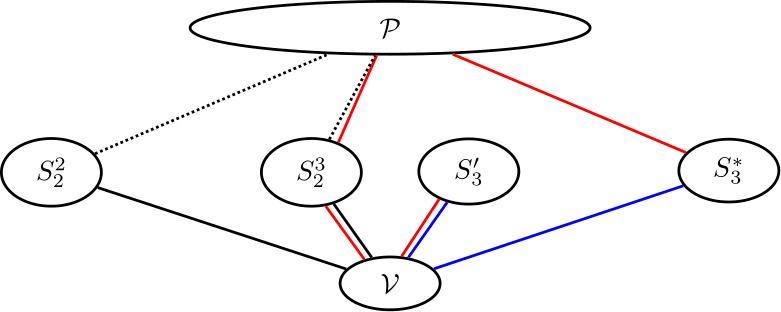}
    \caption{We can apply the triangle inequality to the positions incorrectly matched to candidates in $S_3'$. By the properties of $S_3'$ and $f$ these positions must be matched to candidates in $S_2^3$ in $M^*$.}\label{fig:thm2b1d}
    \vspace{4ex}
  \end{subfigure}
  \ContinuedFloat
  \begin{subfigure}[b]
  {\linewidth}
    \centering
    \includegraphics[width=.8\textwidth]{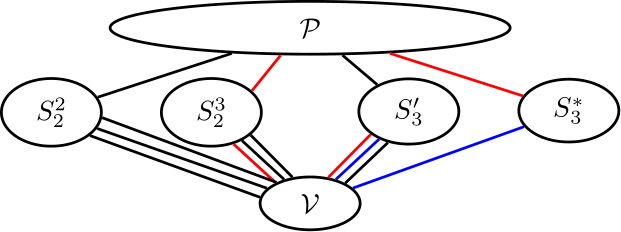}
    \caption{We have now shown that the cost of the original matching is at most the cost of various $M^*$ components. The component which appears the most is the distance from voters to $S_3'$: this appears at most $\alpha+\beta+1$ times.}\label{fig:thm2b1e}
    \vspace{4ex}
  \end{subfigure}
  \caption{Visual Proof of Theorem \ref{thm:gen-alpha-beta}.}
  \label{example-3-4-7} 
\end{figure}

\section{Knowing Ordinal Position Preferences}\label{sec:2}
Now that we have discussed the scenario where we only know the preferences of voters for each candidate, we look into the case where we know the ordinal position preferences for the candidates. In other words, for each position $p$ and candidates $c_1$ and $c_2$, we know which of $d(c_1,p)$ or $d(c_2,p)$ is smaller. So we only know the relative fitness of candidates for each position, but nothing about the voter preferences.

%We now consider the case where we have $k$ candidates and one position.

\begin{thm} \label{thm:k1serialdictator}
    There exists a mechanism that achieves a distortion of $3$ for \lineups{m}{1} while knowing only the position preferences. Moreover, no deterministic mechanism can have smaller worst-case distortion while knowing only the position preferences, even on a line and with only 2 candidates.
\end{thm}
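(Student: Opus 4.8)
The mechanism I have in mind is serial dictatorship on the position preferences, which for the single position $p$ simply means selecting the candidate $c_1$ that $p$ ranks first, i.e.\ any $c_1 \in \arg\min_{c \in \C} d(c,p)$. The plan for the upper bound is to compare $\cost(c_1)$ with $\cost(c^*)$, where $c^*$ denotes the unique candidate in $\cand(M^*)$, using only the fact that $d(c_1,p) \le d(c^*,p)$ (by choice of $c_1$) together with the triangle inequality, and nothing at all about the voters.

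Concretely, I would write $\cost(c_1) = \sum_{v \in \V} d(v,c_1) + n\,d(c_1,p)$ and $\cost(c^*) = \sum_{v \in \V} d(v,c^*) + n\,d(c^*,p)$. The only nontrivial step is the triangle inequality $d(v,c_1) \le d(v,p) + d(p,c_1) \le d(v,c^*) + d(c^*,p) + d(c^*,p)$ for each voter $v$, where the last inequality uses $d(c_1,p)\le d(c^*,p)$; this gives $d(v,c_1) \le d(v,c^*) + 2\,d(c^*,p)$. Summing over all $n$ voters and adding $n\,d(c_1,p) \le n\,d(c^*,p)$ yields
\[
\cost(c_1) \;\le\; \sum_{v \in \V} d(v,c^*) + 3n\,d(c^*,p) \;\le\; 3\Bigl(\sum_{v \in \V} d(v,c^*) + n\,d(c^*,p)\Bigr) \;=\; 3\cost(c^*).
\]
There is essentially no obstacle in this direction: the bound falls out of a single application of the triangle inequality, and the defining property of $c_1$ is exactly what is needed to absorb the extra $d(c^*,p)$ terms.

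For the lower bound I would reuse the two-instance template from the proof of Theorem~\ref{lem:1-pos-k-cand-3}. Fix $\C = \{c_1,c_2\}$, $\Pos=\{p\}$, and $n$ voters, all on a line, and construct two metrics that induce the \emph{same} position preference $c_1 \succ c_2$ (so a deterministic mechanism must output the same candidate in both). In $d_1$: place $p$ at $0$, all voters at $1$, $c_2$ at $1$, and $c_1$ at $-(1-\epsilon)$; then $d(c_1,p)=1-\epsilon<1=d(c_2,p)$, the optimum is $c_2$ with cost $n$, while choosing $c_1$ costs $n(3-2\epsilon)$. In $d_2$: place $p$ at $0$, all voters at $1$, $c_1$ at $1-\epsilon$, and $c_2$ at $-1$; then again $d(c_1,p)=1-\epsilon<1=d(c_2,p)$, the optimum is $c_1$ with cost $n$, while choosing $c_2$ costs $3n$. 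Whichever of $c_1,c_2$ the mechanism returns on the profile $c_1 \succ c_2$, one of $d_1,d_2$ forces the ratio $\cost(M)/\cost(M^*)$ to approach $3$ as $\epsilon \to 0$, giving the claimed lower bound on a line with two candidates.

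The only point requiring care is verifying that $d_1$ and $d_2$ are consistent with exactly the same information set (here, the single ordinal comparison $d(c_1,p)<d(c_2,p)$), so the deterministic mechanism really is forced to coincide on them; the cost computations themselves are routine line arithmetic. I would also append a short remark, parallel to the discussion following Theorem~\ref{lem:1-pos-k-cand-3}, noting that with only position preferences and one position the mechanism's choice is essentially forced, and contrasting this with the voter-preference setting of Section~\ref{sec:1}.
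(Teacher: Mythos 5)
Your proposal is correct and follows essentially the same route as the paper: the upper bound picks the position's closest candidate and uses one triangle-inequality chain (exploiting $d(c_1,p)\le d(c^*,p)$) to reach $\sum_{v}d(v,c^*)+3n\,d(c^*,p)\le 3\cost(c^*)$, exactly as in the paper's proof, and the lower bound is the same two-indistinguishable-instances-on-a-line argument (the paper fixes $d(A,1)=1-\epsilon$, $d(B,1)=1+\epsilon$, $d(A,B)=2$ and moves the voters onto $A$ or $B$, but this is only a cosmetic difference from your coordinates). No gaps.
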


%\begin{thm}
 %   There does not exist a deterministic mechanism that gives a upper bounds on the distortion of less than $3$ for the \lineup{k}{1} problem, knowing only position preferences, even on a line and with only 2 candidates. \label{thm:3-lower-pros-pref}
%\end{thm}

\begin{proof}
    Let $\Pos = \{1\}$ and $c$ be the candidate that is the top choice of position $1$ and $c^*$ be the optimal candidate, let $M_1(1) = c$. Then
    \begin{align*}
        \cost(M_1) &= \sum_{v \in \V} \left[d(v,c) + d(c,1)\right]\\
        &\leq \sum_{v \in \V} \left[d(v,c^*) + d(c^*, 1) + d(c,1) + d(c,1)\right]\\
        &\leq \sum_{v \in \V} \left[d(v,c^*) + d(c^*, 1) + d(c^*,1) + d(c^*,1)\right]\\
        &= \sum_{v \in \V} \left[d(v,c^*) + 3d(c^*, 1)\right]\\
        &\leq 3\cost(c^*),
    \end{align*}
    as desired. To prove the lower bound on distortion, we will now consider two instances both with $\Pos = \{1\}$, $\C = \{A,B\}$ and $d(A,1) = 1-\epsilon$, $d(B,1) = 1+\epsilon$ and $d(A,B) = 2$. One instance will have $n$ voters on $A$ and the other will have $n$ on top of $B$. Since we have no information about the voters, these instances are equivalent in the eyes of any mechanism which only knows position preferences. If we chose $A$ then the second instance will have a distortion of $\lim_{\epsilon \to 0}\frac{n (d(A,B) + d(A,1))}{n d(B,1)} = \lim_{\epsilon \to 0}\frac{3-\epsilon}{1+\epsilon} = 3$. If we pick $B$ then the first instance will have a distortion of $\lim_{\epsilon \to 0}\frac{n (d(A,B) + d(B,1))}{n d(A,1)} = \lim_{\epsilon \to 0}\frac{3+\epsilon}{1-\epsilon} = 3$.
\end{proof} 

Note that the above mechanism is simply just picking the closest candidate to the position. For the rest of this section, we will be using the more general {\em serial dictatorship} mechanism (see e.g., \citet{biro2017applications,caragiannis2022truthful}): for each position $p$ in some arbitrary order, we assign $M(p)$ to be the top choice of $p$ among remaining unassigned candidates. 

% \subsection{k-k Matching}
After considering the case where we have $m$ candidates and one position, we now extend the setting to $m$ candidates and $m$ positions. In this case, each candidate will be matched to one position and vice versa.
We will first show that serial dictatorship gives a trade off between the cost of the position side of the matching and the cost of the voter side of the matching. To do that we will revisit Lemma \ref{lem:all-perfect-matching-3} and derive a new lemma.

\begin{lem}
    For all matchings $M_1$, $M_2$ if $\abs{\C} = \abs{\Pos}$, then $\cost(M_1) \leq 3\cost_V(M_2) + \cost_P(M_2)$.
\end{lem}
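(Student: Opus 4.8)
The plan is to repeat the argument of Lemma~\ref{lem:all-perfect-matching-3} verbatim, but stop one line earlier: that proof already passes through the inequality $\cost(M_1)\le 3\cost_V(M_2)+\cost_P(M_2)$ (line~\eqref{perfect-mtch-3vm}) before weakening it to $3\cost(M_2)$, so the claim here is exactly that intermediate bound stated on its own.

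First I would expand $\cost(M_1)=\sum_{v\in\V}\sum_{p\in\Pos}\big[d(v,M_1(p))+d(M_1(p),p)\big]$ and apply the triangle inequality twice to the fitness term, namely $d(M_1(p),p)\le d(M_1(p),v)+d(v,M_2(p))+d(M_2(p),p)$ for every voter $v$ and position $p$. This yields $\cost(M_1)\le \sum_{v}\sum_{p}\big[d(v,M_1(p))+d(M_1(p),v)+d(v,M_2(p))+d(M_2(p),p)\big]$.

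Next I would use the hypothesis $\abs{\C}=\abs{\Pos}$: since $M_1$ and $M_2$ are injective maps $\Pos\to\C$ between sets of equal size, both are bijections, so $\cand(M_1)=\cand(M_2)=\C$. Hence $\sum_{v}\sum_{p}d(v,M_1(p))=\sum_{v}\sum_{p}d(v,M_2(p))=\cost_V(M_2)$, and I may replace each occurrence of $d(v,M_1(p))$ (equivalently $d(M_1(p),v)$) in the displayed bound by $d(v,M_2(p))$. Collecting terms gives $\cost(M_1)\le\sum_{v}\sum_{p}\big[3d(v,M_2(p))+d(M_2(p),p)\big]=3\cost_V(M_2)+\cost_P(M_2)$, as desired.

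There is no real obstacle here; the only point requiring a word of justification is the equality $\cand(M_1)=\cand(M_2)$, which is what forces $\cost_V(M_1)=\cost_V(M_2)$ and thus permits the termwise substitution — exactly the step already used in Lemma~\ref{lem:all-perfect-matching-3}.
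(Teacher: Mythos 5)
Your proposal is correct and matches the paper's own proof, which likewise just invokes the intermediate line $3\cost_V(M_2)+\cost_P(M_2)$ from the proof of Lemma~\ref{lem:all-perfect-matching-3} (the triangle-inequality step followed by the swap of $M_1(p)$ for $M_2(p)$, justified by $\cand(M_1)=\cand(M_2)=\C$ when $\abs{\C}=\abs{\Pos}$). The only cosmetic caveat is that the substitution is valid at the level of the sum over $p$ (both matchings enumerate all of $\C$), not term by term for a fixed $p$ — which you in fact state via the equality of the summed quantities, so nothing is missing.
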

\begin{proof}
    This statement follows directly from Lemma \ref{lem:all-perfect-matching-3} line \ref{perfect-mtch-3vm}.
\end{proof}

Notice that on the right hand side, only the cost of the distances from the voters to the candidates is increased by a factor of 3 while the cost of the candidate-position matching remains unchanged. We then try optimizing the matching and find that we have a trade off. As discussed below, we will utilize this trade off to obtain an upper bound of the distortion smaller than 3. 

\begin{thm} \label{thm:kkserial}
    Serial dictatorship gives a distortion of at most $3 - \frac{1}{2^{m-2}}$ for \lineups{m}{m}.
\end{thm}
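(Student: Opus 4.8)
The plan is to bound the cost of the serial dictatorship matching $M_1$ against the optimal matching $M^*$ by repeatedly applying the trade-off lemma just proved ($\cost(M_1) \leq 3\cost_V(M_2) + \cost_P(M_2)$ for any matching $M_2$), each time peeling off one position's worth of cost using the defining property of serial dictatorship, and then re-inserting the leftover cost into a smaller sub-instance. Concretely, fix the order in which serial dictatorship processes positions, say $p_1, \dots, p_m$, and let $c_i = M_1(p_i)$. The key structural fact is that when serial dictatorship assigns $c_i$ to $p_i$, candidate $c_i$ is the closest remaining candidate to $p_i$; in particular $d(c_i, p_i) \le d(M^*(p_i), p_i)$ whenever $M^*(p_i)$ has not yet been used, and more generally one can always compare $M_1$ against a "rerouted" optimal matching on the surviving candidates. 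I would set up an induction on $m$: let $f(m)$ denote the worst-case distortion of serial dictatorship on \lineups{m}{m}, with base case $f(1) = 1$ (the single position gets its closest candidate, so $\cost_P$ is optimal and $\cost_V$ is forced), and the claim $f(m) \le 3 - \tfrac{1}{2^{m-2}}$; note $3 - \tfrac{1}{2^{m-2}}$ equals $2$ at $m=1$ actually gives $1$... I should double-check the indexing and instead anchor at $m=2$ where the bound reads $3 - 1 = 2$.

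The heart of the argument is the inductive step. Starting from $\cost(M_1) \le 3\cost_V(M_2) + \cost_P(M_2)$, I want to choose $M_2$ cleverly: on the first position $p_1$ processed by serial dictatorship, I keep $M_1(p_1) = c_1$, and on the remaining $m-1$ positions I use a matching that agrees with $M^*$ as much as possible. Because $c_1$ is the top choice of $p_1$ among all candidates, $\cost_P$ restricted to $p_1$ under $M_1$ is no larger than under $M^*$, so the $\cost_P(M_1|p_1)$ term is controlled directly by $\cost_P(M^*|p_1)$ with no blowup. The $\cost_V(M_1|p_1)$ term — the voter distances to $c_1$ — is the part that must absorb the factor $3$, but this is exactly the single-position structure where the $3$ is unavoidable; crucially, since serial dictatorship then recurses on the residual \lineups{m-1}{m-1} instance (remove $c_1$ and $p_1$), the remaining cost $\cost(M_1) - \cost(M_1|p_1)$ satisfies the inductive hypothesis against the best matching of the residual instance. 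Combining, I expect a recursion of roughly the shape $f(m) \le \tfrac{1}{2}\cdot 3 + \tfrac{1}{2} f(m-1)$ (the $\tfrac12$ weights coming from a worst-case balancing between the single peeled position and the residual instance), which unrolls to $f(m) \le 3 - \tfrac{1}{2^{m-2}}\,(3 - f(2)) = 3 - \tfrac{1}{2^{m-2}}$, using $f(2)=2$. The weighting argument — showing that the adversary's best split between "cost concentrated on the first position" and "cost concentrated on the tail" yields exactly the factor $\tfrac12$, and that $M^*$ on the residual instance is no cheaper than $\cost(M^*)$ minus the $p_1$ contribution — is the delicate bookkeeping step.

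The main obstacle I anticipate is making the recursion rigorous despite the fact that serial dictatorship's residual instance does not see the "true" optimal matching $M^*$ but only $M^*$ with one candidate possibly removed; I will need a rerouting argument (as in Lemma \ref{lem:three-match-candate-set} / Lemma \ref{lem:match-properties}) to compare the residual serial-dictatorship cost against $\cost(M^*)$ restricted to the remaining positions, absorbing any mismatch via the triangle inequality through the voters — which is precisely where the factor $3$ on $\cost_V$ (and not on $\cost_P$) is exploited. A secondary subtlety is that the bound must be tight, so I should also sketch the matching lower bound: an instance on a line with $m$ positions and candidates where serial dictatorship is forced, position by position, into choosing candidates that are $\epsilon$-better for fitness but far from the voters, with the voter mass arranged so that the accumulated error geometrically approaches $3 - \tfrac{1}{2^{m-2}}$; I expect this to essentially be $m$ nested copies of the two-point lower-bound instance from Theorem \ref{lem:1-pos-k-cand-3}, scaled so that each successive position contributes half as much slack as the previous one.
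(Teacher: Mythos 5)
There is a genuine gap: your argument never closes the central quantitative step, and the step as you state it is not true. Your plan hinges on the recursion $f(m)\le \tfrac{3}{2}+\tfrac{1}{2}f(m-1)$, with the factor $\tfrac12$ attributed to a ``worst-case balancing between the single peeled position and the residual instance.'' But a per-position peeling gives no such balancing constraint: the adversary can place essentially all of $\cost(M^*)$ on the first peeled position $p_1$ (and make it almost entirely position cost), in which case your local bound for $p_1$ contributes a factor approaching $3$ while the residual instance contributes almost nothing, and the peeling only yields $\max\{3,f(m-1)\}=3$, not a $\tfrac12$-weighted average. Moreover, the per-position comparison of voter costs that your sketch leans on is not available: in a \lineup{m}{m}, $\sum_v d(v,M_1(p_1))$ versus $\sum_v d(v,M^*(p_1))$ can differ by an unbounded factor; only the \emph{global} identity $\cost_V(M_1)=\cost_V(M^*)$ holds (all perfect matchings use every candidate). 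You also defer exactly the hard part — comparing the residual serial-dictatorship run against $M^*$ restricted to the surviving positions after $M^*$'s candidate may have been stolen — to an unspecified ``rerouting argument,'' which is where the exponential blowup of pure metric matching lives; without a concrete accounting there, the claimed recursion (and your unproved base case $f(2)=2$) does not follow.

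The paper's proof is structurally different and much shorter: it is a global min-of-two-bounds argument, not an induction. It uses (i) the trade-off lemma derived from Lemma \ref{lem:all-perfect-matching-3}, giving $\cost(M_1)\le 3\cost_V(M^*)+\cost_P(M^*)$, and (ii) the known distortion bound $2^m-1$ for serial dictatorship on metric bipartite matching from \citet{caragiannis2022truthful}, giving $\cost_P(M_1)\le(2^m-1)\cost_P(M^*)$, together with $\cost_V(M_1)=\cost_V(M^*)$. The distortion is then at most
\[
\min\left\{\frac{3\cost_V(M^*)+\cost_P(M^*)}{\cost_V(M^*)+\cost_P(M^*)},\ \frac{\cost_V(M^*)+(2^m-1)\cost_P(M^*)}{\cost_V(M^*)+\cost_P(M^*)}\right\},
\]
which is maximized when $\cost_V(M^*)=(2^{m-1}-1)\cost_P(M^*)$, yielding $3-\tfrac{1}{2^{m-2}}$. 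Your proposal identifies neither the invariance of $\cost_V$ across perfect matchings as the linking constraint nor the $2^m-1$ matching bound, and these are precisely the ingredients that replace your unjustified $\tfrac12$-weighting. (The lower-bound construction you sketch is not needed; the theorem claims only an upper bound.)
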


\begin{proof}
    Let $M_1$ be the matching given by serial dictatorship and $M^*$ be the optimal matching. We know from Theorem 1 of \citet{caragiannis2022truthful} that serial dictatorship for bipartite perfect matching gives a distortion of at most $2^m - 1$ so we have $\cost_P(M_1) \leq (2^m - 1)\cost_P(M^*)$. Therefore, due to the Lemma above, and the fact that for \lineups{m}{m} we have $\cost_V(M_1)=\cost_V(M^*)$, our distortion is given by
    \begin{align*}
        \min \left \{ \frac{3\cost_V(M^*) + \cost_P(M^*)}{\cost_V(M^*) + \cost_P(M^*)}, \frac{\cost_V(M^*) + (2^m-1)\cost_P(M^*)}{\cost_V(M^*) + \cost_P(M^*)}\right\}.
    \end{align*}
    This is maximized when $\frac{3\cost_V(M^*) + \cost_P(M^*)}{\cost_V(M^*) + \cost_P(M^*)} =  \frac{\cost_V(M^*) + (2^k-1)\cost_P(M^*)}{\cost_V(M^*) + \cost_P(M^*)}$. So we have
    \begin{align*}
        3\cost_V(M^*) + \cost_P(M^*) &= \cost_V(M^*) + (2^k-1)\cost_P(M^*)\\
        \cost_V(M^*) &= (2^{k-1}-1)\cost_P(M^*).
    \end{align*}
    Thus $\frac{\cost_P(M_1) + \cost_V(M_1)}{\cost_P(M^*) + \cost_V(M^*)} \leq \frac{\cost_P(M^*) + 3(2^{m-1}-1)\cost_P(M^*)}{\cost_P(M^*) + (2^{m-1}-1)\cost_P(M^*)} = \frac{3 \cdot 2^{m-1}-2}{2^{m-1}} = 3- \frac{1}{2^{m-2}}$.
\end{proof}

Theorem \ref{thm:kkserial} shows that simply by using serial dictatorship we can get a distortion better than 3. %When $k$ becomes larger and larger, this distortion also grows larger and approaches 3 and
For $m=2$ we get a distortion of $2$, and for $m=3$ we get $\frac{5}{2}$, so for small $m$ the distortion remains relatively low despite approaching 3 as $m$ becomes larger. 

% \subsection{\texorpdfstring{$k-\ell$}{k-l} Matching}
We now consider the general case where we have $m$ candidates and $\ell$ positions, with $m\geq \ell$. We will now show that simply running serial dictatorship on the set of positions results in an upper bound on the distortion of $5$. Having the extra information on the ordering of candidates by fitness for each position not only allows us to improve the bounds from Theorem \ref{thm:2b1-dist-knowing-preferences} when we only know voter preferences, but also does so without needing voter preferences at all. Note that serial dictatorship then is also strategy proof, since we do not consider the preferences of the candidates or the voters, and even the positions are not incentivized to lie (if such a thing were possible) to achieve a closer candidate.

\begin{thm}
    Serial dictatorship achieves a distortion of at most $5$ for \lineups{m}{\ell}. \label{thm:serial_dictatorship_5}
\end{thm}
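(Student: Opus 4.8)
The plan is to derive this from the general charging argument already established in Theorem~\ref{thm:gen-alpha-beta}, taking the matching $M_1$ there to be the output of serial dictatorship run over the positions. According to that theorem, if we can find constants $\alpha,\beta$ with $\alpha+\beta\ge 2$ satisfying, for every $c\in S_1$ and every $c'\in S_3$,
$\sum_{v\in\V}\!\big(d(v,c)+d(c,\inv{M_1}(c))\big)\le \alpha\sum_{v\in\V}d(v,c')+\beta\sum_{v\in\V}d(c',\inv{M_1}(c))$,
then $\cost(M_1)\le(\alpha+\beta+1)\cost(M^*)$. To get the bound $5$ we want $\alpha+\beta=4$, and I would take $\alpha=1$, $\beta=3$ (which indeed has $\alpha+\beta\ge 2$).

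First I would isolate the one structural fact that serial dictatorship together with knowledge of position preferences gives us: if position $p=\inv{M_1}(c)$ was assigned candidate $c$, then at that step $c$ was the closest (under $d(\cdot,p)$) among the candidates still unassigned, so $d(c,p)\le d(c'',p)$ for every candidate $c''$ that was unassigned when $p$ was processed. Next I would observe that any $c'\in S_3=\cand(M^*)\setminus\cand(M_1)$ is never selected by $M_1$, hence it remains in the pool of unassigned candidates throughout the entire run, and in particular at the step handling $p$. Therefore $d(c,p)\le d(c',p)$ for all $c\in S_1$, $c'\in S_3$ with $p=\inv{M_1}(c)$.

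Then the hypothesis of Theorem~\ref{thm:gen-alpha-beta} follows by the triangle inequality, applied one voter at a time: for each $v\in\V$, $d(v,c)\le d(v,c')+d(c',p)+d(p,c)\le d(v,c')+2d(c',p)$, and adding the inequality $d(c,p)\le d(c',p)$ gives $d(v,c)+d(c,p)\le d(v,c')+3d(c',p)$. Summing over $v\in\V$ yields exactly the required inequality with $\alpha=1$ and $\beta=3$, so Theorem~\ref{thm:gen-alpha-beta} gives $\cost(M_1)\le(\alpha+\beta+1)\cost(M^*)=5\cost(M^*)$, as claimed.

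The main obstacle is already dealt with — it was absorbed into the careful choice of $M_2$ and the per-candidate accounting inside the proof of Theorem~\ref{thm:gen-alpha-beta}, which is what lets us avoid the worse $3\delta_{itr}$-type bound and what makes the $S_3$-candidates "never swap with each other." What remains here is only to feed that machinery the correct comparison: the crucial point requiring a little care is that members of $S_3$ qualify precisely because they are picked by $M^*$ yet rejected by $M_1$, which is exactly why they stay available during serial dictatorship and why $d(c,p)\le d(c',p)$ holds; without that observation the inequality would not be available.
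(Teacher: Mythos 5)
Your proposal is correct and follows essentially the same route as the paper: both derive, for every $c \in S_1$ and $c' \in S_3$ with $p = \inv{M_1}(c)$, the inequality $\sum_{v}\left[d(v,c)+d(c,p)\right] \leq \sum_{v}\left[d(v,c')+3d(c',p)\right]$ via the triangle inequality through $c'$ and $p$ combined with the serial-dictatorship fact that $c'$ (never chosen) was still available when $p$ was filled, so $d(c,p)\leq d(c',p)$, and then invoke Theorem \ref{thm:gen-alpha-beta} with $\alpha=1$, $\beta=3$.
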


\begin{proof}
    Suppose serial dictatorship returns a matching $M_1$, using the same definitions as we have before, let $S_1 = \cand(M_1) \setminus  \cand(M^*)$, $S_2 = \cand(M_1) \cap  \cand(M^*)$, and $S_3 =  \cand(M^*) \setminus \cand(M_1)$. Take $c \in S_1$ and $c' \in S_3$. Then the contribution of $c$ to the cost of $M_1$ is:

    \begin{align*}
        %\cost(c, M_1) &=
        \sum_{v \in \V} [d(v,c) + d(c,\inv{M_1}(c))]
        \leq \sum_{v \in \V} [d(v,c') + d(c',\inv{M_1}(c)) + d(c,\inv{M_1}(c)) + d(c,\inv{M_1}(c))].
    \end{align*}
    Since $c'$ was not chosen when $c$ was chosen for position $M_1^{-1}(c)$, then $d(c,\inv{M_1}(c)) \leq d(c',\inv{M_1}(c))$. Therefore the above quantity is at most,
    \begin{align*}
        %\cost(c, M_1) 
        &\leq \sum_{v \in \V} [d(v,c') + d(c',\inv{M_1}(c)) + d(c',\inv{M_1}(c)) + d(c',\inv{M_1}(c))]\\
        &\leq \sum_{v \in \V} [d(v,c') + 3d(c',\inv{M_1}(c))]\\
        &= \sum_{v \in \V} d(v,c') + 3\sum_{v \in \V} d(c',\inv{M_1}(c)).
    \end{align*}
    We then have the necessary conditions to apply Theorem \ref{thm:gen-matching}. We have values $\alpha=1$ and $\beta = 3$ so the distortion is at most $\alpha + \beta + 1 = 5$.
\end{proof}

We will now briefly consider the case for when we also know the voter ordinal preferences for the candidates in addition to the ordinal position preferences. We show that we can improve the upper and lower bounds from $3$ to $2$ for \lineups{2}{1}.

%\begin{thm}
%    There does not exists a deterministic mechanism that achieves a distortion better than $2$ even on a line and with $2$ candidates. 
%\end{thm}

\begin{thm} \label{thm:21serialdictator}
    Consider a \lineup{2}{1} knowing position preference information {\em and} voter preference information, with $\Pos = \{1\}$ and $\C = \{A,B\}$. Without loss of generality assume we have $d(A,1) \leq d(B,1)$ and there are $n_a$ voters who prefer $A$ to $B$ and $n_b$ who prefer $B$ to $A$. Consider the mechanism which selects candidate $A$ if $n_b \leq 2n_a$, otherwise select candidate $B$. Then, this mechanism achieves a distortion of at most $2$. Moreover, no deterministic mechanism can do better, even on a line.
\end{thm}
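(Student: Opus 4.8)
The plan is to reduce everything to a handful of inequalities in the quantities $a=d(A,1)$, $b=d(B,1)$ (with $a\le b$), $\delta=b-a\ge 0$, and $D=d(A,B)$; the triangle inequality through position $1$ gives $\delta\le D\le a+b$. Write $V_A,V_B$ for the voters preferring $A$ and $B$ respectively, so $v\in V_A$ means $d(v,A)\le d(v,B)+\delta$ and $v\in V_B$ means $d(v,A)\ge d(v,B)+\delta$. Since $\Pos=\{1\}$ and $\C=\{A,B\}$, the optimum is $\min\{\cost(A),\cost(B)\}$, so it suffices to show that the chosen candidate has cost at most twice the other's.

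First I would prove two ``aggregate'' comparisons by summing, over all voters, a per-voter preference bound and a per-voter triangle bound: $\cost(A)\le\cost(B)+n_bD$ and $\cost(B)\le\cost(A)+n_a(D+\delta)$. For the first, use $d(v,A)-d(v,B)\le\delta$ for $v\in V_A$ and $d(v,A)-d(v,B)\le D$ for $v\in V_B$, then substitute into $\cost(A)-\cost(B)=\sum_v[d(v,A)-d(v,B)]-n\delta$ and simplify; the second is analogous. I would also record two per-voter lower bounds obtained by combining each voter's preference inequality with $d(A,B)\le d(v,A)+d(v,B)$: every $v\in V_A$ has $d(v,B)\ge\tfrac{D-\delta}{2}$, and every $v\in V_B$ has $d(v,A)\ge\tfrac{D+\delta}{2}$.

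The argument then splits on the mechanism's output. If $n_b\le 2n_a$ the mechanism picks $A$, and by the first aggregate bound it is enough to show $n_bD\le\cost(B)$. I would lower-bound $\cost(B)=\sum_v d(v,B)+nb\ge n_a\tfrac{D-\delta}{2}+(n_a+n_b)(a+\delta)$, and using $D\le 2a+\delta$ and $\lambda:=n_b/n_a\in[0,2]$ reduce the target to the one-variable inequality $\lambda D\le\tfrac{D-\delta}{2}+(1+\lambda)(a+\delta)$ for $D\in[\delta,2a+\delta]$; since the difference of the two sides is affine in $D$, checking the two endpoints suffices, where it evaluates to $a+\delta+\lambda a\ge 0$ and $a(2-\lambda)+\delta\ge 0$. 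If instead $n_b>2n_a$ the mechanism picks $B$, and by the second aggregate bound it is enough to show $n_a(D+\delta)\le\cost(A)$; here $\cost(A)=\sum_v d(v,A)+na\ge n_b\tfrac{D+\delta}{2}+na\ge n_a(D+\delta)$, the last step using $n_b\ge 2n_a$. In both cases the chosen candidate costs at most $2\opt$.

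For the matching lower bound I would produce two instances on a line that share the same ordinal data — in both, $n_a$ voters prefer $A$, $n_b=2n_a$ voters prefer $B$, and $d(A,1)<d(B,1)$ — such that any deterministic mechanism fails on one of them: an instance with $A$ and $B$ on opposite sides of position $1$, the $A$-supporters essentially at position $1$ and the $B$-supporters essentially on $B$, where $\cost(A)/\cost(B)\to 2$, and a second instance with $A$ placed essentially at position $1$ (so it still wins the position comparison), again with $A$-supporters near $A$ and $B$-supporters near $B$, where symmetrically $\cost(B)/\cost(A)\to 2$. The main obstacle is obtaining exactly the constant $2$ in the case $n_b\le 2n_a$: a crude triangle-inequality or averaging argument only yields a constant larger than $2$, and the tight bound needs the per-voter estimate $d(v,B)\ge\tfrac{D-\delta}{2}$ on the $A$-supporters combined with $D\le a+b$ in exactly the proportion that the endpoint check above exploits.
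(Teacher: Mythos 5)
Your proposal is correct, and its upper-bound argument takes a genuinely different route from the paper's. The paper first proves an embedding lemma reducing the problem to instances on a line where every voter $v$ satisfies $d(v,A)+d(v,B)=d(A,B)$ (showing the embedding preserves preferences and only increases the cost ratio), and then runs a charging argument: when $A$ is chosen and $n_b\le 2n_a$ it maps the $B$-supporters at most two-to-one onto $A$-supporters via a function $f$, and in the other case it maps each $A$-supporter to two distinct $B$-supporters via $f_1,f_2$ with disjoint ranges, applying triangle inequalities through these paired voters together with $d(A,1)\le d(B,1)$. You instead stay in the general metric and argue with aggregates: the two global comparisons $\cost(A)\le\cost(B)+n_b\,d(A,B)$ and $\cost(B)\le\cost(A)+n_a\,(d(A,B)+\delta)$, the per-voter lower bounds $d(v,B)\ge\tfrac{d(A,B)-\delta}{2}$ for $A$-supporters and $d(v,A)\ge\tfrac{d(A,B)+\delta}{2}$ for $B$-supporters, and then (in the $n_b\le 2n_a$ case) a check that an affine-in-$d(A,B)$ inequality holds at both endpoints of $[\delta,\,d(A,1)+d(B,1)]$; I verified these inequalities and the endpoint computations, and the $n_b>2n_a$ case is immediate from $n_b\ge 2n_a$. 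Your route is shorter and more elementary: it needs neither the line-embedding reduction nor the explicit two-to-one charging maps, and the role of the threshold $n_b\le 2n_a$ appears transparently in the endpoint value $a(2n_a-n_b)+n_a\delta\ge 0$ (you may want to clear denominators rather than divide by $n_a$, to cover the degenerate $n_a=0$ case, but that is cosmetic). The paper's approach, in exchange, yields the structural "all voters between $A$ and $B$ on a line" reduction, which is reusable intuition for related two-candidate arguments. Your lower-bound construction is essentially the same as the paper's Figure-based one (two line instances with $n_b=2n_a$ and identical ordinal data, each forcing ratio $\to 2$ for one of the two possible outputs), and it is sound.
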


\begin{proof}
       Suppose $\C = \{A,B\}$ and $\Pos = \{1\}$. First we will show a reduction that will allow us to consider only a subset of all line-up elections. We do this by showing that for every line-up election with 2 candidate and 1 position, there exists an embedding of points in $\R$ with distortion at least as large. Let $A'$ be at position $0$ in $\R$ and $B'$ at position $d(A,B)$, and define $\zeta(x) = \frac{1}{2}\left[d(x,A) + d(x,B) - d(A,B)\right]$. Now place voter $j'$ at $d(j,A) - \zeta(j)$ and position $1$ at $d(A,1) - \zeta(1)$. %This is clearly a metric since we have given an embedding into $\R$.

       Note that every $j'$ lies between $A'$ and $B'$:
       
       \begin{align*}
           d(j',A') &= d(j,A) - \zeta(j)\\
           &= d(j,A) - \frac{1}{2}\left[d(j,A) + d(j,B) - d(A,B)\right]\\
           &= \frac{1}{2}d(j,A) - \frac{1}{2}d(j,B) + \frac{1}{2}d(A,B)\\
           &\geq \frac{1}{2}d(j,B) - \frac{1}{2}d(j,B)\\
           &= 0.
       \end{align*}
       We also have
       \begin{align*}
           d(j',A') &= d(j,A) - \zeta(j)\\
           &= d(j,A) - \frac{1}{2}\left[d(j,A) + d(j,B) - d(A,B)\right]\\
           &= \frac{1}{2}d(j,A) - \frac{1}{2}d(j,B) + \frac{1}{2}d(A,B)\\
           &\leq \frac{1}{2}d(A,B) + \frac{1}{2}d(A,B)\\
           &= d(A,B).
       \end{align*}
       Thus $j'$ lies in between $A'$ and $B'$ since $j'$ is to the right of $A'$ and must be to the left of $B'$, and the same holds for $1$. We also have that 
       \begin{align*}
            d(j',B') &= d(A',B') - d(j', A')\\
            &= d(A,B) - d(j', A')\\
            &= d(A,B) - d(j, A) + \zeta(j)\\
            &= d(A,B) - d(j, A) + \frac{1}{2}\left[d(j,A) + d(j,B) - d(A,B)\right]\\
            &= d(j,B) - \frac{1}{2}\left[d(j,A) + d(j,B) - d(A,B)\right]\\
            &= d(j,B) - \zeta(j).
       \end{align*} and the analogous result $d(B',1') = d(B, 1) - \zeta(1)$ holds for 1.
       
        Now we will show that the voter (and position) preferences remain unchanged, as compared with the original instance. Suppose $j$ prefers $A$. Then,

       \begin{align*}
           d(j',A') + d(A',1') &= d(j,A) + d(A,1) - \zeta(j) - \zeta(1)\\
           &\leq d(j,B) + d(B,1) - \zeta(j) - \zeta(1)\\
           &= d(j',A') + d(A',1)
       \end{align*}

       so $j'$ also prefers $A'$. Similarly, if $j$ prefers $B$ then so does $j'$ prefer $B'$.
        % \begin{align*}
        %     d(j',B') + d(B',1') &= d(j,B) + d(B,1) - \zeta(i) - \zeta(1)\\
        %     &\leq d(j,A) + d(A,1) - \zeta(i) - \zeta(1)\\
        %     &= d(j',A') + d(A',1)
        % \end{align*}
        %so $j'$ also prefers $B$. 
        Thus $j'$ has the same preferences as $j$. Now we will show the distortion is no smaller with this embedding. Suppose $M_1$ is the matching selecting $A$, and $M^*$ is the optimal matching selecting $B$, without loss of generality. We have that the distortion is given by

       \begin{align*}
        \frac{\cost(M_1)}{\cost(M^*)} &= \frac{\sum_{v \in \V} \left[d(v,A) + d(A,1)\right]}{\sum_{v \in \V} \left[d(v,B) + d(B,1)\right]}\\
        &= \frac{\sum_{v \in \V} \left[d(v',A') + d(A',1') + \zeta(v) + \zeta(1)\right]}{\sum_{v \in \V} \left[d(v',B') + d(B',1') + \zeta(v) + \zeta(1)\right]}\\
        &\leq \frac{\sum_{v \in \V} \left[d(v',A') + d(A',1')\right]}{\sum_{v \in \V} \left[d(v',B') + d(B',1')\right]}.
    \end{align*}
    The last line is simply the distortion in our new instance in $\R$. Thus when bounding the worst-case distortion of our mechanism we need only consider instances on a line, and moreover only instances where voters lie between the two candidates. From this point on in the proof, we will analyze the worst-case distortion of our mechanism, but we will assume that $d(j,A) + d(j,B) = d(A,B)$ for all $j$.

         Let $\V(A)$ be the set of all voters that prefer $A$ to $B$ and $\V(B)$ be the set of all voters that prefer $B$ (breaking ties arbitrarily). 
        
        First suppose we pick $A$ for our matching $M_1(1) = A$. Let $f \colon \V(B) \to \V(A)$ such that if $f(i_1) = f(i_2)$ then either $i_1 = i_2$ or there does not exist a $i_3 \not = i_1$ and $i_3 \not = i_2$ such that $f(i_3) = f(i_1) = f(i_2)$, in other words $f$ maps voters from $\V(B)$ to voters in $\V(A)$ such that at most two voters in $\V(B)$ are mapped to the same voter in $\V(A)$. Then we have
    \begin{align*}
        \cost(M_1) &= \sum_{i \in \V} [d(i,A) + d(A,1)]\\
        &= \sum_{i \in \V(A)} \left[d(i,A) + d(A,1)\right] + \sum_{i \in \V(B)} \left[d(i,A) + d(A,1)\right]\\
        &\leq \sum_{i \in \V(A)} \left[d(i,B) + d(B,1)\right] + \sum_{i \in \V(B)} \left[d(i,A) + d(A,1)\right]\\
        &\leq \sum_{i \in \V(A)} \left[d(i,B) + d(B,1)\right] \\ &\hspace{.25in}+\sum_{i \in \V(B)} \left(\frac{1}{2} \left[d(i,B) + d(B,1) + d(A,1) + d(i,B) + d(f(i), B) + d(f(i), A)\right] + d(A,1)\right).
    \end{align*}

    The inequalities above are due to all voters in $\V(A)$ preferring $A$, and the triangle inequality. Now consider the cost of a single voter $i \in \V(B)$ recalling that $d(j,A) + d(j,B) = d(A,B)$ for $j \in \V(A)$.
    \begin{align*}
        \cost(i) &\leq \frac{1}{2} \left[d(i,B) + d(B,1) + d(A,1) + d(i,B) + d(f(i), B) + d(f(i), A)\right] + d(A,1)\\
        &= d(i,B) + \frac{1}{2} \left[d(B,1) + d(A,1) + d(A,B)\right] + d(A,1)\\
        &\leq d(i,B) + \frac{1}{2} \left[d(B,1) + d(A,1) + \frac{1}{2}\left[d(A,1) + d(B,1) + d(f(i), A) + d(f(i), B)\right]\right] + d(A,1)\\
        %&= d(i,B) + \frac{1}{2} \left[d(B,1) + d(A,1) + \frac{1}{2}\left[d(A,1) + d(B,1) + d(f(i), A) + d(A,1) + d(f(i), B)\right]\right] + \frac{3}{4}d(A,1)\\
        %&\leq d(i,B) + \frac{1}{2} \left[d(B,1) + d(A,1) + \frac{1}{2}\left[d(A,1) + d(B,1) + d(f(i), B) + d(B,1) + d(f(i), B)\right]\right] + \frac{3}{4}d(A,1)\\
        %&\leq d(i,B) + \frac{1}{2} \left[d(B,1) + d(B,1) + \frac{1}{2}\left[d(B,1) + d(B,1) + d(f(i), B) + d(B,1) + d(f(i), B)\right]\right] + \frac{3}{4}d(B,1)\\
        &\leq d(i,B) + \frac{1}{2} \left[d(B,1) + d(B,1) + \frac{1}{2}\left[d(B,1) + d(B,1) + d(f(i), B) + d(f(i), B)\right]\right] + d(B,1)\\
        &= d(i,B) + \frac{1}{2} \left[2d(B,1) + \frac{1}{2}\left[2d(B,1) + 2d(f(i), B)\right]\right] + d(B,1)\\
        &= d(i,B) + 2d(B,1) + \frac{1}{2}\left[d(f(i), B) + d(B,1)\right]
    \end{align*}
    The above inequalities are due to the triangle inequality, the fact that $d(A,1)\leq d(B,1)$, and the fact that $f(i)\in \V(A)$.
    Thus we can conclude
    \begin{align*}
        \cost(M_1) 
        &\leq \sum_{i \in \V(A)} \left[d(i,B) + d(B,1)\right] \\ &\hspace{.25in}+\sum_{i \in \V(B)} \left(\frac{1}{2} \left[d(i,B) + d(B,1) + d(A,1) + d(i,B) + d(f(i), B) + d(f(i), A)\right] + d(A,1)\right)\\
        &\leq \sum_{i \in \V(A)} \left[d(i,B) + d(B,1)\right] +\sum_{i \in \V(B)} \left(d(i,B) + 2d(B,1) + \frac{1}{2}\left[d(f(i), B) + d(B,1)\right]\right)\\
        &\leq \sum_{i \in \V(A)} \left[2d(i,B) + 2d(B,1)\right] +\sum_{i \in \V(B)} \left(d(i,B) + 2d(B,1)\right)
    \end{align*}

    Where the last line follows from each $f(i) \in \V(A)$ occurring at most twice in the second summation. Thus the cost of $M_1$ is at most twice that of optimum when we choose $A$.
    
    Now suppose our mechanism chooses $B$ (i.e., $M_1(1)=B$) and $A$ is optimal so $M^*(1)=A$. This means that $n_b > 2n_a$. Let $f_1 \colon \V(A) \to \V(B)$ and $f_2 \colon \V(A) \to \V(B)$ both be one-to-one functions such that there does not exist a $i_1,i_2 \in V(A)$ such that $f_1(i_1) = f_2(i_2)$; so $f_1$ and $f_2$ are injective functions such that the ranges do not intersect. In this case,
    \begin{align*}
        \cost(M_1) &= \sum_{i \in \V} \left[ d(i,B) + d(B,1)\right]\\
        &= \sum_{i \in \V(A)} \left[d(i,B) + d(B,1)\right] + \sum_{i \in \V(B)} \left[d(i,B) + d(B,1)\right]\\
        &\leq \sum_{i \in \V(A)} \left[d(i,B) + d(B,1)\right] + \sum_{i \in \V(B)} \left[d(i,A) + d(A,1)\right]\\
        &\leq \sum_{i \in \V(A)} \left(\frac{1}{2}\left[d(i,A) + d(A, f_1(i)) + d(f_1(i),B) + d(i,A) + d(A, f_2(i)) + d(f_2(i),B)\right] + d(B,1)\right)\\&\hspace{.25in} + \sum_{i \in \V(B)} \left[d(i,A) + d(A,1)\right]\\
        &= \sum_{i \in \V(A)} \left(\frac{1}{2}\left[d(i,A) + d(A, f_1(i)) + d(f_1(i),B) + 2d(B,1) + d(i,A) + d(A, f_2(i)) + d(f_2(i),B)\right]\right)\\&\hspace{.25in} + \sum_{i \in \V(B)} \left[d(i,A) + d(A,1)\right]\\
        &\leq \sum_{i \in \V(A)} \left(\frac{1}{2}\left[d(i,A) + d(A, f_1(i)) + d(f_1(i),A) + 2d(A,1) + d(i,A) + d(A, f_2(i)) + d(f_2(i),A) \right]\right)\\&\hspace{.25in} + \sum_{i \in \V(B)} \left[d(i,A) + d(A,1)\right]
        \end{align*}
        The above inequality is because $f_1(i)$ and $f_2(i)$ both prefer $B$ to $A$ so $d(f_1(i), B) + d(B,1) \leq d(f_1(i), A) + d(A,1)$ as well as $d(f_2(i), B) + d(B,1) \leq d(f_2(i), A) + d(A,1)$. Thus we can continue bounding the cost as
        \begin{align*}
        \cost(M_1) &\leq \sum_{i \in \V(A)} \left(\frac{1}{2}\left[d(i,A) + d(A, f_1(i)) + d(f_1(i),A) + d(i,A) + d(A, f_2(i)) + d(f_2(i),A)\right] + d(A,1)\right)\\&\hspace{.25in} + \sum_{i \in \V(B)} \left[d(i,A) + d(A,1)\right]\\
        &= \sum_{i \in \V(A)} \left(d(i,A) + d(A,1) + \frac{1}{2}\left[d(A, f_1(i)) + d(f_1(i),A) + d(A, f_2(i)) + d(f_2(i),A)\right]\right)\\&\hspace{.25in} + \sum_{i \in \V(B)} \left[d(i,A) + d(A,1)\right]\\
        &= \sum_{i \in \V(A)} \left(d(i,A) + d(A,1) + d(f_1(i),A)  + d(f_2(i),A)\right)\\&\hspace{.25in} + \sum_{i \in \V(B)} \left[d(i,A) + d(A,1)\right]\\
        &\leq \sum_{i \in \V(A)} \left[d(i,A) + d(A,1)\right] + \sum_{i \in \V(B)} \left[2d(i,A) + d(A,1)\right]\\
        &\leq 2\cost(M^*).
    \end{align*}

    \begin{figure}[ht]
    \centering
    \begin{minipage}{.5\textwidth}
      \centering
    \includegraphics[]{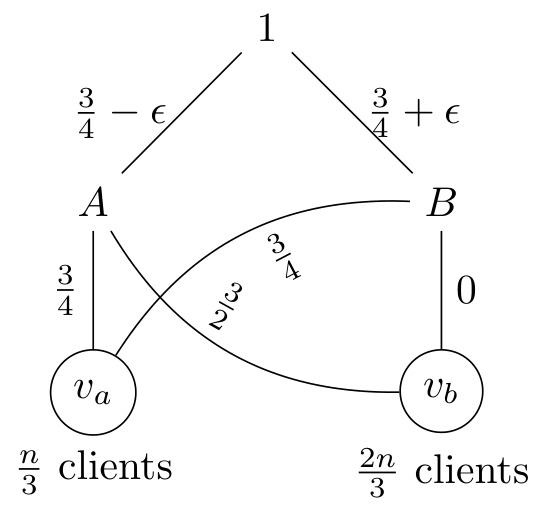}
    \subcaption{Metric space $d_1$.}
    \end{minipage}%
    \begin{minipage}{.5\textwidth}
      \centering
    \includegraphics[]{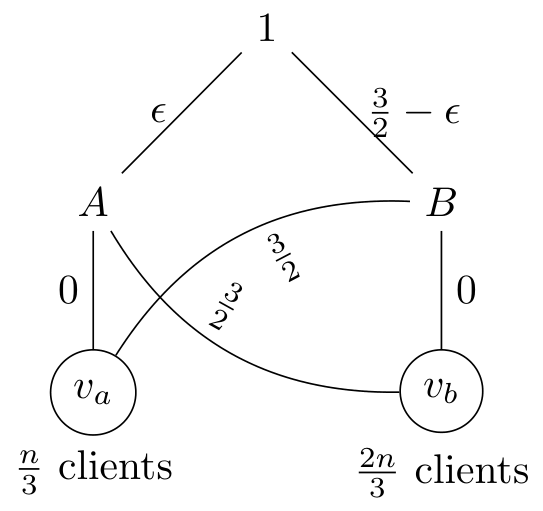}
    \subcaption{Metric space $d_2$.}
    \end{minipage}
    \caption{Two consistent metric spaces with distortion $2$.}
    \label{fig:thm41}
\end{figure}

    Thus we have a mechanism that achieves the best possible distortion of $2$. Now to show that no mechanism can do better we consider the two instances given in Figure \ref{fig:thm41}. Both have the same preference profiles, so our mechanisms cannot tell them apart. We can embed $d_1$ into $\R$ by setting $A$ to be at $0$, $1$ at $\frac{3}{4}-\epsilon$, $v_A$ at $\frac{3}{4}$ and both $B$ and $v_b$ at $\frac{3}{2}$. We can also embed $d_2$ into $\R$ by setting both $A$ and $v_a$ to be $0$, $1$ at $\epsilon$, and both $v_b$ and $B$ at $\frac{3}{2}$. If we pick $A$ based on the preference information and we are in metric space $d_1$, then we have $\cost(A) = \frac{n}{3}\left(\frac{3}{4} + \frac{3}{4} - \epsilon\right) + \frac{2n}{3}\left(\frac{3}{2} + \frac{3}{4}-\epsilon\right) = n(2-\epsilon)$ and $\cost(B)=\frac{n}{3}\left(\frac{3}{4} + \frac{3}{4} + \epsilon\right) + \frac{2n}{3}\left(\frac{3}{4}\right) = n(1 + \frac{1}{3}\epsilon)$ so $\lim_{\epsilon \to 0}\frac{\cost(A)}{\cost(B)} = 2$. Likewise if we chose $B$ then in metric space $d_2$ $\cost(A) = \frac{n}{3}\left(\epsilon\right) + \frac{2n}{3}\left(\frac{3}{2} + \epsilon\right) = n(\epsilon+1)$ and $\cost(B) = \frac{n}{3}\left(\frac{3}{2} + \frac{3}{2} - \epsilon\right) + \frac{2n}{3}\left(\frac{3}{2} - \epsilon\right) = n(2-\epsilon)$. Thus $\lim_{\epsilon \to 0}\frac{\cost(B)}{\cost(A)} = 2$. So we have a lower bound of $2$ as desired.
\end{proof}

\section{Knowing Candidate and Position Locations}\label{sec:3}

In this section, we look at what is possible when candidate and position locations are known exactly (and thus all the distances between them are known), but absolutely nothing is known about voter locations or preferences. This is often possible since information about candidates and positions can be public knowledge, but voter locations and preferences are either private, or surveying a large number of people about their preferences is impractical. % but measuring distances between facilities is easy. 
%First we consider the case when we only know the exact locations of positions of candidates but no information about the voters.

\begin{thm}
    There exists a deterministic mechanism that gives a distortion of at most $3$ for the \lineup{m}{\ell} while knowing only candidate-position distance pairs. Moreover, no deterministic mechanism can achieve a distortion bound less than $3$ while knowing only position and candidate locations. 
\end{thm}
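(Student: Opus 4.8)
The plan is to exhibit a mechanism, prove its distortion is at most $3$ by a triangle-inequality charging argument that mirrors the ones already used in this section, and then construct a matching lower-bound instance. For the mechanism: since we know all distances $d(c,p)$ between candidates and positions but nothing about voters, the natural choice is to ignore the voters entirely and simply compute the minimum-weight injective matching $M_1 \colon \Pos \to \C$ with respect to $\cost_P$ alone, i.e.\ $M_1 = \arg\min_M \sum_{p \in \Pos} d(M(p), p)$ (equivalently $\arg\min_M \cost_P(M)$, since the voter count $n$ just scales this). This is computable from the known information.

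For the upper bound, let $M^*$ be the optimal matching for the true metric. I would bound $\cost(M_1) = \cost_V(M_1) + \cost_P(M_1)$ in two pieces. First, $\cost_P(M_1) \le \cost_P(M^*)$ by optimality of $M_1$ for the position-cost, so this part is already dominated with factor $1$. For $\cost_V(M_1)$, I would route each voter-to-candidate distance $d(v, M_1(p))$ through the position $p$ and then through $M^*$: using the triangle inequality twice,
\begin{align*}
d(v, M_1(p)) &\le d(v, M^*(p)) + d(M^*(p), p) + d(p, M_1(p)).
\end{align*}
Summing over $v \in \V$ and $p \in \Pos$ gives $\cost_V(M_1) \le \cost_V(M^*) + \cost_P(M^*) + \cost_P(M_1) \le \cost_V(M^*) + 2\cost_P(M^*)$, using $\cost_P(M_1)\le\cost_P(M^*)$ again. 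Adding the bound on $\cost_P(M_1)$ yields
\begin{align*}
\cost(M_1) &\le \cost_V(M^*) + 2\cost_P(M^*) + \cost_P(M^*) = \cost_V(M^*) + 3\cost_P(M^*) \le 3\cost(M^*),
\end{align*}
which is the desired bound. (If the naive routing above is slightly lossy, an alternative is to route $d(v,M_1(p))$ through $M^*(\inv{M_1}\text{-image})$ more carefully, but the three-term triangle inequality should already suffice because $\cost_P(M_1)\le\cost_P(M^*)$ absorbs the extra term.)

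For the lower bound, I would place all voters at a single point far away so that $\cost_V$ is essentially forced to be identical across any two matchings consistent with the known candidate/position locations, reducing the problem to showing a factor-$3$ lower bound for pure metric bipartite matching where the mechanism must commit to one matching. The classic construction: two positions $p_1, p_2$ and two candidates $A, B$ on a line, with $d(A,p_1) = d(B,p_2) = 1-\epsilon$ and $d(A,p_2) = d(B,p_1) = 1+\epsilon$ won't separate the mechanism since locations are known; instead I need two instances with the \emph{same} known candidate-position distances but different voter placements — e.g.\ voters clustered near $A$ in one instance and near $B$ in the other, with the candidate-position geometry symmetric so the mechanism cannot distinguish them, and the $1-\epsilon$ vs.\ $1+\epsilon$ gap appearing in the voter distances. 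The main obstacle is getting this construction exactly right: I must ensure the two instances agree on all of $\{d(c,p)\}$ (the only information the mechanism sees) while the true optimum differs by a factor approaching $3$, which requires carefully balancing the voter-candidate distances against the candidate-position distances so that picking the "wrong" matching costs $3-o(1)$ times the optimum, analogous to Figure \ref{fig:thm11} but with the roles of the fixed geometry and the hidden data swapped.
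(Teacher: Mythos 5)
Your proposal is correct and takes essentially the same approach as the paper: the same mechanism (pick the matching $M_1$ minimizing $\cost_P$, ignoring voters), and the same triangle-inequality charging $d(v,M_1(p)) \le d(v,M^*(p)) + d(M^*(p),p) + d(M_1(p),p)$ combined with $\cost_P(M_1)\le\cost_P(M^*)$ to get $\cost(M_1)\le \cost_V(M^*)+3\cost_P(M^*)\le 3\cost(M^*)$. The lower-bound idea you settle on is also the paper's: a single position with two candidates at (nearly) equal distance from it, $d(A,B)=2$, and all $n$ voters hidden on $A$ in one instance and on $B$ in the other, so both instances present identical candidate-position data and the wrong choice costs a factor approaching $3$.
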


\begin{proof}
    Let $M$ be the matching that minimizes the cost of $\cost_P(M)$ and $M^*$ be the optimal matching. $M$ can be computed from the information we have. Then we have
    \begin{align*}
        \cost(M) &= \sum_{v \in \V} \sum_{p \in \Pos} \left[d(v, M(p)) + d(M(p), p)\right]\\
        &\leq \sum_{v \in \V} \sum_{p \in \Pos} \left[d(v, M^*(p)) + d(M^*(p), p) + d(M(p), p) + d(M(p), p)\right]\\
        &= \sum_{v \in \V} \sum_{p \in \Pos} \left[d(v, M^*(p)) + d(M^*(p), p)\right] + \sum_{v \in \V} \sum_{p \in \Pos} 2d(M(p), p)
    \end{align*}
    Since $M$ minimizes the cost of $\sum_{v \in \V} \sum_{p \in \Pos} d(M(p), p)$, we have
    \begin{align*}
        \cost(M) &\leq \sum_{v \in \V} \sum_{p \in \Pos} \left[d(v, M^*(p)) + d(M^*(p), p)\right] + \sum_{v \in \V} \sum_{p \in \Pos} 2d(M^*(p), p)\\
        &\leq \sum_{v \in \V} \sum_{p \in \Pos} \left[d(v, M^*(p)) + 3d(M^*(p), p)\right]\\
        &\leq 3 \cost(M^*).
    \end{align*}
    as desired. 

    To prove the lower bound, consider two instances both with $\Pos = \{1\}$, $\C = \{A,B\}$ and $d(A,1) = 1-\epsilon$, $d(B,1) = 1+\epsilon$ and $d(A,B) = 2$. One instance will have $n$ voters on $A$ and the other will have $n$ on top of $B$. If we choose $A$, then the second instance will have a distortion of $\lim_{\epsilon \to 0}\frac{n (d(A,B) + d(A,1))}{n d(B,1)} = \lim_{\epsilon \to 0}\frac{3-\epsilon}{1+\epsilon} = 3$. If we choose $B$ then the first instance will have a distortion of $\lim_{\epsilon \to 0}\frac{n (d(A,B) + d(B,1))}{n d(A,1)} = \lim_{\epsilon \to 0}\frac{3+\epsilon}{1-\epsilon} = 3$.
\end{proof}

Much like the mechanisms in Theorem \ref{thm:k1serialdictator} and \ref{thm:kkserial} this mechanism is strategyproof and does not rely on the preferences of the voters. This means that we can achieve low distortion without surveying voters, and instead focusing on increasing our knowledge of the pairwise distances between candidates and positions. Note that for \lineups{m}{m}, the above mechanism trivially computes the optimum solution, since all candidates must always be chosen in such elections, and thus $\cost_V(M)$ of any matching $M$ is always the same, and only $\cost_P(M)$ needs to be minimized.

We will now consider the case when we know the exact positions of candidates and positions {\em in addition} to voter ordinal preferences. We will consider the case where we have a single position, and first look at the case where there are 2 candidates.

The following is a lemma that will be useful later in the proof on Theorem \ref{thm:mech-1-k-dist}. 
    \begin{lem}
        The function $f(x,y) = 1 + \frac{2y}{(x+y) \frac{y}{x} + x} \leq \frac{5}{3}$ for $x,y > 0$. \label{lem:53-function}
    \end{lem}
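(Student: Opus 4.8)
The plan is to treat this as a single-variable optimization by exploiting the homogeneity of the expression. Setting $t = y/x > 0$, I would divide numerator and denominator of the fraction $\frac{2y}{(x+y)\frac{y}{x} + x}$ through by $x$: the denominator becomes $(x+y)\frac{y}{x^2}\cdot x + x = \ldots$ — more cleanly, multiply numerator and denominator by $x/y$ or by $x$ to get everything in terms of $t$. Concretely, $(x+y)\frac{y}{x} + x = \frac{(x+y)y + x^2}{x} = \frac{x^2 + xy + y^2}{x}$, so the fraction equals $\frac{2xy}{x^2+xy+y^2}$, and hence $f(x,y) = 1 + \frac{2xy}{x^2+xy+y^2}$. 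Dividing top and bottom by $x^2$ gives $f = 1 + \frac{2t}{t^2 + t + 1}$ with $t = y/x > 0$.

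So it suffices to show $\frac{2t}{t^2+t+1} \le \frac{2}{3}$ for all $t > 0$, i.e. $g(t) := t^2 + t + 1 - 3t = t^2 - 2t + 1 = (t-1)^2 \ge 0$, which is immediate, with equality exactly at $t = 1$ (equivalently $x = y$). This gives the bound $f \le 1 + \frac{2}{3} = \frac{5}{3}$.

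I do not expect a genuine obstacle here; the only thing to be a little careful about is the algebraic simplification of the compound fraction $(x+y)\frac{y}{x} + x$ into $\frac{x^2+xy+y^2}{x}$, and making sure the substitution $t = y/x$ is legitimate (it is, since $x, y > 0$ means $t$ ranges over all of $(0,\infty)$, and the expression is invariant under scaling $(x,y) \mapsto (\lambda x, \lambda y)$). After that the inequality reduces to the trivial observation that a square is nonnegative. I would write the proof in three short displayed lines: the simplification to $1 + \frac{2xy}{x^2+xy+y^2}$, the reduction via $(t-1)^2 \ge 0$, and the conclusion.
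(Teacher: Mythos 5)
Your proof is correct and complete: the compound fraction indeed simplifies to $\frac{2xy}{x^2+xy+y^2}$, and the bound then reduces to $(t-1)^2\ge 0$ with $t=y/x$, with equality exactly at $x=y$, giving $f\le \frac{5}{3}$. Your route differs from the paper's in the finishing step. The paper also exploits degree-$0$ homogeneity, but normalizes via $x+y=1$ and then argues by single-variable calculus: the resulting function $g(x)=1+\frac{2(1-x)}{\frac{1-x}{x}+x}$ is continuous on $(0,1)$, has its unique critical point at $x=\frac12$, is concave there, and attains the value $\frac53$. Your argument avoids calculus entirely: after the algebraic simplification, the inequality $x^2+xy+y^2\ge 3xy$ (equivalently $(x-y)^2\ge 0$) does all the work, so you never need to locate a critical point or verify concavity. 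This is both shorter and more self-contained than the paper's sketch, which leaves the critical-point and concavity claims unverified; the paper's normalization $x+y=1$ is a fine alternative, but it buys nothing here that the substitution $t=y/x$ does not, and it commits the proof to a calculus argument where a nonnegative square suffices.
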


    \begin{proof}
        First note that the function $f(\lambda x, \lambda y) = f(x,y)$ so $f$ is a homogeneous function of degree $0$. Thus we can set $x+y =1$ and consider the maximum of a function of a single variable $g(x) = 1 + \frac{2(1-x)}{\frac{1-x}{x} + x}$. This function is clearly continuous for $x \in (0,1)$ and has a single critical point at $x = \frac{1}{2}$ and is concave down with value $\frac{5}{3}$.
    \end{proof}

\begin{thm}
    For the \lineup{2}{1} given distances for candidate-position pairs, together with voter preferences, there exists a deterministic mechanism that achieves a distortion of at most $\frac{5}{3}$. \label{thm:mech-1-k-dist}
\end{thm}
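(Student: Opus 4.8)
The plan is to analyze the following mechanism. Write $n=n_a+n_b$ and (as in the statement) assume $d(A,1)\le d(B,1)$. The mechanism selects $A$ if $n_a\cdot d(B,1)\ge n_b\cdot d(A,1)$, and selects $B$ otherwise. Informally, it defaults to the candidate that fits the position better, overriding this choice only when enough voters prefer the other candidate relative to the (exactly known) fitness gap. The role of knowing candidate--position distances is exactly to set this threshold sharply, which is what lets us beat the bound of $2$ available with only position preferences.

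First I would reduce to line instances. Exactly as in the proof of Theorem~\ref{thm:21serialdictator}, for any metric consistent with our information we may re-embed $\V\cup\C\cup\Pos$ on $\R$ so that $A$ and $B$ are the endpoints, every voter and the position lie between them, all voter and position preferences are unchanged, and the distortion of any fixed choice does not decrease. So it suffices to bound the distortion of the chosen candidate on such a line instance. There, place $A$ at $0$, the position $1$ at coordinate $a'$, and $B$ at coordinate $a'+b'$ (so, in this instance, $d(A,1)=a'$, $d(B,1)=b'$, $d(A,B)=a'+b'$), and let $t_v\in[0,a'+b']$ be the coordinate of voter $v$; the voters preferring $A$ are exactly those with $t_v\le b'$. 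A direct computation gives $\cost(A)=\sum_v(t_v+a')$ and $\cost(B)=\sum_v\bigl((a'+b'-t_v)+b'\bigr)$, hence the key identity $\cost(A)+\cost(B)=2n(a'+b')$. One also checks that the mechanism's inequality survives the embedding: $n_a d(B,1)\ge n_b d(A,1)$ in the original instance implies $n_a b'\ge n_b a'$ in the embedded one (the embedding subtracts the same quantity from $d(A,1)$ and $d(B,1)$).

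Now suppose the mechanism picks $A$, i.e.\ $n_a b'\ge n_b a'$. Since $\cost(A)$ is increasing in each $t_v$, with the $A$-voters confined to $[0,b']$ and the $B$-voters to $[0,a'+b']$, we get $\cost(A)\le n_a b'+n_b(a'+b')+na'=(n+n_b)a'+nb'$, so $\cost(B)=2n(a'+b')-\cost(A)\ge n_a a'+nb'$. Therefore
\[
\frac{\cost(A)}{\cost(B)}\ \le\ \frac{(n+n_b)a'+nb'}{n_a a'+nb'}\ =\ 1+\frac{2n_b a'}{n_a(a'+b')+n_b b'}\ \le\ 1+\frac{2a'b'}{a'^2+a'b'+b'^2}\ =\ f(a',b'),
\]
where the last inequality lower-bounds the denominator using $n_a\ge n_b a'/b'$. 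By Lemma~\ref{lem:53-function}, $f(a',b')\le \tfrac53$, so picking $A$ has distortion at most $\tfrac53$ on the line instance, hence at most $\tfrac53$ on the original instance. The case where the mechanism picks $B$ is symmetric after interchanging $A\leftrightarrow B$ and $n_a\leftrightarrow n_b$, yielding $\cost(B)/\cost(A)\le f(a',b')\le\tfrac53$.

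The main obstacle is the tight line bound $\cost(A)/\cost(B)\le 1+\tfrac{2n_b a'}{n_a(a'+b')+n_b b'}$. The naive triangle-inequality estimate $\cost(A)\le \cost(B)+2\,d(A,1)\,n_b$ gives only distortion $3$; the improvement comes from the identity $\cost(A)+\cost(B)=2n\,d(A,B)$, which holds precisely because the embedding pushes all voters onto the $A$--$B$ geodesic (equivalently, the worst consistent metric is one in which every voter lies between $A$ and $B$). Once that is in place, substituting the mechanism's exact threshold into the denominator is exactly what converts the bound into $f$, and Lemma~\ref{lem:53-function} closes it; the instance $d(A,1)=d(B,1)$ with $n_a=n_b$ attains $\tfrac53$, showing this mechanism cannot do better.
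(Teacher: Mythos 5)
Your mechanism is exactly the paper's, and your analysis of the case where it selects $A$ (the candidate closer to the position, under your labelling $d(A,1)\le d(B,1)$) does go through --- although not for the reason you give: subtracting the same $\zeta(1)$ from $d(A,1)$ and $d(B,1)$ does not by itself preserve an inequality whose two sides carry different weights $n_a,n_b$. What actually saves you there is that either $n_a\ge n_b$, whence $n_a b'\ge n_b b'\ge n_b a'$, or $n_a<n_b$, whence the original threshold rearranges to $n_a b'-n_b a'\ge (n_b-n_a)\zeta(1)\ge 0$. The genuine gap is the case where the mechanism selects $B$, which you dismiss as ``symmetric.'' It is not symmetric once the labelling $a'\le b'$ is fixed: there you would need $n_b\,d(A,1)>n_a\,d(B,1)$ to imply $n_b a'\ge n_a b'$, i.e.\ $n_b a'-n_a b'\ge (n_a-n_b)\zeta(1)$ to force nonnegativity; but selecting $B$ forces $n_b>n_a$, so the right-hand side is negative and the implication simply fails.

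Moreover, this cannot be repaired within your reduce-to-the-line framework, because the embedding destroys exactly the cardinal information the threshold uses (the ratio $d(A,1)/d(B,1)$), and the embedded distortion of the chosen candidate can genuinely exceed $\frac{5}{3}$. Concretely, take $d(A,B)=1$, $d(A,1)=10$, $d(B,1)=10.9$, one voter at $A$ and two voters at $B$. The mechanism selects $B$ (since $2\cdot 10>1\cdot 10.9$), and its true distortion is $33.7/32\approx 1.05$; but the embedded instance has $a'=0.05$, $b'=0.95$, where choosing $B$ costs $3.85$ versus $2.15$ for $A$, a ratio of about $1.79>\frac{5}{3}$. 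Since your argument upper-bounds the original distortion by the embedded one, no sharpening of the line analysis can close this case; you must keep $\zeta(1)$, i.e.\ the true candidate--position distances, in the analysis. This is in effect what the paper's proof does: it never re-embeds, but bounds $\cost(M_A)-\cost(M_B)\le 2n_b\min\{d(A,1),\mu\}$ directly via triangle inequalities (with $A$ the chosen and $B$ the optimal candidate, and $\mu$ the average distance from the $A$-voters to $B$), and only then combines the mechanism's threshold $n_b\, d(A,1)\le n_a\, d(B,1)$ with Lemma \ref{lem:53-function}. The embedding argument is safe in Theorem \ref{thm:21serialdictator} because that mechanism is purely ordinal and unaffected by re-embedding; your mechanism is cardinal in precisely the distances the embedding changes.
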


\begin{proof}
Assume $\C = \{A,B\}$ and $\Pos = \{1\}$. Denote the number of voters who rank $A$ as their top choice by $n_a$, and the number of voters who rank $B$ as their top choice by $n_b$. The mechanism is as follows: we choose $A$ as the winner if $n_b\cdot d(A,1) \leq n_a\cdot d(B,1)$ and choose $B$ otherwise. Clearly, if $n_a = 0$ or $n_b = 0$, we can pick the optimal candidate accordingly, so we will assume $n_a > 0$ and $n_b > 0$.  Without loss of generality, assume we chose $A$ but choosing $B$ is optimal. Let $M_A$ be the matching such that $M_A(1) = A$ and $M_B$ be the matching such that $M_B(1) = B$. Denote the set of voters who prefer $A$ as $\V(A)$ and those who prefer $B$ as $\V(B)$. Also note that the optimal value is $\cost(M_B)$. First we can write the cost of $M_A$ as
    
     \begin{align*}
         \cost(M_A) &= \sum_{i\in \V} \left[d(i,A) + d(A,1)\right]\\
         &= \sum_{i\in \V} \left[d(i,A) + d(A,1) + d(i, B) + d(B,1) - d(i, B) - d(B, 1)\right]\\
         &= \sum_{i\in \V} \left[d(i, B) + d(B,1)\right] + \sum_{i\in \V}\left[d(i,A) + d(A,1) - d(i, B) - d(B, 1)\right]\\
         &= \sum_{i\in \V} \left[d(i, B) + d(B,1)\right] + \sum_{i\in \V(A)}\left[d(i,A) + d(A,1) - d(i, B) - d(B, 1)\right]\\
         &\hspace{.25in} + \sum_{i\in \V(B)}\left[d(i,A) + d(A,1) - d(i, B) - d(B, 1)\right]\\
         &=\cost(M_B) + \sum_{i\in \V(A)} \left[ d(i,A) - d(i,B) + d(A,1) - d(B,1)\right] +\\
         &\hspace{.25in} n_b(d(A,1) - d(B,1)) + \sum_{j \in \V(B)} (d(j,A) - d(j,B)).
     \end{align*}
     If $i \in \V(A)$ then $i$ prefers $A$ to $B$, so $$\sum_{i\in \V(A)} \left[ d(i,A) - d(i,B) + d(A,1) - d(B,1)\right] = \sum_{i\in \V(A)} \left[ (d(A,1)+d(A,i)) - (d(B,1) + d(B,i))\right] \leq 0.$$ Thus $\cost(M_A)$ can be bounded above by
     \begin{align*}
         \cost(M_A) &\leq \cost(M_B) + n_b(d(A,1) - d(B,1)) + \sum_{j \in \V(B)} (d(j,A) - d(j,B)).\\
     \end{align*}
     Here using triangle inequality we have that $\sum_{j \in \V(B)} (d(j,A) - d(j,B)) \leq n_b \cdot d(A,B)$. Therefore, we have
     \begin{align*}
         \cost(M_A) &\leq \cost(M_B) + n_b \left(d(A,1) - d(B,1) + d(A,B)\right)\\
         &\leq \cost(M_B) + n_b \left(d(A,1) - d(B,1) + d(A,1) + d(B,1)\right)\\
         &= \cost(M_B) + 2n_b \cdot d(A,1).
     \end{align*}
     In addition, we also have that
     \begin{align*}
          \cost(M_A) &\leq \cost(M_B) + n_b(d(A,1) - d(B,1)) + \sum_{j \in \V(B)} (d(j,A) - d(j,B))\\
          &\leq \cost(M_B) + n_b \left(d(A,1) - d(B,1) + d(A,B)\right)\\
          &\leq \cost(M_B) + n_b(d(A,1) - d(B,1) + d(A,i) + d(B,i)), \forall i \in \V(A)\\
          &= \cost(M_B) + n_b(d(A,1) - d(B,1) + d(A,i) + d(B,i) + d(B,i) - d(B,i)), \forall i \in \V(A)\\
          &\leq \cost(M_B) + n_b([d(A,1) + d(A,i)] - [d(B,1) + d(B,i)] + 2d(B,i)), \forall i \in \V(A).\\
     \end{align*}
     Here note that since $i$ prefers A, we have $[d(A,1) + d(A,i)] - [d(B,1) + d(B,i)] \leq 0$. Then we can see that 
     $$\cost(M_A) \leq \cost(M_B) + 2n_b\cdot d(B,i), \forall i \in \V(A).$$
     
    Let $\mu = \frac{1}{n_a} \sum_{i \in \V(A)} d(i,B)$. Then, since $\cost(M_A) \leq \cost(M_B) + 2n_b\cdot d(B,i)$ for all $i \in \V(A)$, we have  $\cost(M_A) \leq \cost(M_B) + 2n_b\cdot\mu$. Thus the distortion is bounded by
     \begin{align*}
         \frac{\cost(M_A)}{\cost(M_B)} \leq 1 + \min \left\{\frac{2n_b\cdot d(A,1)}{\cost(M_B)}, \frac{2n_b\cdot \mu}{\cost(M_B)}\right\}.
     \end{align*}
     There are two cases: (i) $\mu \geq d(A,1)$ and (ii) $\mu \leq d(A,1)$. We first consider case (i), $\mu \geq d(A,1)$. Then we have 
     \begin{align*}
         \frac{\cost(M_A)}{\cost(M_B)} &\leq 1 + \min \left\{\frac{2n_b\cdot d(A,1)}{\cost(M_B)}, \frac{2n_b\cdot \mu}{\cost(M_B)}\right\}\\
         &= 1 + \frac{2n_b\cdot d(A,1)}{\cost(M_B)}\\
         &= 1 + \frac{2n_b\cdot d(A,1)}{(n_a + n_b) d(B,1) + \sum_{i \in \V(A)} d(i, B) + \sum_{j \in \V(B)} d(j, B)}\\
         &\leq 1 + \frac{2n_b\cdot d(A,1)}{(n_a + n_b) d(B,1) + \sum_{i \in \V(A)} d(i, B)}\\
         &= 1 + \frac{2n_b\cdot d(A,1)}{(n_a + n_b) d(B,1) + n_a \cdot \mu}.\\
    \end{align*}
    Here recall that by our mechanism, since $A$ won the election, we have $n_b \cdot d(A,1) \leq n_a \cdot d(B,1)$, which means that $d(B,1) \geq \frac{n_b}{n_a} d(A,1)$. In addition, recall that we also have $\mu \geq d(A,1)$, so
    \begin{align*}
         \frac{\cost(M_A)}{\cost(M_B)} 
         &\leq 1 + \frac{2n_b\cdot d(A,1)}{(n_a + n_b) \frac{n_b}{n_a} d(A,1) + n_a\cdot d(A,1)}\\
         &\leq 1 + \frac{2n_b}{(n_a + n_b) \frac{n_b}{n_a} + n_a}.
     \end{align*}
     Now, we consider case (ii), $\mu \leq d(A,1)$. We have 
     \begin{align*}
         \frac{\cost(M_A)}{\cost(M_B)} &\leq 1 + \min \left\{\frac{2n_b\cdot d(A,1)}{\cost(M_B)}, \frac{2n_b\cdot \mu}{\cost(M_B)}\right\}\\
         &= 1 + \frac{2n_b\cdot \mu}{\cost(M_B)}\\
         &= 1 + \frac{2n_b\cdot \mu}{(n_a + n_b) d(B,1) + \sum_{i \in \V(A)} d(i, B) + \sum_{j \in \V(B)} d(j, B)}\\
         &\leq 1 + \frac{2n_b\cdot \mu}{(n_a + n_b) d(B,1) + \sum_{i \in \V(A)} d(i, B)}\\
         &= 1 + \frac{2n_b\cdot \mu}{(n_a + n_b) d(B,1) + n_a \cdot \mu}.\\
    \end{align*}

Similar to case (i), recall that by our mechanism, since $A$ won the election, we have $n_b \cdot d(A,1) \leq n_a \cdot d(B,1)$, which means that $d(B,1) \geq \frac{n_b}{n_a} d(A,1)$. Besides, since we have $d(A,1) \geq \mu$, we can see that $d(B,1) \geq \frac{n_b}{n_a} d(A,1) \geq \frac{n_b}{n_a} \mu$. Therefore, we have
    \begin{align*}
         \frac{\cost(M_A)}{\cost(M_B)} 
         &\leq 1 + \frac{2n_b\cdot \mu}{(n_a + n_b) \frac{n_b}{n_a} \cdot \mu + n_a \cdot \mu}\\
         &= 1 + \frac{2n_b}{(n_a + n_b) \frac{n_b}{n_a} + n_a}.
     \end{align*}
     Since both cases give the same result, we can conclude that 
     $$\frac{\cost(M_A)}{\cost(M_B)} 
         \leq 1 + \frac{2n_b}{(n_a + n_b) \frac{n_b}{n_a} + n_a}.$$
    However, since $n_a, n_b > 0$, by Lemma \ref{lem:53-function}, we have that 
    $$\frac{\cost(M_A)}{\cost(M_B)} \leq 1 + \frac{2n_b}{(n_a + n_b) \frac{n_b}{n_a} + n_a} \leq \frac{5}{3}$$
     Thus our mechanism gives a distortion that is at most $\frac{5}{3}$ as desired.
\end{proof}

We will then show that this bound is tight in the next theorem. 

\begin{thm}
        There is no deterministic mechanism for the \lineup{m}{1} problem knowing candidate position distance pairs and with voter preferences that achieves a better distortion better than $\frac{5}{3}$, even on a line and with only $2$ candidates.  \label{thm:lower-1-2-distances}
\end{thm}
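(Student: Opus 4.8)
The plan is to construct two instances on a line that share the same voter preferences and the same candidate--position distances, forcing any deterministic mechanism to behave identically on both, and then show that whichever candidate it picks, one of the two instances yields distortion approaching $\frac53$. By the embedding argument inside the proof of Theorem~\ref{thm:mech-1-k-dist}, it suffices to work with voters lying between the two candidates $A$ and $B$ on a line; this both simplifies the construction and is without loss of generality. I would take $\C=\{A,B\}$, $\Pos=\{1\}$, place $A$ at $0$ and $B$ at $1$ (so $d(A,B)=1$), and put the single position $1$ at some point $t\in(0,1)$, so that $d(A,1)=t$ and $d(B,1)=1-t$; these candidate--position distances are the only location information the mechanism sees.

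Next I would pick the voter profile. Put $n_a$ voters at a point that makes them prefer $A$ and $n_b$ voters at a point that makes them prefer $B$; the extreme case is voters stacked at the candidates themselves, i.e.\ $n_a$ voters at $A$ and $n_b$ voters at $B$. Then in the instance where $A$ is truly optimal we need $\cost(M_B)/\cost(M_A)$ large, and in the instance where $B$ is truly optimal we need $\cost(M_A)/\cost(M_B)$ large; crucially the mechanism cannot distinguish the ``intended optimum,'' since the preference lists (``$n_a$ voters: $A\succ B$; $n_b$ voters: $B\succ A$'') and the distances $d(A,1),d(B,1),d(A,B)$ are the same. Writing out $\cost(M_A)=n_a d(A,1)+n_b\bigl(d(A,B)+d(A,1)\bigr)$ when the $B$-voters sit at $B$, and similarly $\cost(M_B)$, I would compute both ratios as explicit functions of $t$, $n_a$, $n_b$. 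The mechanism, being deterministic, commits to $A$ or to $B$ based only on $(t,n_a,n_b)$; the adversary then chooses the instance (i.e.\ which candidate is actually closer-to-optimal-making the voters' true positions) to punish that choice.

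The key computation is to tune $t$ and the ratio $n_b/n_a$ so that the worse of the two ratios is forced up to $\frac53$. I expect the right tuning to mirror the tight case of Lemma~\ref{lem:53-function}, i.e.\ the interior critical point where $x=y$ in that lemma, which should translate here to something like $n_a\cdot d(B,1)=n_b\cdot d(A,1)$ being the boundary of the mechanism's decision, with the voter mass split so that $n_b=2n_a$ (echoing the $\frac53$-threshold implicit in Theorem~\ref{thm:mech-1-k-dist}). Concretely I would set up the two ratios, impose that a mechanism choosing $A$ loses on one instance and a mechanism choosing $B$ loses on the other, and solve for the parameters making both losses equal; by Lemma~\ref{lem:53-function} the common value cannot be pushed below $\frac53$, and the construction shows it is attained in the limit (with a vanishing perturbation $\epsilon$ to break exact ties in preferences if needed).

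The main obstacle is the bookkeeping of getting a \emph{single} pair of parameters $(t,n_a,n_b)$ that makes the construction tight simultaneously against both possible mechanism decisions --- it is easy to get $\frac53$ against one choice but only a weaker bound against the other. Resolving this requires choosing the voter positions and $t$ so the two distortion ratios, as functions of the shared data, are equal at the extremal point; I anticipate that placing voters exactly at the candidates (rather than strictly between them) and choosing $t$ and $n_b/n_a$ at the balance point of Lemma~\ref{lem:53-function} is exactly what makes this work, so that no additional case analysis beyond a short limit computation is needed.
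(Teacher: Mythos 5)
Your overall strategy is the right one and matches the paper's: build two instances on a line with identical candidate--position distances and identical preference profiles, so a deterministic mechanism must make the same choice on both, and show each choice is punished by $\tfrac{5}{3}-o(1)$ on one of them. However, as written the construction has a genuine gap. If you stack the $n_a$ voters at $A$ and the $n_b$ voters at $B$, there is only \emph{one} instance --- the voter locations are then completely determined, so there is nothing for the adversary to vary and no indistinguishability argument to run (and in that single instance the mechanism might simply pick the true optimum). The essential ingredient, which your write-up never pins down, is that the two instances must place the voters \emph{differently} while keeping preferences fixed: in the instance punishing the choice of $A$, the $A$-preferring voters must sit essentially at the indifference point (at distance $d(A,1)\mp\epsilon$, i.e.\ at the position when it is the midpoint) while only the $B$-voters sit at $B$, and symmetrically in the other instance. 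This is exactly what the paper's Figure~\ref{fig:thm52} does: $d(A,B)=4$, $d(A,1)=d(B,1)=2$, $n/2$ voters per group, with the ``swing'' group at distance $2\mp\epsilon$ from its candidate, yielding $\lim_{\epsilon\to0}\frac{10-\epsilon}{6+\epsilon}=\frac{5}{3}$ against either choice.

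Your proposed parameter tuning is also incorrect and would not reach $\tfrac{5}{3}$. The critical point of Lemma~\ref{lem:53-function} is $x=y$, i.e.\ $n_a=n_b$ (the threshold $n_b=2n_a$ belongs to Theorem~\ref{thm:21serialdictator}, a different information model), and the tight position is equidistant from the two candidates. If you instead take $n_b=2n_a$ with $n_a\,d(B,1)=n_b\,d(A,1)$ (so $d(A,1)=\tfrac13 d(A,B)$) and place the swing group at the indifference point, both punishment ratios are indeed equal --- but equal to $\tfrac{11}{7}<\tfrac{5}{3}$. So ``equalize the two losses'' does not by itself identify the extremal instance; you must also maximize over the equal-loss family, which lands you at $n_a=n_b$ and the midpoint position as in the paper. (Minor points: the embedding-to-the-line reduction you cite is in Theorem~\ref{thm:21serialdictator}, not Theorem~\ref{thm:mech-1-k-dist}, and it is unnecessary here --- for a lower bound you may simply construct the instances on a line directly, which also gives the ``even on a line'' clause.)
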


\begin{figure}[h]
    \centering
    \begin{minipage}{.5\textwidth}
      \centering
    \includegraphics[]{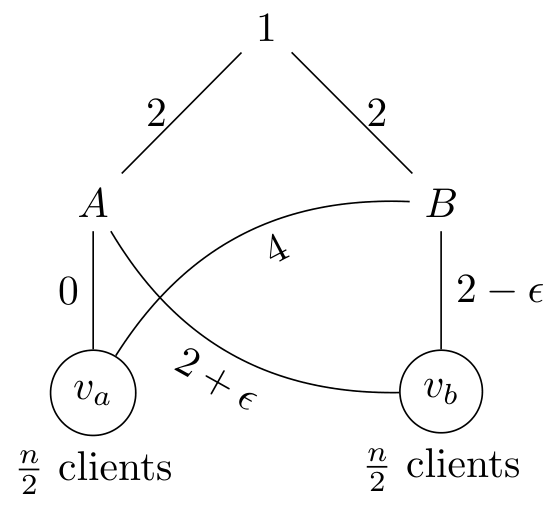}
    \subcaption{Metric space $d_1$.}
    \end{minipage}%
    \begin{minipage}{.5\textwidth}
      \centering
    \includegraphics[]{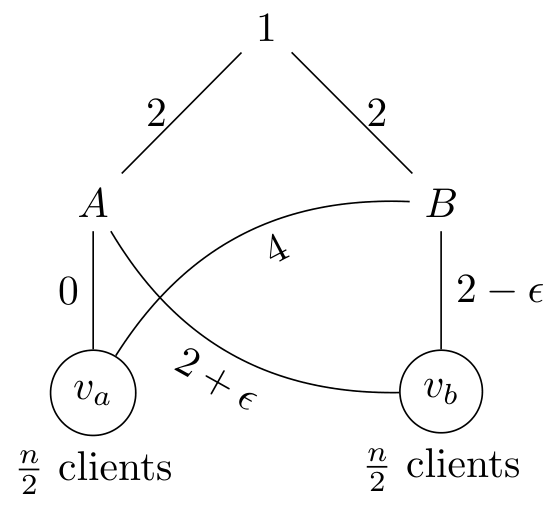}
    \subcaption{Metric space $d_2$.}
    \end{minipage}
    \caption{Two consistent metric spaces with distortion $\frac{5}{3}$.}
    \label{fig:thm52}
\end{figure}

\begin{proof}
    Consider the two instances in Figure \ref{fig:thm52}. Both instances have the same information set so if we pick $A$ then $d_2$ will have a distortion of $\lim_{\epsilon \to 0}\frac{\frac{n}{2}\left(2-\epsilon + 2 + 4 + 2\right)}{\frac{n}{2}\left(2+\epsilon + 2+2\right)} = \frac{5}{3}$ and if we pick $B$ then metric $d_1$ we get a distortion of $\lim_{\epsilon \to 0}\frac{\frac{n}{2}\left(2-\epsilon + 2 + 4 + 2\right)}{\frac{n}{2}\left(2+\epsilon + 2+2\right)} = \frac{5}{3}$.
\end{proof}

Theorem \ref{thm:lower-1-2-distances} means that the algorithm given in Theorem \ref{thm:mech-1-k-dist} achieves the best possible distortion bounds. We will now generalize the method in Theorem \ref{thm:mech-1-k-dist} to more than 2 candidates.

Assume there are $m$ candidates. We start by constructing a graph $G=(V,A)$ similar to a weighted majority graph. Instead we orient arcs such that instead of arcs going from winners according to a plurality election, they are oriented based on the winner given by Theorem \ref{thm:mech-1-k-dist}. So if $(x,y) \in A$ then $x$ was selected by the mechanism used in Theorem \ref{thm:mech-1-k-dist}. First we introduce the following lemma regarding the graph previously described. We then select the client with the largest out degree. 

\begin{lem}
    Let $O$ be the optimal candidate and $A$ be the candidate with largest out-degree. Then if $O \neq A$, there exists a path of at most length 2 from $A$ to $O$ in the graph defined above.
    \label{lem:path}
\end{lem}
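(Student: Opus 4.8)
The plan is to recognize that the digraph $G$ defined just before the lemma is a \emph{tournament}: for every pair of distinct candidates $x,y$, the mechanism of Theorem~\ref{thm:mech-1-k-dist} applied to the two-candidate sub-instance selects exactly one of them (with ties broken consistently, say in favor of the candidate closer to the position), so exactly one of the two arcs between $x$ and $y$ is present in $G$. The lemma is then an instance of the classical fact that a vertex of maximum out-degree in a tournament is a \emph{king}, i.e.\ it reaches every other vertex by a directed path of length at most $2$; here we only need this for the single target vertex $O$, which simplifies nothing but is worth noting, since the standard contradiction argument only ever refers to the target vertex.

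Concretely, I would argue by contradiction. Write $N^+(A)$ for the set of candidates beaten by $A$, so that $\abs{N^+(A)}$ is the out-degree of $A$, which is maximum by hypothesis. Assume $O \neq A$ but there is no directed path of length at most $2$ from $A$ to $O$. Then the arc $(A,O)$ is absent, so since $G$ is a tournament $O$ beats $A$; and for every $c \in N^+(A)$ the arc $(c,O)$ must be absent — otherwise $A \to c \to O$ would be a length-$2$ path — so $O$ beats every $c \in N^+(A)$ as well. Since $A \notin N^+(A)$, the out-degree of $O$ is at least $\abs{N^+(A)} + 1$, strictly larger than that of $A$, contradicting the maximality of $A$'s out-degree. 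Hence the desired path of length at most $2$ exists.

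The main thing to be careful about is not the counting argument (which is short and routine) but the setup: one must confirm that the orientation rule of Theorem~\ref{thm:mech-1-k-dist} yields a well-defined single winner for \emph{every} ordered pair, including a fixed tie-breaking convention, so that $G$ really is a tournament rather than merely a partial orientation, and that ``$x$ beats $y$'' is exactly the arc orientation $(x,y) \in A$ as defined in the text. Once that is pinned down, the proof is the king-tournament argument above.
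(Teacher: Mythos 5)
Your proof is correct and follows essentially the same approach as the paper's: a contradiction argument showing that if no path of length at most $2$ existed, then $O$ would beat $A$ and every out-neighbor of $A$, forcing $d^+(O) > d^+(A)$ and contradicting the maximality of $A$'s out-degree. If anything, your version is slightly cleaner, since you count $d^+(A)+1$ out-neighbors of $O$ directly instead of first establishing the paper's intermediate bound $d^+(A) \geq \frac{n-1}{2}$, and you explicitly flag the tournament/tie-breaking requirement that the paper leaves implicit.
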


\begin{proof}
    First we will show $d^+(A) \geq \frac{n-1}{2}$. Assume $d^+(A) < \frac{n-1}{2}$. If all vertices $v$ have $d^+(v) < \frac{n-1}{2}$ then the number of edges is strictly less than $n\frac{n-1}{2} = \frac{1}{2}n(n-1) = |A|$. So  $d^+(A) \geq \frac{n-1}{2}$.

    If there exists an arc from $A$ to $O$ then we are done since the arc forms a path of length 1. Therefore, suppose there is an arc from $O$ to $A$. Now, suppose there does not exist a path of length at most 2 from $A$ to $O$. Let $c$ be a vertex such that there is an arc from $A$ to $c$, then there cannot be an arc from $c$ to $O$ by our assumption and there exists at least $\frac{n-1}{2}$ such $c$. This means that there must an arc from $O$ to each of such $c$'s. Thus, combined with the arc $(O,A)$, we have $d^+(O) \geq \frac{n-1}{2}+1$. This means that $d^+(O) > d^+(A)$. But this is a contradiction since $A$ was chosen as the vertex with largest out-degree.
    
    %Now consider all candidates $c$ that are not $A$ or $O$. Then if $(c, A) \in E$ and $(c, O) \in E$, by removing $c$ we do not change the out degree of $A$ or $O$. If $(O, c) \in E$ and $(A, c) \in E$, by removing $c$ we reduce the degree of $A$ and $O$ by 1. So there must exist some $c$ such that $(A,c) \in E$ and $(c,O) \in E$ since the out degree of $A$ is strictly larger then the out degree of $O$. This is because if they are tied then there must be an arc from $A$ to $O$. Thus there exists a path of length 2.
\end{proof}

We can now use the previous lemma to get a mechanism that gives a distortion of $\frac{25}{9}$.

\begin{thm}
    There exists a mechanism that achieves a distortion of $\frac{25}{9}$ for \lineup{m}{1}.
\end{thm}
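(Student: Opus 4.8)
The plan is to treat the $(2,1)$-mechanism of Theorem~\ref{thm:mech-1-k-dist} as a black box and compose its guarantee along a short directed path in the tournament $G$. Write $\cost(M_c)$ for the social cost of the matching that assigns candidate $c$ to the single position; the optimal cost is $\cost(M_O)$, where $O$ is the optimal candidate. The first and key observation is that an arc $(x,y)\in A$ means precisely that, on the instance obtained by deleting every candidate except $x$ and $y$ (keeping the same voters and the same position, hence the same candidate--position distances and the same induced ordinal voter preferences between $x$ and $y$), the $(2,1)$-mechanism outputs $x$. Since with a single position the line-up cost of choosing a candidate $c$ in this restricted two-candidate instance is exactly $\cost(M_c)$ as computed in the full instance, Theorem~\ref{thm:mech-1-k-dist} gives $\cost(M_x)\le \frac{5}{3}\,\cost(M_y)$: if $x$ is the better of the two the ratio is at most $1$, and otherwise the $\frac{5}{3}$ distortion bound of that theorem applies directly.

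Next I would invoke Lemma~\ref{lem:path}. If the mechanism's output $A$ (the max out-degree vertex) equals $O$, the distortion is $1$ and we are done. Otherwise the lemma produces a directed path from $A$ to $O$ of length $1$ or $2$. If the path has length $1$, then $(A,O)\in A$ and the observation above gives $\cost(M_A)\le \frac{5}{3}\,\cost(M_O)$. If the path has length $2$, there is an intermediate candidate $x$ with $(A,x)\in A$ and $(x,O)\in A$, so chaining the two pairwise inequalities gives $\cost(M_A)\le \frac{5}{3}\,\cost(M_x)\le \frac{5}{3}\cdot\frac{5}{3}\,\cost(M_O)=\frac{25}{9}\,\cost(M_O)$. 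In every case $\cost(M_A)\le \frac{25}{9}\,\cost(M_O)$, which is exactly the claimed distortion bound.

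The only real obstacle is justifying this transfer-and-compose step cleanly: one must verify that restricting to a two-candidate sub-instance preserves both the relevant social costs (immediate, since there is exactly one position, so $\cost(M_c)=\sum_{v\in\V}[d(v,c)+d(c,1)]$ regardless of which other candidates are present) and the voters' ordinal preferences between the two surviving candidates (also immediate), and that ``winning'' in $G$ always buys the factor $\frac{5}{3}$ and not merely in the case where the winner is suboptimal. The degenerate unanimous-preference branch of the $(2,1)$-mechanism ($n_a=0$ or $n_b=0$) needs to be folded in too, but there the winner weakly dominates the loser pointwise, so the factor is at most $1$. Beyond this, everything is bookkeeping already carried out in Lemma~\ref{lem:path}.
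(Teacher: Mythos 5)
Your proposal is correct and follows essentially the same route as the paper: orient the tournament by the pairwise $(2,1)$-mechanism of Theorem \ref{thm:mech-1-k-dist}, take the maximum out-degree candidate, and compose the $\frac{5}{3}$ pairwise guarantee along the length-at-most-$2$ path from Lemma \ref{lem:path}. Your explicit verification that two-candidate sub-instances preserve costs and ordinal preferences (and the handling of the unanimous case) is a justification the paper leaves implicit, but the argument is the same.
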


\begin{proof}
    Suppose $A$ is the candidate that is chosen by the previous rule. Then by Lemma \ref{lem:path}, there are 3 cases. If $A$ is optimal we have a distortion of $1$. If there is a path of length 1 from $A$ to $O$ then we have a distortion of $\frac{5}{3}$. If there is a path of length two from $A$ to $O$ going through candidate $c$ then $\cost(A) \leq \frac{5}{3} \cost(c) \leq \frac{5}{3}\cdot \frac{5}{3} \cost(O) = \frac{25}{9} \cost(O)$.
\end{proof}

Recall from \citet{anshelevich2021ordinal} that in the standard election problem if we are given exact information about candidate locations, we cannot improve the distortion bounds beyond 3, even for 2 candidates, which is the same distortion achievable when given only voter preferences. In other words, for standard elections having candidate locations does not give any advantage in terms of worst-case distortion. As we have seen, however, for line-up elections this is no longer true: having candidate and position locations allows us to improve the distortion, at least for the case when $|\Pos|=1$. 

\section{Conclusion and Future Work}
   We defined metric line-up elections, and showed both upper bounds and lower bounds on distortion in these elections, for multiple types of information available to us. Our findings seem to indicate that only a small amount of targeted information can allow us to form good outcomes for such elections. Our distortion bounds outperform the ones for bipartite matching problems. They are also often better or at least comparable to distortion bounds in standard single-winner elections, despite the fact that in line-up elections we must choose multiple winners and then match them to positions. 

There still exist unanswered questions, however. Not all of our distortion bounds are tight: do better mechanisms exist? What about considering other cost functions in addition to social cost, for example the maximum cost? We also could look at settings where the cost for a voter is a subadditive combination of the distances (instead of the sum). Finally, what about randomized mechanisms? Can they provide distortion bounds much better than our deterministic mechanisms, unlike in the case for standard elections?
\clearpage
\bibliography{refs}

\begin{thebibliography}{38}
\providecommand{\natexlab}[1]{#1}
\providecommand{\url}[1]{\texttt{#1}}
\expandafter\ifx\csname urlstyle\endcsname\relax
  \providecommand{\doi}[1]{doi: #1}\else
  \providecommand{\doi}{doi: \begingroup \urlstyle{rm}\Url}\fi

\bibitem[Abramowitz et~al.(2019)Abramowitz, Anshelevich, and
  Zhu]{Abramowitz2019Passion}
B.~Abramowitz, E.~Anshelevich, and W.~Zhu.
\newblock Awareness of voter passion greatly improves the distortion of metric
  social choice.
\newblock In \emph{The 15th Conference on Web and Internet Economics, (WINE)},
  2019.

\bibitem[Anagnostides et~al.(2022)Anagnostides, Fotakis, and
  Patsilinakos]{anagnostides2022metric}
I.~Anagnostides, D.~Fotakis, and P.~Patsilinakos.
\newblock Metric-distortion bounds under limited information.
\newblock \emph{Journal of Artificial Intelligence Research (JAIR)},
  74:\penalty0 1449--1483, 2022.

\bibitem[Anari et~al.(2023)Anari, Charikar, and Ramakrishnan]{mosesMatching}
N.~Anari, M.~Charikar, and P.~Ramakrishnan.
\newblock Distortion in metric matching with ordinal preferences.
\newblock In \emph{Proceedings of ACM Conference on Economics and Computation
  (EC)}, 2023.

\bibitem[Anshelevich and Postl(2017)]{anshelevich2017randomized}
E.~Anshelevich and J.~Postl.
\newblock Randomized social choice functions under metric preferences.
\newblock \emph{Journal of Artificial Intelligence Research (JAIR)},
  58:\penalty0 797--827, 2017.

\bibitem[Anshelevich and Sekar(2016{\natexlab{a}})]{anshelevich2016blind}
E.~Anshelevich and S.~Sekar.
\newblock Blind, greedy, and random: Algorithms for matching and clustering
  using only ordinal information.
\newblock In \emph{AAAI Conference on Artificial Intelligence (AAAI)},
  2016{\natexlab{a}}.

\bibitem[Anshelevich and Sekar(2016{\natexlab{b}})]{anshelevich2016truthful}
E.~Anshelevich and S.~Sekar.
\newblock Truthful mechanisms for matching and clustering in an ordinal world.
\newblock In \emph{The 12th Conference on Web and Internet Economics (WINE)},
  2016{\natexlab{b}}.

\bibitem[Anshelevich and Zhu(2019)]{anshelevich2019tradeoffs}
E.~Anshelevich and W.~Zhu.
\newblock Tradeoffs between information and ordinal approximation for bipartite
  matching.
\newblock \emph{Theory of Computing Systems (TOCS)}, 63\penalty0 (7):\penalty0
  1499--1530, 2019.

\bibitem[Anshelevich and Zhu(2021)]{anshelevich2021ordinal}
E.~Anshelevich and W.~Zhu.
\newblock Ordinal approximation for social choice, matching, and facility
  location problems given candidate positions.
\newblock \emph{ACM Transactions on Economics and Computation, (TEAC)},
  9\penalty0 (2):\penalty0 1--24, 2021.

\bibitem[Anshelevich et~al.(2018)Anshelevich, Bhardwaj, Elkind, Postl, and
  Skowron]{anshelevich2018approximating}
E.~Anshelevich, O.~Bhardwaj, E.~Elkind, J.~Postl, and P.~Skowron.
\newblock Approximating optimal social choice under metric preferences.
\newblock \emph{Artificial Intelligence (AIJ)}, 264:\penalty0 27--51, 2018.

\bibitem[Anshelevich et~al.(2021)Anshelevich, Filos-Ratsikas, Shah, and
  Voudouris]{anshelevich2021distortion}
E.~Anshelevich, A.~Filos-Ratsikas, N.~Shah, and A.~A. Voudouris.
\newblock Distortion in social choice problems: The first 15 years and beyond.
\newblock In \emph{International Joint Conference on Artificial Intelligence
  (IJCAI)}, 2021.

\bibitem[Anshelevich et~al.(2024)Anshelevich, Filos{-}Ratsikas, Jerrett, and
  Voudouris]{anshelevich2024approvals}
E.~Anshelevich, A.~Filos{-}Ratsikas, C.~Jerrett, and A.~A. Voudouris.
\newblock Improved metric distortion via threshold approvals.
\newblock In \emph{Proceedings of the 38th {AAAI} Conference on Artificial
  Intelligence ({AAAI})}, pages 9460--9468, 2024.

\bibitem[Arrow(1990)]{arrow1990advances}
K.~Arrow.
\newblock \emph{Advances in the spatial theory of voting}.
\newblock Cambridge University Press, 1990.

\bibitem[Bhalgat et~al.(2011)Bhalgat, Chakrabarty, and
  Khanna]{bhalgat2011social}
A.~Bhalgat, D.~Chakrabarty, and S.~Khanna.
\newblock Social welfare in one-sided matching markets without money.
\newblock In \emph{International Workshop on Approximation Algorithms for
  Combinatorial Optimization Problems (APPROX)}, 2011.

\bibitem[Bir{\'o}(2017)]{biro2017applications}
P.~Bir{\'o}.
\newblock Applications of matching models under preferences.
\newblock In U.~Endriss, editor, \emph{Trends in Computational Social Choice},
  chapter~18, page 345–373. AI Access, 2017.

\bibitem[Boehmer et~al.(2020)Boehmer, Bredereck, Faliszewski, Kaczmarczyk, and
  Niedermeier]{Boehmer2020line-up}
N.~Boehmer, R.~Bredereck, P.~Faliszewski, A.~Kaczmarczyk, and R.~Niedermeier.
\newblock Line-up elections: Parallel voting with shared candidate pool.
\newblock In \emph{International Symposium on Algorithmic Game Theory (SAGT)},
  2020.

\bibitem[Borodin et~al.(2019)Borodin, Lev, Shah, and
  Strangway]{Borodin2019primary}
A.~Borodin, O.~Lev, N.~Shah, and T.~Strangway.
\newblock Primarily about primaries.
\newblock 2019.

\bibitem[Caragiannis et~al.(2022{\natexlab{a}})Caragiannis, Filos-Ratsikas,
  Frederiksen, Hansen, and Tan]{caragiannis2022truthful}
I.~Caragiannis, A.~Filos-Ratsikas, S.~K.~S. Frederiksen, K.~A. Hansen, and
  Z.~Tan.
\newblock Truthful facility assignment with resource augmentation: An exact
  analysis of serial dictatorship.
\newblock \emph{Mathematical Programming}, pages 1--30, 2022{\natexlab{a}}.

\bibitem[Caragiannis et~al.(2022{\natexlab{b}})Caragiannis, Shah, and
  Voudouris]{caragiannis2022metric}
I.~Caragiannis, N.~Shah, and A.~A. Voudouris.
\newblock The metric distortion of multiwinner voting.
\newblock \emph{Artificial Intelligence (AIJ)}, 313, 2022{\natexlab{b}}.

\bibitem[Carroll et~al.(2013)Carroll, Lewis, Lo, Poole, and
  Rosenthal]{carroll2013structure}
R.~Carroll, J.~B. Lewis, J.~Lo, K.~T. Poole, and H.~Rosenthal.
\newblock The structure of utility in spatial models of voting.
\newblock \emph{American Journal of Political Science}, 57\penalty0
  (4):\penalty0 1008--1028, 2013.

\bibitem[Charikar and Ramakrishnan(2022)]{charikar2022metric}
M.~Charikar and P.~Ramakrishnan.
\newblock Metric distortion bounds for randomized social choice.
\newblock In \emph{ACM-SIAM Symposium on Discrete Algorithms (SODA)}, 2022.

\bibitem[Charikar et~al.(2024)Charikar, Ramakrishnan, Wang, and
  Wu]{charikar2024breaking}
M.~Charikar, P.~Ramakrishnan, K.~Wang, and H.~Wu.
\newblock Breaking the metric voting distortion barrier.
\newblock In \emph{Proceedings of the 2024 Annual ACM-SIAM Symposium on
  Discrete Algorithms (SODA)}, pages 1621--1640. SIAM, 2024.

\bibitem[Chen et~al.(2020)Chen, Li, and Wang]{chen2020favorite}
X.~Chen, M.~Li, and C.~Wang.
\newblock Favorite-candidate voting for eliminating the least popular candidate
  in a metric space.
\newblock In \emph{Proceedings of the AAAI Conference on Artificial
  Intelligence (AAAI)}, volume~34, pages 1894--1901, 2020.

\bibitem[Cheng et~al.(2017)Cheng, Dughmi, and Kempe]{cheng2017people}
Y.~Cheng, S.~Dughmi, and D.~Kempe.
\newblock Of the people: voting is more effective with representative
  candidates.
\newblock In \emph{ACM Conference on Economics and Computation (EC)}, 2017.

\bibitem[Cheng et~al.(2018)Cheng, Dughmi, and Kempe]{cheng2018distortion}
Y.~Cheng, S.~Dughmi, and D.~Kempe.
\newblock On the distortion of voting with multiple representative candidates.
\newblock In \emph{AAAI Conference on Artificial Intelligence (AAAI)}, 2018.

\bibitem[Christodoulou et~al.(2016)Christodoulou, Filos-Ratsikas, Frederiksen,
  Goldberg, Zhang, and Zhang]{christodoulou2016social}
G.~Christodoulou, A.~Filos-Ratsikas, S.~K.~S. Frederiksen, P.~W. Goldberg,
  J.~Zhang, and J.~Zhang.
\newblock Social welfare in one-sided matching mechanisms.
\newblock In \emph{Autonomous Agents and Multiagent Systems Conference
  (AAMAS)}, 2016.

\bibitem[Ebadian et~al.(2022)Ebadian, Kahng, Peters, and
  Shah]{ebadian2022optimized}
S.~Ebadian, A.~Kahng, D.~Peters, and N.~Shah.
\newblock Optimized distortion and proportional fairness in voting.
\newblock In \emph{ACM Conference on Economics and Computation (EC)}, 2022.

\bibitem[Enelow and Hinich(1984)]{enelow1984spatial}
J.~Enelow and M.~Hinich.
\newblock \emph{The Spatial Theory of Voting: An Introduction}.
\newblock Cambridge University Press, 1984.

\bibitem[Feldman et~al.(2016)Feldman, Fiat, and Golomb]{feldman2016voting}
M.~Feldman, A.~Fiat, and I.~Golomb.
\newblock On voting and facility location.
\newblock In \emph{ACM Conference on Economics and Computation (EC)}, 2016.

\bibitem[Ghodsi et~al.(2019)Ghodsi, Latifian, and
  Seddighin]{ghodsi2019distortion}
M.~Ghodsi, M.~Latifian, and M.~Seddighin.
\newblock On the distortion value of the elections with abstention.
\newblock In \emph{AAAI Conference on Artificial Intelligence (AAAI)}, 2019.

\bibitem[Gkatzelis et~al.(2020)Gkatzelis, Halpern, and
  Shah]{gkatzelis2020resolving}
V.~Gkatzelis, D.~Halpern, and N.~Shah.
\newblock Resolving the optimal metric distortion conjecture.
\newblock In \emph{IEEE Annual Symposium on Foundations of Computer Science
  (FOCS)}, 2020.

\bibitem[Goel et~al.(2017)Goel, Krishnaswamy, and Munagala]{goel2017metric}
A.~Goel, A.~K. Krishnaswamy, and K.~Munagala.
\newblock Metric distortion of social choice rules: Lower bounds and fairness
  properties.
\newblock In \emph{ACM Conference on Economics and Computation (EC)}, 2017.

\bibitem[Goel et~al.(2018)Goel, Hulett, and Krishnaswamy]{goel2018relating}
A.~Goel, R.~Hulett, and A.~K. Krishnaswamy.
\newblock Relating metric distortion and fairness of social choice rules.
\newblock In \emph{Proceedings of the 13th Workshop on Economics of Networks,
  Systems and Computation (NetEcon)}, 2018.

\bibitem[Gross et~al.(2017)Gross, Anshelevich, and Xia]{gross2017vote}
S.~Gross, E.~Anshelevich, and L.~Xia.
\newblock Vote until two of you agree: Mechanisms with small distortion and
  sample complexity.
\newblock In \emph{AAAI Conference on Artificial Intelligence (AAAI)}, 2017.

\bibitem[Kalayci et~al.(2024)Kalayci, Kempe, and Kher]{kalayci2024proportional}
Y.~Kalayci, D.~Kempe, and V.~Kher.
\newblock Proportional representation in metric spaces and low-distortion
  committee selection.
\newblock In \emph{Proceedings of the 38th {AAAI} Conference on Artificial
  Intelligence ({AAAI})}, volume~38, pages 9815--9823, 2024.

\bibitem[Kizilkaya and Kempe(2022)]{Kizilkaya2022plurality}
F.~E. Kizilkaya and D.~Kempe.
\newblock Plurality veto: A simple voting rule achieving optimal metric
  distortion.
\newblock In \emph{International Joint Conference on Artificial Intelligence
  (IJCAI)}, 2022.

\bibitem[Pierczynski and Skowron(2019)]{pierczynski2019approval}
G.~Pierczynski and P.~Skowron.
\newblock Approval-based elections and distortion of voting rules.
\newblock In \emph{International Joint Conference on Artificial Intelligence
  (IJCAI)}, 2019.

\bibitem[Schofield(2007)]{schofield2007spatial}
N.~Schofield.
\newblock \emph{The spatial model of politics}.
\newblock Routledge, 2007.

\bibitem[Skowron and Elkind(2017)]{skowron2017social}
P.~Skowron and E.~Elkind.
\newblock Social choice under metric preferences: Scoring rules and stv.
\newblock In \emph{AAAI Conference on Artificial Intelligence (AAAI)}, 2017.

\end{thebibliography}
\appendix
\section{Further motivating examples} \label{sec:appendix}
We will now provide some additional examples to further motivate our model. %First one might object to the the restriction that both (voter, candidate) distances and (candidate, position) distances come from the same metric space can seem rather constraining, but our model is surprisingly general. 
%Consider the following two examples:

{\em Example 1:} Consider the metric space of physical locations. We have a set of clients or customers $\V$, a set of possible distribution centers $\C$, and different types of products $\Pos$. The goal is to choose a set of distribution centers, and assign each product to exactly one distribution center. All products must be delivered through a distribution center to the clients. For example, one distribution center may be a location for goods like eggs and milk, which require refrigeration, being delivered to its grocery clients. Another distribution center may consist of goods that require dry bulk storage like wheat, another a location for fish, a third may be a location for equipment or tools, etc. The goal is to choose center locations so that the clients (customers) can easily and quickly obtain goods from all of them (i.e., minimize the distances $d(i,c)$ for $i\in \V$ and $c\in \C$). But we also want to choose a center location which makes it easy and convenient for them to obtain their products from producers: a center distributing dairy should be close to dairy farms, a center dealing in fish should be close to fisheries, and one dealing in manufactured goods should be close to the factories making those goods. This will affect the total amount of time and convenience for a customer in $\V$ to obtain the product. If each product $p\in \Pos$ has exactly one location where it is being produced, this becomes exactly the model in our paper: the cost for assigning product $p \in \Pos$ to distribution center $c \in \C$ is  $\sum_{i} \left[d(i, c) + d(c,p)\right]$ which is the same as the cost in a line-up election. Note, however, that even if each product is produced in several locations (i.e., $p\in \Pos$ is actually a set of locations), then our model still applies. Define $d(c,p) = \frac{1}{\abs{p}} \sum_{p' \in p} d(c,p')$ as the average distance. It is easy to check that for all $i\in V$, $c\in \C$, $p\in \Pos$, the distances $d(i,c)$ and $d(c,p)$ still form a metric space when defined as above (i.e., the distances $d(i,c)$ are the original distances, while $d(c,p)$ are the average distances from $c$ to all points in $p$). We can even assign multiple types of products to the same distribution center by duplicating each distribution center based on its capacity. All of this is a special case of line-up elections, and thus our results hold for this setting.

{\em Example 2:} Consider a large conference, in which we have attendees or more generally members of the research community $\V$. We also have papers under review or more generally research topics $\Pos$. A lot of previous work has considered the metric space of research interests and expertise: this space has a very high dimension, but every researcher, paper, or research topic can be loosely considered a point in this metric space. Now consider choosing senior program committee members, or area chairs, from some set of volunteers $\C$. We would like to choose an area chair for topic $p\in \Pos$ who is qualified, i.e., they should be located close to $p$ in this space.\footnote{Note that, just as in the previous example, $p$ can be a single point in the space, or a set of points representing different papers on this topic. In the latter case by defining the distance from $c\in \C$ to the set $p$ as the average distance to all the points in $p$, we still have a metric space and all our line-up election results hold.} For example, a certain participant may prefer that an area chair in charge of Social Choice at AAAI has expertise in the general topic of social choice, i.e., small average distance to the social choice submissions. On the other hand, someone may (perhaps selfishly) prefer that they are close to his or her personal research interests, since this makes it more likely for this person's interests to be represented and for this person's papers to be accepted. Thus if $c$ is chosen as the area chair for topic $p$, we would be interested in minimizing both $d(c,p)$ (their overall expertise), and $d(i,c)$ (how well they represent my sub-area specifically). This is exactly a setting of line-up elections.

\end{document}